\newtheorem*{rep@theorem}{\rep@title}
\newcommand{\newreptheorem}[2]{%
\newenvironment{rep#1}[1]{%
 \def\rep@title{#2 \ref{##1}}%
 \begin{rep@theorem}}%
 {\end{rep@theorem}}}
\newtheorem*{rep@proposition}{\rep@title}
\newcommand{\newrepproposition}[2]{%
\newenvironment{rep#1}[1]{%
 \def\rep@title{#2 \ref{##1}}%
 \begin{rep@proposition}}%
 {\end{rep@proposition}}}
\newtheorem*{rep@lemma}{\rep@title}
\newcommand{\newreplemma}[2]{%
\newenvironment{rep#1}[1]{%
 \def\rep@title{#2 \ref{##1}}%
 \begin{rep@lemma}}%
 {\end{rep@lemma}}}
\theoremstyle{plain}
\newtheorem{proposition}{Proposition}
\newtheorem{lemma}{Lemma}
\newtheorem{theorem}{Theorem}
\newtheorem*{example*}{Example}
\newtheorem*{remark*}{Remark}
\theoremstyle{definition}
\newtheorem{definition}{Definition}
\newcommand{\iprod}[2]{\langle #1, #2 \rangle}
\newcommand{\mathsep}{,~}
\newcommand{\st}{\,\middle|\,}
\newcommand{\set}[1]{\left\lbrace #1 \right\rbrace}
\newcommand{\card}[1]{\left\lvert{#1}\right\rvert}
\newcommand{\absv}[1]{\card{#1}}
\newcommand{\norm}[2]{\left\lVert{#1}\right\rVert_{#2}}
\newcommand{\setR}{\mathbb R}
\newcommand{\Rbar}{\tilde{\mathbb R}}
\newcommand{\setN}{\mathbb N}
\newcommand{\pb}[1]{\mathbb P\left[#1\right]}
\newcommand{\expectVariable}[2]{\mathbb E_{#1}\left[#2\right]}
\DeclareMathOperator*{\argmin}{arg\,min}
\DeclareMathOperator*{\sign}{sign}
\DeclareMathOperator*{\Laplace}{\textsc{Lap}}
\newcommand{\median}{\textsc{Med}}
\newcommand{\qrmedian}[1]{\textsc{QrMed}_{#1}}
\newcommand{\mean}{\textsc{Mean}}
\newcommand{\stddev}{\textsc{SD}}
\newcommand{\vote}{\textsc{Vote}}
\newcommand{\brmean}[1]{\textsc{LrMean}_{#1}}
\newcommand{\clippedmean}{\textsc{ClMean}}
\newcommand{\clip}{\textsc{Clip}}
\newcommand{\mrdistance}{\textsc{MrDist}}
\newcommand{\projection}[1]{\pi{\left(#1\right)}}
\newcommand{\averagevote}{\textsc{Mean}}
\newcommand{\mehestan}{\textsc{Mehestan}}
\newcommand{\DPMehestan}{\textsc{DP-Mehestan}}
\newcommand{\alternative}{a}
\newcommand{\alternativebis}{b}
\newcommand{\alternativeter}{c}
\newcommand{\ALTERNATIVE}{A}
\newcommand{\ALTERNATIVEPAIR}{C}
\newcommand{\voter}{n}
\newcommand{\voterbis}{m}
\newcommand{\byzantine}{f}
\newcommand{\BYZANTINE}{F}
\newcommand{\VOTER}{N}
\newcommand{\votingrights}[1]{w_{#1}}
\newcommand{\votingrightsfamily}{\bm w}
\newcommand{\votingrightsbis}{v}
\newcommand{\votingrightsfamilybis}{\bm v}
\newcommand{\distribution}{\mathcal{D}}
\newcommand{\voterscore}[1]{\theta_{#1}}
\newcommand{\voterscorefamily}[1]{\bm{\theta}_{#1}}
\newcommand{\normalizedscore}[1]{\tilde \theta_{#1}}
\newcommand{\score}[1]{x_{#1}}
\newcommand{\scorefamily}{\bm{x}}
\newcommand{\globalscore}[1]{\rho_{#1}}
\newcommand{\votingresilience}{{\color{red} W}}
\newcommand{\lipschitz}{L}
\newcommand{\faulty}{f}
\newcommand{\uncertainty}[1]{\delta_{#1}}
\newcommand{\scaling}[1]{s_{#1}}
\newcommand{\translation}[1]{\tau_{#1}}
\newcommand{\scalingbound}{\bar s}
\newcommand{\groupnormalizer}[1]{\overrightarrow{\textsc{Norm}}_{#1}}
\newcommand{\normalizer}[1]{\textsc{Norm}_{#1}}
\newcommand{\standardnormalizer}[1]{\textsc{StdNorm}_{#1}}
\newcommand{\minmaxnormalizer}[1]{\textsc{MinMaxNorm}_{#1}}
\newcommand{\correlation}{\textsc{Correl}}
\newcommand{\Loss}{\mathcal L}
\newcommand{\permutation}{\sigma}
\newcommand{\PERMUTATION}{S}
\newcommand{\aggregation}{\textsc{Agg}}
\newcommand{\PAIRWISECOMPARABLEVOTER}[1]{\VOTER_{#1}^{\ALTERNATIVEPAIR}}
\newcommand{\COMPARABLEVOTER}[1]{\VOTER_{#1}^{\ALTERNATIVE}}
\newcommand{\polarization}[1]{\eta_{#1}}
\newcommand{\diameter}{\Delta}
\newcommand{\avgtruescaling}{\bar{\scaling{}^*}}
\newcommand{\avgtruetranslation}{\tilde{\tau{}}^*}
\begin{document}

\twocolumn[

\aistatstitle{Robust Sparse Voting}

\aistatsauthor{ Youssef Allouah \And Rachid Guerraoui \And  Lê-Nguyên Hoang \And Oscar Villemaud }

\aistatsaddress{ EPFL \And  EPFL \And Calicarpa, Tournesol Association \And EPFL } ]

\begin{abstract}
\vspace{-0.25cm}
Many applications, such as content moderation and recommendation, require  reviewing and scoring a large number of alternatives. Doing so robustly is however very challenging. Indeed, voters' inputs are inevitably \emph{sparse}: most alternatives are only scored by a small fraction of voters. This sparsity amplifies the effects of biased voters introducing \emph{unfairness}, and of malicious voters seeking to hack the voting process by reporting \emph{dishonest} scores. 

We give a precise definition of the problem of \emph{robust sparse voting}, highlight its underlying technical challenges, and present a novel voting mechanism addressing the problem. We prove that, using this mechanism, no voter can have more than a small parameterizable effect on each alternative's score; a property we call \emph{Lipschitz resilience}. We also identify conditions of voters comparability under which any unanimous preferences can be recovered, even when each voter provides sparse scores, on a scale that is potentially very different from any other voter's score scale. Proving these properties required us to introduce, analyze and carefully compose novel aggregation primitives which could be of independent interest.
\end{abstract}

\vspace{-0.25cm}
\section{INTRODUCTION}
\label{sec:introduction}
\vspace{-0.1cm}
Voting has proven to be an effective way to reach collective decisions despite irreconcilable preferences.
However, traditional voting schemes have  been designed to handle a tractable set of alternatives.
Mechanisms like the \emph{majority judgment}~\cite{BalinskiLaraki11}, \emph{Borda}'s count~\cite{emerson2013original}, \emph{Kemeny-Young}'s scheme~\cite{kemeny1962preference}, \emph{randomized Condorcet}~\cite{Hoang17} and \emph{Schulze method} \cite{schulze2003new}, among others, typically require voters to provide ballots whose size is linear in the number of alternatives, and whose computation time is polynomial.
Such approaches are inapplicable when
the number of alternatives is very large, 
e.g. when electing the best movie of the year, the best paper of a conference, or the best text of law to implement.
In such a context, voting is inherently \emph{sparse}: as voters can only judge a small fraction of all alternatives.
Sparsity amplifies two major issues: heterogeneity in \emph{expression styles} 
and vulnerability to \emph{malicious voters}.

\vspace{-0.1cm}
\paragraph{Expression styles.}
On the one hand, different reviewers may adopt very distinct expression styles~\cite{DBLP:conf/atal/WangS19}.
For example, in the case of scientific peer review,
junior reviewers might use only modest judgments, e.g. \emph{weak} accept/reject, 
while other reviewers may frequently use definitive judgments, e.g. \emph{strong} accept/reject.
Meanwhile, some may be systematically positive and rarely suggest rejection,
while others may be consistently harsh and almost always recommend rejection.
Thus, the resulting acceptance decision of a paper may depend more on the expression styles of the assigned reviewers, 
 than on the actual quality of the paper.
In addition, this phenomenon is exacerbated by the fact that the assignment of papers to reviewers is rarely uniformly random.
In practice, a paper is more likely to be reviewed by someone whose expertise is close to the paper's focus.
Some reviewers may also prefer reviewing top quality papers only,
while other reviewers may focus on papers that are easy to reject.
Thus, sparsity may not be random: it may be \emph{adversarial} for some papers
and raises a risk of \emph{systematic unfairness}.~%
\emph{Robust sparse voting} requires protections against diverging expression styles.

\vspace{-0.1cm}
\paragraph{Malicious voters.}
On the other hand, especially when the number of alternatives far exceeds 
what honest voters can collaboratively score, 
as for social media content,
we must expect the existence of alternatives that no honest voter has scored.
This makes classical solutions for the robust statistics toolbox, like the median, ill-suited
to protect the security of the vote,
as such alternatives with no honest voter's assessment will be arbitrarily manipulable by a single malicious voter.
\emph{Robust sparse voting} also requires protection against malicious voters targeting such alternatives.

\subsection{Contributions}
\begin{itemize}
\item We propose a formalization of the \emph{robust sparse voting} problem, by identifying two formal properties tackling the aforementioned issues of voting biases and malicious voters.
The first property, which we call \emph{sparse unanimity}, stipulates that any unanimous preference can be recovered when sufficiently many voters participate, despite diverging expression styles and sparse score reporting.
The second property, which we call $\lipschitz$-\emph{Lipschitz resilience}, is a security property guaranteeing that any contributor can have at most a parameterizable impact $\lipschitz$ on the scoring of alternatives.

\item We introduce and analyze two novel Lipschitz-resilient aggregation primitives: \emph{Quadratically Regularized Median} is a generalization of the median, and \emph{Lipschitz-Robustified Mean} successfully outputs the mean under the right conditions.
We believe our new aggregation primitives to be of independent interest.

\item 
We propose a new voting algorithm, which we call $\mehestan{}$\footnote{Mehestan
is the name of one of the earliest proto-parliaments in Asia.},
and formally prove that it solves the \emph{robust sparse voting} problem.
Underlying $\mehestan{}$ lie our novel aggregation primitives mentioned above, which we carefully compose to conduct two crucial operations:
transform the different scores to bring them to a common scale, thus diluting voting biases, and perform robust aggregation on the transformed scores.

\item We empirically compare \mehestan{} to natural baselines for robust sparse voting,
under various adversarial settings.
In particular, we evaluate \mehestan{} under 
high sparsity (voters score only a few alternatives),
biased sparsity (voters only score a biased subset of alternatives), 
and in the presence of malicious voters sending random scores.

\end{itemize}

\subsection{Applications}
Large-scale algorithms routinely address a massive number of ethical dilemmas.
Namely, whenever a user searches ``climate change'' or ``vaccines'' on YouTube, Facebook or Amazon,
the algorithmic answers have potential life-or-death consequences 
transcending geographical boundaries.
These algorithms, heavily reliant on (implicit) voting mechanisms using upvotes or star ratings, struggle with ethical decision-making due to the inherent sparsity of evaluations. Indeed, most alternatives receive scrutiny from only a small fraction of users, if any at all, creating a complex challenge in addressing these ethical quandaries.

In the context of online \emph{content recommendation}, resilience to arbitrary behavior has become critical.
Social media have become information battlegrounds~\cite{atallah2019,bradshaw2021},
and their recommendation algorithms have been weaponized by all sorts of private and public actors~\cite{huawei_nytimes,yue2019},
many of which leverage troll farms to fabricate misleading online activities~\cite{bradshaw19,neudert2019,woolley2020},
or even simply exploit the vulnerabilities of the social media's advertisement systems~\cite{EdelsonLM20}.
The sheer scale of disinformation campaigns is staggering, as exemplified by Facebook's removal of \emph{15 billion} fake accounts in just two years~\cite{facebook_15_billion}. Unfortunately, the algorithms that underpin content moderation, recommendations, and ad-targeting remain opaque, providing fertile ground for an extensive industry of fake accounts that continuously manipulate online content~\cite{moore2023fake}. This ongoing manipulation underscores the critical need for resilient, transparent, and accountable algorithms to ensure the integrity and trustworthiness of online information.

It is also worth mentioning low-stake applications such as \emph{online surveys}.
There, robust sparse voting algorithms can handle incomplete and potentially biased responses in online surveys and polls, providing accurate representations of public opinions even when faced with malicious attempts to influence the results or self-selection biases~\cite{bethlehem2010selection,schaurer2020investigating}.
Besides, in \emph{peer-to-peer platforms} like Airbnb and Uber, reputation is vital for trust and safety. Malicious users might attempt to damage the reputation of others or artificially boost their own~\cite{dellarocas2000immunizing,xiong2004peertrust}. A robust sparse voting algorithm can reduce the influence of such behavior while accommodating real-world feedback.

\subsection{Structure of the Paper}
Section~\ref{sec:model} proposes a formalization of the \emph{robust sparse voting} problem.
Section~\ref{sec:byzantine} introduces our new robust aggregation primitives.
Section~\ref{sec:mehestan} introduces \mehestan{}, and proves its \emph{resilience} and \emph{sparse unanimity} properties.
Section~\ref{sec:experiments} presents our empirical evaluation\footnote{Our code is available at \url{https://github.com/ysfalh/robust-voting}.} of \mehestan{} under adversarial settings.
Section~\ref{sec:rel_work} reviews related work
in social choice theory and robust statistics. 
Additional related work 
on recommender systems and robust voting 
is in Appendix~\ref{sec:add-rel-work}. Missing proofs are in appendices \ref{sec:app-median} to \ref{app:proof-main}. %
Appendix~\ref{sec:additional-exp} presents additional experiments on \mehestan{}.
Appendix~\ref{sec:naive_vote} exposes the difficulty of sparse voting by proving the impossibility of sparse unanimity for individually scaled preferences.
Appendix~\ref{sec:remarks} extends \mehestan{} to guarantee desirable properties such as differential privacy~\cite{dwork2014algorithmic}.

\section{ROBUST SPARSE VOTING}
\label{sec:model}

We consider a set $[\VOTER] = \set{1, \ldots, \VOTER}$ of voters, and a set $[\ALTERNATIVE] = \set{1, \ldots, \ALTERNATIVE}$ of alternatives to score.
Each voter $\voter \in [\VOTER]$ is 
asked to provide score $\voterscore{\voter \alternative} \in \setR$ for each alternative $\alternative \in [\ALTERNATIVE]$.
Moreover, we allow their vote to be \emph{sparse}:
voters may fail to score an alternative $\alternative$, in which case we denote $\voterscore{\voter \alternative} \triangleq \perp$.
Denote $\Rbar \triangleq \setR \cup \set{\perp}$.
Then each voter's input is a vector $\voterscore{\voter} \in \Rbar^\ALTERNATIVE$.
We denote $\voterscorefamily{} \triangleq (\voterscore{1}, \ldots, \voterscore{\VOTER})$ the tuple of voters' scores.
Note that assuming that input scores are bounded has no incidence on, nor is needed in, our theory.
For the sake of exposition, we assume that all voters are given a unit voting right.
The appendix details the more general case of continuous voting rights.

Following the classical Von Neumann-Morgenstern rationality framework~\cite{morgenstern1953theory}, we assume that each voter $\voter$'s cardinal preference $\voterscore{\voter}$ is defined up to a positive affine transformation.
That is, we  consider $\voterscore{\voter}, \voterscore{\voter}'$ to be \emph{equivalent}, and denote $\voterscore{\voter} \sim \voterscore{\voter}'$, if and only if 
the two vectors score the same subset $\ALTERNATIVE_\voter$ of alternatives, 
and there exist $\scaling{} > 0, \translation{} \in \setR$ such that $\voterscore{\voter \alternative} = \scaling{} \voterscore{\voter \alternative}' + \translation{}$ for every alternative $\alternative \in \ALTERNATIVE_\voter$.

Our goal is to aggregate the voters' partial scores for all alternatives.
In other words, we aim to construct a voting algorithm $\vote : \left( \Rbar^{\ALTERNATIVE} \right)^\VOTER \rightarrow \setR^\ALTERNATIVE$,
which maps tuple $\voterscorefamily{}$ to a score vector $\globalscore{} \in \setR^\ALTERNATIVE$.
We denote by $\vote_\alternative (\voterscorefamily{})$ the score given to alternative $a$ using $\vote$.

\subsection{The Lipschitz Resilience Property}

We now introduce Lipschitz resilience, our security property against malicious voters.
Classical solutions sometimes demand that the output of the algorithm be insensitive to malicious voters~\cite{lamport2019byzantine}.
But this is ill-suited to sparse voting where, on some alternatives, the majority of voters may be malicious. 
Instead, Lipschitz resilience demands that the maximal impact of any (malicious) voter be bounded.

To formalize this definition, let us define $\perp_\ALTERNATIVE \in \Rbar^\ALTERNATIVE$ the empty vector,
i.e. defined by $(\perp_{\ALTERNATIVE})_\alternative = \perp$ for all $\alternative \in [\ALTERNATIVE]$.

\begin{definition}[Lipschitz resilience]
\label{def:byz-resilience}
\vote{} guarantees {\em $\lipschitz$-Lipschitz resilience} if, 
for all inputs $\voterscorefamily{}$ and any voter $\voter \in [\VOTER]$,
discarding voter $\voter$'s inputs 
can affect each output of the vote by at most $\lipschitz$,
i.e.
\begin{align*}
    &\forall \voterscorefamily{}, \voter,  
    \forall \alternative \mathsep  %
    \absv{\vote_\alternative (\voterscorefamily{}) - \vote_\alternative(\voterscorefamily{-\voter}, \perp_\ALTERNATIVE)}
    \leq \lipschitz,
\end{align*}
where $\voterscorefamily{-\voter}$ is the tuple $\voterscorefamily{}$ deprived of voter $\voter$'s inputs.
\vote{} is simply said to be resilient if
\vote{} is $ \lipschitz$-Lipschitz resilient
for some $\lipschitz > 0$.
\end{definition}

\paragraph{Interpretation.}
Lipschitz resilience can be naturally interpreted as Lipschitz continuity,
if we consider the $\ell_0$-norm for the input tuple $\voterscorefamily{}$,
and the $\ell_\infty$-norm for the output vector $\vote(\voterscorefamily{})$.
In the appendix, we generalize this definition to continuous voting rights,
and show that the Lipschitz continuity interpretation still holds with the $\ell_1$-norm on the voting rights.
The variable $\lipschitz$ can be interpreted as a resilience measure:
$F$ malicious voters cannot deviate the final score of an alternative by more than $F \cdot \lipschitz$.

Note that this is a nontrivial condition to guarantee. 
In particular, algorithms based on identifying a subset of reference/anchor alternatives to scale all users' preferences will usually fail to provide Lipschitz resilience, 
as they often feature a discontinuity when the subset of reference alternatives or users is changed.

\paragraph{Use cases.}
Lipschitz resilience is particularly useful in three scenarios. 
First, it is critical for sparse voting where only a few honest voters score some alternatives.
Without Lipschitz resilience, malicious voters could manipulate the scores of such alternatives. 
This is especially harmful in applications where low scores lead to censorship, while high scores lead to celebration.
Second, Lipschitz resilience is important for \emph{stability} in systems where high volatility may be discreditable. 
Last, $\lipschitz$-Lipschitz resilience is desirable for privacy-sensitive applications, such as healthcare and finance. 
Our voting algorithm \mehestan{} guarantees $\varepsilon$-differential privacy~\cite{dwork2014algorithmic} when additionally injecting Laplacian noise proportional to $\lipschitz$. 
This adaptation is possible for any $\lipschitz$-Lipschitz resilient voting algorithm and is explained in Appendix~\ref{sec:differential_privacy}.

\subsection{The Sparse Unanimity Property}
\label{sec:unanimity}
Our second desirable property is \emph{sparse unanimity}, 
which guarantees that the voting algorithm recovers unanimous preferences despite sparsity.
More precisely, consider the situation where an algorithm is given $\VOTER$ (positive affine) transformations of the same ground-truth scores vector $\theta_* \in \setR^\ALTERNATIVE$, where each transformation has some hidden coordinates (i.e., is sparsified).
Sparse unanimity guarantees that, once enough voters participate,
the algorithm recovers $\theta_*$.

We first introduce notation: for any voter score $\voterscore{\voter} \in \Rbar^{\ALTERNATIVE}$, 
denote $\ALTERNATIVE_\voter$ the subset of alternatives $\alternative$ for which $\voterscore{\voter \alternative} \neq \perp$ has been reported.
For any subset $B \subset [\ALTERNATIVE]$, 
let $\voterscore{|B} \in \setR^{B}$ be the (partial) score vector obtained by selecting only the entries $\alternative \in B$ from the partial vector $\voterscore{}$.
Moreover, let 
$\VOTER_\alternative \triangleq \set{\voter \in \VOTER \st \alternative \in \ALTERNATIVE_\voter}$ be the set of voters who scored alternative $\alternative$.
Define 
$\ALTERNATIVEPAIR_{\voter \voterbis}(\voterscorefamily{}) \triangleq \set{(\alternative, \alternativebis) \in (\ALTERNATIVE_\voter \cap \ALTERNATIVE_\voterbis)^2 \st \alternative < \alternativebis, \voterscore{\voter \alternative} \neq \voterscore{\voter \alternativebis} ~\text{and}~ \voterscore{\voterbis \alternative} \neq \voterscore{\voterbis \alternativebis}}$
the set of couples of alternatives that both voters $\voter$ and $\voterbis$ scored, each providing distinct scores to the two alternatives.
We can now formalize \emph{sparse unanimity}
whose definition will be clarified right after.
\begin{definition}[Sparse unanimity]
\label{def:sparse-unanimity}
\vote{} is {\em sparsely unanimous} if,
for all $\voterscore{*} \in \setR^\ALTERNATIVE$, there exists $\VOTER_0\geq0$ 
such that,
whenever voters' scores are $\voterscore{*}${-\em unanimous}, {\em comparable} and $\VOTER_0${-\em scored},
\vote{} retrieves the unanimous preferences.
More precisely, the assumptions
\begin{align*}
\begin{array}{ll}
    \voterscore{*}{\text{-\em unanimity:}} 
        & \forall \voter \in [\VOTER] \mathsep
        \voterscore{\voter} \sim \voterscore{*|\ALTERNATIVE_\voter}, \\
    {\text{\em comparability:}} 
        & \forall \voter \neq \voterbis \in [\VOTER] \mathsep
        \ALTERNATIVEPAIR_{\voter \voterbis}(\voterscorefamily{}) \neq \varnothing, \\
    \VOTER_0{\text{-\em scored:}} 
        & \forall \alternative \in [\ALTERNATIVE] \mathsep 
        \card{\VOTER_\alternative} \geq \VOTER_0
\end{array}
\end{align*}
must imply $\vote (\voterscorefamily{}) \sim \voterscore{*}$.
\end{definition}

We start by clarifying the conditions of sparse unanimity.
First, \emph{$\voterscore{*}$-unanimity} requires the reported scores of each voter $\voter$
to be equivalent to unanimous preferences $\voterscore{*}$ (i.e., ground-truth scores), but only on the subset $\ALTERNATIVE_\voter$ of alternatives scored by voter $\voter$.
Second, \emph{comparability} requires, for every distinct voters $n,m \in [\VOTER]$, the existence of at least one pair of alternatives they both scored distinctly.
Intuitively, comparability limits the sparsity of the inputs.
Finally, voters' scores are \emph{$\VOTER_0$-scored} if every alternative was scored by at least $\VOTER_0$ voters.
One difficulty is that $\VOTER_0$ must only depend on $\voterscore{*}$,
and cannot be made to depend on which voters scored which alternatives.

Let us consider a numerical example to illustrate the sparse unanimity property:
\begin{example*}
\label{ex:sparse-unanimity}
Consider a situation with $\VOTER = 2K$ voters and $\ALTERNATIVE = 4$ alternatives.
The unanimous preferences are given by $\voterscore{*} = [-2,-1,1,2]$.
Odd voter $2k+1$ scores the first three alternatives $\ALTERNATIVE_1=\{1,2,3\}$ with $\voterscore{1} = [-3,-2,0,\perp]$,
and even voter $2k$ scores the last three alternatives $\ALTERNATIVE_2=\{2,3,4\}$ with $\voterscore{2} = [\perp, 0, 4, 6]$.
Voters' scores are $\voterscore{*}$-unanimous: we have $\voterscore{1} = \voterscore{*|\ALTERNATIVE_1} - 1$, and $\voterscore{2} = 2\, \voterscore{*|\ALTERNATIVE_2} + 2$.
Also, they verify comparability since both voters scored alternatives $2$ and $3$ differently.
Any sparse unanimity guarantee on $\vote{}$ implies the existence of a value $\VOTER_0(\voterscore{*})$,
for which we guarantee $\vote(\voterscorefamily{}) \sim \voterscore{*}$.
For $K \geq \VOTER_0(\voterscore{*}) / 2$, \vote{} would then be guaranteed to recover $\voterscore{*}$.
\end{example*}

\paragraph{Failure of naive solutions.}
Sparse unanimity is a minimally desirable \emph{correctness} property that any sparse voting algorithm ought to satisfy.
Yet, it is surprisingly nontrivial to guarantee
because of sparsity.
Indeed, if all voters score all alternatives, a natural solution is to aggregate voters' scores after any basic normalization of each voter's score (e.g., using min-max normalization).
Such solutions unfortunately fail for sparse voting.
More precisely, in Appendix~\ref{sec:naive_vote}, we prove an impossibility result:
(reasonable) aggregations of voters' scores with individual-based normalizations (i.e., the normalization does not depend on other voters) fail to be \emph{sparsely unanimous}.

\paragraph{Strengthening sparse unanimity.}
We leave open the question of strengthening sparse unanimity, 
and of constructing a Lipschitz resilient algorithm that satisfies this strengthened condition.
We believe this problem to be very challenging.
For instance, we conjecture that no Lipschitz resilient algorithm can guarantee what could be called \emph{sparse majority}: informally, if we require the recovery of any preference 
that is consensual among a \emph{majority} of sufficiently active voters,
then Lipschitz resilience cannot hold.

\subsection{Nontriviality}

We introduce a third property, which we call \emph{nontriviality}, 
requiring that the diameter of the output of the vote be at least $1$, 
under the assumptions of sparse unanimity.

\begin{definition}[Nontriviality]
\label{def:nontrivial}
\vote{} is {\em nontrivial} if, 
whenever voters' scores are $\voterscore{*}${-\em unanimous}, {\em comparable} and $\VOTER_0${-\em scored},
there exist $\alternative, \alternativebis \in [\ALTERNATIVE]$
such that $\vote_\alternative (\voterscorefamily{}) - \vote_\alternativebis (\voterscorefamily{}) \geq 1$.
\end{definition}

The nontriviality property ensures that the vote's result is placed on an informative scale. Although beneficial, this feature is not essential, and its removal does not simplify the challenge of robust sparse voting.

\section{ROBUST PRIMITIVES}
\label{sec:byzantine}

We now introduce robust
score aggregation functions used in $\mehestan{}$ to guarantee Lipschitz resilience.
For simplicity, in this section, we assume that there is only one alternative; i.e. $\ALTERNATIVE = 1$ and thus each $\voterscore{\voter
}$ is a \emph{scalar} (or non-reported value).

\subsection{Weighted Averaging and Median}

We first show that classical (robust) statistics operators, averaging and median, fail to be Lipschitz resilient.

\paragraph{Averaging.}
A widely used algorithm, e.g. for \emph{collaborative filtering} algorithms for group recommender systems~\cite{felfernig2018algorithms}, is the averaging of available data, i.e.
$\averagevote (\voterscorefamily{}) \triangleq \left( \sum_{\voter \in \VOTER^*} \voterscore{\voter} \right) / \card{\VOTER^*}$,
where $\VOTER^*$ is the set of voters $\voter$ who reported a score, i.e. $\voterscore{\voter} \neq \perp$.
It was proved to satisfy several desirable voting properties~\cite{pennock2000social}. 

\paragraph{Median.}
A popular robust mean estimator is the median, which we denote $\median{}$, and is the main ingredient of the popular \emph{majority judgement} \cite{BalinskiLaraki11} voting algorithm.
A median $M \triangleq \median{} (\votingrightsfamily{}, \voterscorefamily{})$ must divide voters with reported scores into two sets of equal sizes,
i.e. $\card{\set{\voter \in \VOTER^* : \voterscore{\voter} < M}} \leq \frac{1}{2} \card{\VOTER^*} \geq \card{\set{\voter \in \VOTER^* : \voterscore{\voter} > M}}$.

Proposition~\ref{prop:median-mean} below shows that weighted averaging and median fail to satisfy our resilience property.
\begin{proposition}
\label{prop:median-mean}
  Neither \mean{} nor \median{}
  is Lipschitz resilient.
\end{proposition}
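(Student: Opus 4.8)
The plan is to refute $\lipschitz$-Lipschitz resilience for every candidate constant $\lipschitz > 0$, separately for $\mean$ and for $\median$. Since Definition~\ref{def:byz-resilience} is a statement quantified over \emph{all} inputs, it suffices to exhibit, for each $\lipschitz$, one input tuple and one voter whose removal moves some output by strictly more than $\lipschitz$. In both cases I will work in the simplest possible setting: one alternative, $\VOTER = 2$ voters, with scores $\voterscore{1} = 0$ and $\voterscore{2} = z$ where $z > 2\lipschitz$ (e.g. $z = 3\lipschitz$).

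For $\mean$, this is immediate: $\mean(0, z) = z/2$ while $\mean(0, \perp) = 0$, so discarding voter $2$ changes the output by $z/2 > \lipschitz$. As $\lipschitz > 0$ was arbitrary, $\mean$ is not resilient.

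For $\median$, the only delicate point is that the defining inequalities of a median pin its value down only to an interval, so the returned value may depend on the tie-breaking convention; I will therefore argue in a way that is agnostic to that convention, via a two-sided squeeze. When a single score is reported, the defining inequalities force $\median(0, \perp) = 0$ and $\median(\perp, z) = z$, since these are the unique values keeping both comparison sets empty. Let $M \triangleq \median(0, z)$ denote whatever value the implementation returns on the full input. If $\median$ were $\lipschitz$-Lipschitz resilient, then discarding voter $2$ would force $\absv{M - 0} \le \lipschitz$, and discarding voter $1$ would force $\absv{M - z} \le \lipschitz$; combining these with the triangle inequality gives $z = \absv{z - 0} \le \absv{z - M} + \absv{M - 0} \le 2\lipschitz$, contradicting $z > 2\lipschitz$. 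Hence $\median$ is not $\lipschitz$-Lipschitz resilient for any $\lipschitz > 0$.

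The main (and essentially only) obstacle is exactly the one just addressed: unlike $\mean$, the median is a set-valued object, so a naive ``the median jumps from $0$ to $z/2$'' argument would be convention-dependent; the two-sided squeeze sidesteps this entirely. The same two constructions extend verbatim to the case of arbitrary positive voting rights.
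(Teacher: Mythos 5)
Your proof is correct. For \mean{}, your two-voter instance is a special case of what the paper does; the paper proves the slightly stronger statement that a single voter with positive voting right can drive the mean to \emph{any} target value, but a single quantitative counterexample like yours suffices to refute $\lipschitz$-Lipschitz resilience for every $\lipschitz$. For \median{}, your route genuinely differs from the paper's. The paper first notes the $\VOTER=1$ case is trivial and then deliberately builds a large instance with $2\VOTER_0+1$ voters ($\VOTER_0$ at $0$, $\VOTER_0$ at $2\lipschitz$, plus one malicious voter) to show the failure persists no matter how many voters participate; that construction leans on the paper's tie-breaking convention (the median value closest to zero) to pin $\median(\voterscorefamily{-\byzantine})$ to $0$ before the malicious voter shifts it to $2\lipschitz$. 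Your two-voter squeeze instead exploits that the single-voter median is forced by the defining inequalities, and the triangle-inequality argument makes the conclusion independent of any tie-breaking rule — a cleaner and more minimal refutation, though it does not convey the paper's additional message that the median remains non-resilient even in the large-population regime. Both arguments are valid; note only that under the paper's stated convention your $M=\median(0,z)$ actually equals $0$, so the violation occurs when discarding voter $1$ rather than voter $2$, which your convention-agnostic phrasing already accommodates.
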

The proof, which can be found in Appendix~\ref{sec:app-median}, instantiates a simple situation where the malicious voter can manipulate the outcome more than what is allowed by $\lipschitz$-Lipschitz resilience.

\subsection{Quadratically Regularized Median}

We now introduce the \emph{quadratically regularized median}, denoted $\qrmedian{}$.
$\qrmedian{\lipschitz}$ is parameterized by $\lipschitz > 0$,
and is defined as follows:
\begin{equation}
\label{eq:regmedian}
  \qrmedian{\lipschitz} (\voterscorefamily{}) \triangleq \argmin_{z \in \setR} ~ 
    \frac{1}{2 \lipschitz} z^2 + \sum_{\voter: \voterscore{\voter} \neq \perp} \absv{z - \voterscore{\voter}}.
\end{equation}
In practice, we approximate $\qrmedian{\lipschitz}$ by solving~\eqref{eq:regmedian}, 
which is an exponentially fast operation (in the approximation error) using gradient descent.
Note also that $\qrmedian{\lipschitz}$ corresponds to the median when $\lipschitz \rightarrow \infty$~\cite{minsker2015}.
Theorem~\ref{th:qrmed-br} guarantees that $\qrmedian{\lipschitz}$ is $\lipschitz$-Lipschitz resilient.
The full proof is deferred to Appendix~\ref{app:qrmed}.
\begin{theorem}
\label{th:qrmed-br}
$\qrmedian{\lipschitz}$ is well-defined and $\lipschitz$-Lipschitz resilient.

\end{theorem}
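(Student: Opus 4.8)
The plan is to prove two things: (i) $\qrmedian{\lipschitz}$ is well-defined, i.e. the minimization in~\eqref{eq:regmedian} has a unique solution; and (ii) it is $\lipschitz$-Lipschitz resilient in the sense of Definition~\ref{def:byz-resilience}.

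For well-definedness, I would observe that the objective $g(z) \triangleq \frac{1}{2\lipschitz} z^2 + \sum_{\voter : \voterscore{\voter} \neq \perp} \absv{z - \voterscore{\voter}}$ is a sum of convex functions, hence convex, and that the quadratic term makes it \emph{strictly} convex (in fact $\frac{1}{\lipschitz}$-strongly convex). It is also continuous and coercive ($g(z) \to +\infty$ as $\absv{z} \to \infty$), so a minimizer exists, and strict convexity forces it to be unique. This handles~(i) in a couple of lines.

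For Lipschitz resilience, the natural approach is to use the first-order optimality (subgradient) condition. Writing $n^* = \card{\VOTER^*}$ for the number of reporting voters, $z = \qrmedian{\lipschitz}(\voterscorefamily{})$ is characterized by $0 \in \partial g(z)$, i.e. $-\frac{z}{\lipschitz} \in \partial\!\left(\sum_\voter \absv{z - \voterscore{\voter}}\right)$, whose subdifferential at $z$ is the interval $[\,\#\{\voter : \voterscore{\voter} < z\} - \#\{\voter : \voterscore{\voter} > z\} - \#\{\voter : \voterscore{\voter} = z\},\ \#\{\voter : \voterscore{\voter} < z\} - \#\{\voter : \voterscore{\voter} > z\} + \#\{\voter : \voterscore{\voter} = z\}\,]$. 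Now let $z' = \qrmedian{\lipschitz}(\voterscorefamily{-\voter}, \perp_\ALTERNATIVE)$ be the output after removing one voter. I would compare the optimality conditions for $z$ and $z'$. The key monotonicity fact is that $z \mapsto \frac{z}{\lipschitz} + s(z)$, where $s(z)$ is (any measurable selection of) the sum-of-signs subgradient, is nondecreasing, and removing one voter changes the sum-of-signs term by at most $1$ in absolute value, pointwise. Concretely: at the old optimum $z$, the subgradient of the new objective (without voter $\voter$) is shifted from a set containing $0$ by at most $1$ (since $\sign(z - \voterscore{\voter}) \in [-1,1]$), so $0$ lies within distance $1$ of the new subdifferential at $z$; combined with the $\frac{1}{\lipschitz}$-strong convexity — which gives that for any $z''$, $\partial g_{\text{new}}(z'')$ moves by at least $\frac{\absv{z'' - z}}{\lipschitz}$ relative to $\partial g_{\text{new}}(z)$ as $z''$ moves away from $z$ — I can conclude $\absv{z' - z} \leq \lipschitz \cdot 1 = \lipschitz$. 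I would phrase this cleanly via the inequality $(g_{\text{new}}'(z') - g_{\text{new}}'(z))(z' - z) \geq \frac{1}{\lipschitz}(z' - z)^2$ for strongly convex functions (using subgradients), together with $\absv{g_{\text{new}}'(z) - g_{\text{old}}'(z)} \leq 1$ and $g_{\text{old}}'(z) \ni 0$ appropriately, giving $\frac{1}{\lipschitz}(z'-z)^2 \leq \absv{z' - z}$, hence $\absv{z' - z} \leq \lipschitz$.

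The main obstacle is handling the set-valuedness of the subdifferential at points where $z$ or $z'$ coincides with some $\voterscore{\voter}$: the clean "derivative" argument must be replaced by a careful subgradient argument, choosing the right elements of $\partial\!\left(\sum_\voter \absv{\cdot - \voterscore{\voter}}\right)$ at $z$ and $z'$ so that the monotonicity and the $\pm 1$ perturbation bound both hold simultaneously. One robust way around this is to invoke the general stability lemma for strongly convex minimization under bounded perturbation of the objective's gradient/subgradient: if $g_{\text{old}}$ and $g_{\text{new}}$ are both $\mu$-strongly convex and their subdifferentials differ (in Hausdorff distance, or via a uniform bound on the difference of any selections) by at most $\Delta$ everywhere, then their minimizers are within $\Delta/\mu$. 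Here $\mu = 1/\lipschitz$ and $\Delta = 1$, yielding the bound $\lipschitz$. I would state and prove this lemma first (it is short, via summing the two optimality inequalities), then apply it with $g_{\text{old}}(z) = g(z)$ and $g_{\text{new}}(z) = g(z) - \absv{z - \voterscore{\voter}}$ (or $g(z)$ itself if voter $\voter$ did not report), noting $\absv{\partial\absv{z - \voterscore{\voter}}} \subseteq [-1,1]$ gives $\Delta \leq 1$. This also covers the edge case where voter $\voter$ reported $\perp$, in which case $g_{\text{new}} = g_{\text{old}}$ and the bound $0 \leq \lipschitz$ is trivial.
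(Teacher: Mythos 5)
Your proposal is correct and follows essentially the same route as the paper: well-definedness via strong convexity of the objective, and resilience by noting that dropping one voter perturbs the subdifferential pointwise by at most $1$ and then invoking the $\tfrac{1}{\lipschitz}$-strong-convexity inequality $(h'-h)(q'-q)\geq \tfrac{1}{\lipschitz}(q'-q)^2$ on subgradients to get $\absv{q'-q}\leq \lipschitz$. The paper carries this out directly (in the slightly more general weighted setting, where the perturbation is bounded by $\norm{\votingrightsfamily{}-\votingrightsfamily{}'}{1}$) rather than stating your abstract stability lemma first, but the mathematical content is identical.
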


\vspace{-1em}
\begin{proof}[Sketch of proof]
  The objective~\eqref{eq:regmedian} minimized by $\qrmedian{\lipschitz}$ is $\tfrac{1}{\lipschitz}$-strongly convex, which implies that its minimizer is unique and that $\qrmedian{\lipschitz}$ is well-defined.
  $\lipschitz$-Lipschitz resilience follows from strong convexity, and the fact that the derivative of each summand $\absv{z-\voterscore{\voter}}$ is bounded by $1$.
\end{proof}

An additional property, which is used in the proof for \mehestan{}'s resilience, is the Lipschitz continuity of $\qrmedian{}$, as stated by Proposition~\ref{prop:qrmed-lipschitz} in Appendix~\ref{sec:app-other-proofs}.

\subsection{Lipschitz-Robustified Mean}

We now introduce \emph{Lipschitz-Robustified Mean}, which we denote $\brmean{}$, a primitive that returns the mean of any bounded inputs when sufficiently many voters participate, while satisfying $\lipschitz$-Lipschitz resilience.
It builds upon the \emph{clipped mean} $\clippedmean{}$ centered on $\mu$ and of radius $\diameter$ defined as:
\begin{align*}
    &\clippedmean{} (\voterscorefamily{} | \mu, \diameter) 
    \triangleq \mean (\clip(\voterscorefamily{} | \mu, \diameter)) \\
    &= \frac{1}{\card{\VOTER^*}} 
    \sum_{\voter \in \VOTER^*} \clip (\voterscore{\voter} | \mu, \diameter),
\end{align*}
where $\clip( x | \mu, \diameter) \triangleq \max \set{ \mu - \diameter, \min \set{ \mu + \diameter, x } }$ clips $x$ within the interval $[\mu - \diameter, \mu + \diameter]$,
and where $\VOTER^* \triangleq \set{\voter : \voterscore{\voter} \neq \emptyset}$.
$\brmean{}$ is then obtained by executing $\clippedmean{}$, centered on $\qrmedian{}$, with a radius that grows linearly with the number of users $\VOTER^*$:
\begin{align*}
    &\brmean{\lipschitz} (\voterscorefamily{} ) 
    \triangleq 
    \clippedmean{} \left( \voterscorefamily{} \st \qrmedian{\lipschitz/4} (\voterscorefamily{} ), \frac{\lipschitz \card{\VOTER^*}}{4} \right).
\end{align*}
As stated by Theorem~\ref{th:brmean} below, $\brmean{}$ verifies several properties.
The full proof is in Appendix~\ref{app:brmean}.
\begin{theorem}
\label{th:brmean}
$\brmean{\lipschitz}$ is $\lipschitz$-resilient.
Moreover, if there exists $\diameter>0$ such that $\card{\VOTER^*} \geq 8 \diameter / \lipschitz$ and $\voterscore{\voter} \in [-\diameter, \diameter]$ for all $\voter$,
then $\brmean{\lipschitz} (\voterscorefamily{}) = \mean{} (\voterscorefamily{})$.
\end{theorem}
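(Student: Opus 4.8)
The plan is to prove the two assertions separately, both by direct computation. For $\lipschitz$-resilience, fix an input $\voterscorefamily{}$ and a voter $\voter$; if $\voterscore{\voter} = \perp$ nothing changes when $\voter$ is discarded, so assume $\voter \in \VOTER^*$ and set $N \triangleq \card{\VOTER^*} \geq 1$. Write $m \triangleq \qrmedian{\lipschitz/4}(\voterscorefamily{})$, $R \triangleq \lipschitz N/4$, $c_k \triangleq \clip(\voterscore{k} | m, R)$, and let $m'$, $c_k'$ be the analogues for the input $(\voterscorefamily{-\voter}, \perp_\ALTERNATIVE)$ with radius $R' \triangleq \lipschitz(N-1)/4$; put $B \triangleq \brmean{\lipschitz}(\voterscorefamily{})$ and $B' \triangleq \brmean{\lipschitz}(\voterscorefamily{-\voter}, \perp_\ALTERNATIVE)$, using the convention $B' = \qrmedian{\lipschitz/4}(\perp_\ALTERNATIVE) = 0$ (so $m' = 0$ and $R' = 0$) in the degenerate case $N = 1$. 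Splitting voter $\voter$ off the average and regrouping gives the identity
\begin{equation*}
  B - B' = \frac{1}{N}\bigl(c_\voter - B'\bigr) + \frac{1}{N}\sum_{k \in \VOTER^* \setminus \set{\voter}} \bigl(c_k - c_k'\bigr).
\end{equation*}

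I would then bound each term. For the first, $c_\voter \in [m - R, m + R]$ and $B'$, being an average of values $c_k' \in [m' - R', m' + R']$, also lies in $[m' - R', m' + R']$; combining this with $\absv{m - m'} \leq \lipschitz/4$ (Theorem~\ref{th:qrmed-br} applied to $\qrmedian{\lipschitz/4}$) and $R + R' = \lipschitz(2N-1)/4$ yields $\absv{c_\voter - B'} \leq \lipschitz/4 + \lipschitz(2N-1)/4 = \lipschitz N/2$, so the first term has modulus at most $\lipschitz/2$. For the second, the map $(\mu, r) \mapsto \clip(x | \mu, r)$ is $1$-Lipschitz in $\mu$ and $1$-Lipschitz in $r \geq 0$, so $\absv{c_k - c_k'} \leq \absv{m - m'} + \absv{R - R'} \leq \lipschitz/4 + \lipschitz/4 = \lipschitz/2$ for each $k$, and averaging over the $N - 1 \leq N$ terms keeps the bound $\lipschitz/2$. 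Summing the two contributions gives $\absv{B - B'} \leq \lipschitz$.

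For the ``moreover'' part, $\card{\VOTER^*} \geq 8\diameter/\lipschitz$ forces $\VOTER^* \neq \varnothing$, so $\mean{}(\voterscorefamily{})$ is defined; set $m \triangleq \qrmedian{\lipschitz/4}(\voterscorefamily{})$ and $R \triangleq \lipschitz\card{\VOTER^*}/4 \geq 2\diameter$. Since every $\voterscore{k} \in [-\diameter, \diameter]$, the objective of~\eqref{eq:regmedian} with parameter $\lipschitz/4$ is strictly increasing on $(\diameter, +\infty)$ and strictly decreasing on $(-\infty, -\diameter)$ — on those rays each $\absv{z - \voterscore{k}}$ is monotone and the quadratic term is strictly monotone — so its unique minimizer obeys $m \in [-\diameter, \diameter]$. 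Hence $m - R \leq -\diameter$ and $m + R \geq \diameter$, i.e. $[-\diameter, \diameter] \subseteq [m - R, m + R]$, so $\clip(\voterscore{k} | m, R) = \voterscore{k}$ for all $\voter \in \VOTER^*$; therefore $\brmean{\lipschitz}(\voterscorefamily{}) = \clippedmean{}(\voterscorefamily{} | m, R) = \mean{}(\clip(\voterscorefamily{} | m, R)) = \mean{}(\voterscorefamily{})$.

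The main obstacle is the resilience bound: discarding a voter simultaneously perturbs the clipping center (controlled by the resilience of $\qrmedian{}$), shrinks the clipping radius, and decrements the averaging count, and one must check that these three effects compose to exactly $\lipschitz$ and no more. The structural reason it works is that the radius is chosen to grow linearly in $\card{\VOTER^*}$ with slope $\lipschitz/4$, so the per-voter change of the radius is the constant $\lipschitz/4$; together with the averaging factor $1/N$ this is precisely what keeps the $\Theta(\lipschitz N)$-sized swing of the removed voter's clipped value under control after division by $N$. Verifying the algebraic identity, the two $1$-Lipschitz estimates for $\clip$, and the degenerate case $N = 1$ (already covered by the identity once the empty-input convention for $\brmean{\lipschitz}$ is fixed) is routine.
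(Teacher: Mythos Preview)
Your proof is correct. Both parts rely on the same ingredients as the paper --- the $(\lipschitz/4)$-resilience of $\qrmedian{\lipschitz/4}$, the $1$-Lipschitz dependence of $\clip$ on its center and radius, and the observation that $\qrmedian{\lipschitz/4}$ stays in $[-\diameter,\diameter]$ when all inputs do --- but for the resilience bound you organise them differently. The paper uses a three-step triangle inequality at the level of $\clippedmean$: change the radius, then the center, then the averaging weights, invoking separate lemmas for each (in particular a dedicated lemma bounding $\absv{\clippedmean(\votingrightsfamily_{-\BYZANTINE},\cdot) - \clippedmean(\votingrightsfamily,\cdot)}$ by $2\norm{\votingrightsfamily_\BYZANTINE}{1}\diameter/\norm{\votingrightsfamily}{1}$), and the three contributions are $\lipschitz/4 + \lipschitz/4 + \lipschitz/2$. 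You instead write the explicit identity $B - B' = \tfrac{1}{N}(c_\voter - B') + \tfrac{1}{N}\sum_{k\neq\voter}(c_k - c_k')$ and bound the two terms by $\lipschitz/2$ each. Your route is a bit more elementary --- no separate ``Byzantine'' lemma for $\clippedmean$ is needed, and the bookkeeping of the identity makes transparent why the linear growth of the radius exactly compensates the removed voter's swing. The paper's modular decomposition, on the other hand, generalises verbatim to continuous voting rights. The ``moreover'' part is essentially identical in both proofs.
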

\vspace{-.75em}
\begin{proof}
[Sketch of proof]
In the proof, we show that $\clippedmean{}$ is $1$-Lipschitz in the center $\mu$ and in the radius $\Delta$, 
and is also sufficiently resilient for small radii, when the number of voters is large enough.
The guarantee $\brmean{} = \mean{}$ then holds once enough voters participate, 
so that the radius of $\clippedmean{}$ could safely grow large enough to contain all voters' inputs.
\end{proof}

Remarkably, $\brmean{}$ eventually returns the mean of bounded inputs, provided that sufficiently many voters participate, 
despite being oblivious to the input bounds.
This is a critical property that will be at the heart of the sparse unanimity guarantee of \mehestan{}.
Designing a resilient aggregation with this feature turned out to be the most challenging aspect of our algorithm design.
An additional property, which we use in the proof for \mehestan{}'s resilience, is the Lipschitz continuity of $\brmean{}$ with respect to its inputs, as stated by Proposition~\ref{prop:brmean-lipschitz} in Appendix~\ref{sec:app-other-proofs}.

\section{OUR ALGORITHM: \mehestan{}}
\label{sec:mehestan}
In this section, we first introduce our algorithm \mehestan{}, and
conclude with theoretical guarantees.

\subsection{Description of $\mehestan{}$}
\label{subsec:mehestan-description}

$\mehestan{}$ proceeds in four principal steps:

\paragraph{1. Local normalization.}
First, \mehestan{} normalizes every score vector using min-max normalization, so that the minimal and maximal scores are respectively $0$ and $1$:
\begin{align}
    \label{eq:local-norm}
    \normalizedscore{\voter} \triangleq \frac{\voterscore{\voter} -  \min_{\alternative \in \ALTERNATIVE_{\voter}} \voterscore{\voter \alternative}}{\max_{\alternative \in \ALTERNATIVE_{\voter}} \voterscore{\voter \alternative}- \min_{\alternative \in \ALTERNATIVE_{\voter}} \voterscore{\voter \alternative}}.%
\end{align}
Note that min-max normalization is well-defined only if $\voterscore{\voter}$ has at least two distinct reported scores. 
If not, the scores are non-informative with respect to (multiplicative) scaling since they are equivalent to the vector of zeros, so we simply set $\normalizedscore{\voter}$ to be the vector of zeros, without any incidence on the theoretical guarantees.

\paragraph{2. Scaling factor search.}
For any voter $\voter \in [\VOTER]$, we define $\PAIRWISECOMPARABLEVOTER{\voter} \triangleq \set{\voterbis \in [\VOTER] \st \ALTERNATIVEPAIR_{\voter \voterbis} (\voterscorefamily{}) \neq \varnothing}$ the set of voters comparable to $\voter$.
In words, $\voterbis \in \PAIRWISECOMPARABLEVOTER{\voter}$ if and only if there exist two alternatives that $\voter$ and $\voterbis$ both scored differently.
For each voter $\voterbis \in \PAIRWISECOMPARABLEVOTER{\voter}$, 
we compute the comparative scaling $\scaling{\voter \voterbis}$ of voters $\voter$ and $\voterbis$ defined as 
\begin{equation}
    \label{eq:comp-scaling}
    \scaling{\voter \voterbis} 
    \triangleq \frac{1}{\card{\ALTERNATIVEPAIR_{\voter \voterbis}}}
    \sum_{(\alternative, \alternativebis) \in \ALTERNATIVEPAIR_{\voter \voterbis}} 
    \frac{|\normalizedscore{\voterbis \alternative} - \normalizedscore{\voterbis \alternativebis}|}{|\normalizedscore{\voter \alternative} - \normalizedscore{\voter \alternativebis}|}.
\end{equation}
From a high-level perspective, each comparative scaling $\scaling{\voter \voterbis}$ is an implicit vote by voter $\voterbis$ for the scaling factor of voter $\voter$.
In this step, \mehestan{} aggregates the comparative scalings $\scaling{\voter \voterbis}$ via $\brmean{}$:
\begin{equation}
    \label{eq:scaling}
    \scaling{\voter} \triangleq 
    1 + \brmean{\lipschitz/7} 
    \big(\set{
        \scaling{\voter\voterbis} - 1
        \st
        \voterbis \in \PAIRWISECOMPARABLEVOTER{\voter}
    }\big).
\end{equation}
Above, we do not directly take the $\brmean{}$ of the comparative scaling ratios $\scaling{\voter \voterbis}$ so that the default value equals $1$ in the absence of comparable voters.
Moreover, since each $\scaling{\voter \voterbis}$ corresponds to a vote for the scaling factor of $\scaling{\voter}$, the computation of $\scaling{\voter}$ can be interpreted as a search for a \emph{common scale}.

\paragraph{3. Translation factor search.}
While the previous step computes a common multiplicative scaling factor, the current step searches for a \emph{common translation}.
More precisely, define $\COMPARABLEVOTER{\voter} \triangleq \set{\voterbis \in [\VOTER] \st \ALTERNATIVE_{\voter \voterbis} \neq \emptyset}$ the set of translation-comparable voters. For each voter $\voterbis \in \COMPARABLEVOTER{\voter}$, we compute
\begin{equation}
    \label{eq:comp-translation}
    \translation{\voter \voterbis} \triangleq 
    \frac{1}{\card{\ALTERNATIVE_{\voter \voterbis}}} 
    \sum_{\alternative \in \ALTERNATIVE_{\voter \voterbis}}
    \left( \scaling{\voterbis} \normalizedscore{\voterbis \alternative} - \scaling{\voter} \normalizedscore{\voter \alternative} \right).
\end{equation}

In this step, \mehestan{} aggregates the comparative translation factors $\translation{\voter \voterbis}$ via $\brmean{}$:
\begin{equation}
\label{eq:translation}
  \translation{\voter} \triangleq
  \brmean{\lipschitz / 7} 
  \left(
  \set{\translation{\voter \voterbis} 
  \st
  \voterbis \in \COMPARABLEVOTER{\voter}}\right).
\end{equation}

\paragraph{4. Alternative-wise score aggregation.}
Finally, \mehestan{} linearly transforms the scores vectors with the obtained scaling and translation factors:
\begin{equation}
\label{eq:global-norm}
\hat{\theta}_{\voter} \coloneqq \scaling{\voter} \normalizedscore{\voter}+\translation{\voter}.
\end{equation}

We refer to this step as \emph{global} normalization, as opposed to \emph{local} normalization~\eqref{eq:local-norm}, because the transformation of each vector is dependent on all input scores.
Then, \mehestan{} aggregates the transformed scores along each alternative $\alternative \in [\ALTERNATIVE]$ via $\qrmedian{}$:
\begin{equation*}
\label{eq:mehestan}
    \mehestan_{\lipschitz \alternative}(\voterscorefamily{}) 
    = \qrmedian{\lipschitz/7} \left( \scaling{\voter} \normalizedscore{\voter \alternative}+\translation{\voter} \st \voter \in [\VOTER] \right).
\end{equation*}

The full procedure of \mehestan{} is summarized in Algorithm~\ref{alg:mehestan}.

\begin{algorithm}[t]
\textbf{Input}: 
Voters' scores $\voterscorefamily{}$, Lipschitz resilience $\lipschitz$\\
\textbf{Output}: The aggregate scores $\mehestan_{\lipschitz}(\votingrightsfamily{},\voterscorefamily{})$
\begin{algorithmic}[1]
	    \State
	    $\forall \voter,$ compute $\normalizedscore{\voter}$, the min-max normalization of $\voterscore{\voter}$
	\Comment{Local normalization}
	   \State $\forall \voter,\voterbis \in [\VOTER],$ compute $\scaling{\voter\voterbis}$ following Equation~\eqref{eq:comp-scaling}, if $n,m$ are comparable
	   \State
	   $\forall \voter \mathsep \scaling{\voter} \leftarrow 1 + \brmean{\lipschitz/7}
	   \hspace{-1pt}\left(\scaling{\voter\voterbis} - 1  \st \voterbis \in \PAIRWISECOMPARABLEVOTER{\voter} \right)$ 
	   \State $\forall \voter,\voterbis \in [\VOTER],$ compute $\translation{\voter\voterbis}$ following Equation~\eqref{eq:comp-translation}, if $n,m$ are comparable
	   \State
	   $\forall \voter \mathsep 
	   \translation{\voter} \leftarrow \brmean{\lipschitz/7} 
        \left( \votingrights{\voterbis}, \translation{\voter \voterbis} 
        \st \voterbis \in \COMPARABLEVOTER{\voter} \right)$
        \State
	    $\forall \voter \mathsep \hat{\theta}_{\voter} \leftarrow \scaling{\voter} \normalizedscore{\voter}+\translation{\voter}$
	    \Comment{Global normalization}
	    \State
	    $\forall \alternative \mathsep \globalscore{\alternative} \leftarrow \qrmedian{\lipschitz / 7} (\hat{\theta}_{\voter\alternative} | \voter \in [\VOTER])$
	    \State \textbf{return} $\globalscore{}$\;
	\caption{\mehestan{}}
	\label{alg:mehestan}
\end{algorithmic}
\end{algorithm}

\subsection{Theoretical Guarantees}
\label{subsec:theory}
Finally, we state the main result of our paper in Theorem~\ref{th:mehestan-br} below, showing that $\mehestan{}$ satisfies both properties of the robust sparse voting problem. 
We defer the full proof to Appendix~\ref{app:proof-main}.
\begin{theorem}
\label{th:mehestan-br}
$\mehestan{}_{\lipschitz}$ is $\lipschitz$-Lipschitz resilient, sparsely unanimous and nontrivial, with
\begin{equation}
    \label{eq:num_voters}
    \VOTER_0(\voterscore{*}) \triangleq 
    \frac{8}{\lipschitz}
     \left( \frac{\max_{\alternative, \alternativebis} \absv{\voterscore{* \alternative} - \voterscore{* \alternativebis}}}{\min_{\alternative, \alternativebis : \voterscore{* \alternative} \neq \voterscore{* \alternativebis}} \absv{\voterscore{* \alternative} - \voterscore{* \alternativebis}}} \right)^2.
\end{equation}
\end{theorem}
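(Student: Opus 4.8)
The plan is to prove the three claims separately, handling resilience by a cascaded perturbation bound and sparse unanimity / nontriviality by an exact computation under the stated assumptions. For \textbf{Lipschitz resilience}, I would fix a voter $\voter_0$, replace $\voterscorefamily{}$ by $(\voterscorefamily{-\voter_0},\perp_\ALTERNATIVE)$, and track the induced change $\Delta(\cdot)$ of every intermediate quantity through the four stages of Algorithm~\ref{alg:mehestan}, budgeting the tolerance $\lipschitz$ over the primitives, each of which is called at scale $\lipschitz/7$. The starting observation is that stage~1 is inert: for $\voter\ne\voter_0$ the vector $\normalizedscore{\voter}$ depends only on $\voterscore{\voter}$, so $\Delta\normalizedscore{\voter}=0$ and $\normalizedscore{\voter\alternative}\in[0,1]$, and every pairwise object $\scaling{\voter\voterbis}$, $\ALTERNATIVEPAIR_{\voter\voterbis}$, $\ALTERNATIVE_{\voter\voterbis}$ with $\voter,\voterbis\ne\voter_0$ is unchanged. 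The cascade is then: (i) in~\eqref{eq:scaling} the only input lost by the $\brmean{\lipschitz/7}$ is $\scaling{\voter\voter_0}-1$, so $\lipschitz/7$-resilience of $\brmean{}$ (Theorem~\ref{th:brmean}) gives $\absv{\Delta\scaling{\voter}}\le\lipschitz/7$; (ii) since $\normalizedscore{\cdot\alternative}\in[0,1]$, \eqref{eq:comp-translation} gives $\absv{\Delta\translation{\voter\voterbis}}\le\absv{\Delta\scaling{\voter}}+\absv{\Delta\scaling{\voterbis}}\le 2\lipschitz/7$; (iii) in~\eqref{eq:translation}, $\translation{\voter}$ moves both because one input $\translation{\voter\voter_0}$ is lost (resilience of $\brmean{}$) and because the surviving inputs move by $\le 2\lipschitz/7$ (Lipschitz continuity of $\brmean{}$ in its inputs, Proposition~\ref{prop:brmean-lipschitz}), so $\absv{\Delta\translation{\voter}}$ is a fixed multiple of $\lipschitz/7$; (iv) by~\eqref{eq:global-norm} and $\normalizedscore{\cdot\alternative}\in[0,1]$, $\absv{\Delta\hat{\theta}_{\voter\alternative}}\le\absv{\Delta\scaling{\voter}}+\absv{\Delta\translation{\voter}}$; (v) the output $\qrmedian{\lipschitz/7}$ loses the input $\hat{\theta}_{\voter_0\alternative}$ (resilience, Theorem~\ref{th:qrmed-br}) and sees its remaining inputs move by $\absv{\Delta\hat{\theta}_{\voter\alternative}}$ (Lipschitz continuity of $\qrmedian{}$, Proposition~\ref{prop:qrmed-lipschitz}). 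Collecting the terms, the coefficient multiplying $\lipschitz/7$ in the total does not exceed $7$ — which is exactly why the algorithm uses the scale $\lipschitz/7$ — and hence $\absv{\mehestan_{\lipschitz\alternative}(\voterscorefamily{})-\mehestan_{\lipschitz\alternative}(\voterscorefamily{-\voter_0},\perp_\ALTERNATIVE)}\le\lipschitz$.

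For \textbf{sparse unanimity and nontriviality}, assume $\voterscore{*}$-unanimity, comparability and $\VOTER_0$-scoredness with $\VOTER_0$ as in~\eqref{eq:num_voters}; abbreviate $m_\voter:=\min_{\alternative\in\ALTERNATIVE_\voter}\voterscore{*\alternative}$, $R_\voter:=\range{\voterscore{*|\ALTERNATIVE_\voter}}$, $r_{\max}:=\max_{\alternative,\alternativebis}\absv{\voterscore{*\alternative}-\voterscore{*\alternativebis}}$, $r_{\min}:=\min_{\alternative,\alternativebis:\voterscore{*\alternative}\ne\voterscore{*\alternativebis}}\absv{\voterscore{*\alternative}-\voterscore{*\alternativebis}}$, so $\VOTER_0=8(r_{\max}/r_{\min})^2/\lipschitz$ and $r_{\min}\le R_\voter\le r_{\max}$. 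First, comparability forces every voter to have scored two alternatives with distinct ground-truth values, hence $\voter\in\PAIRWISECOMPARABLEVOTER{\voter}$ and $\voter\in\COMPARABLEVOTER{\voter}$, and in fact $\PAIRWISECOMPARABLEVOTER{\voter}=\COMPARABLEVOTER{\voter}=[\VOTER]$ for every $\voter$. Second, $\voterscore{*}$-unanimity plus affine invariance of min-max normalization give $\normalizedscore{\voter\alternative}=(\voterscore{*\alternative}-m_\voter)/R_\voter$, so every ratio in~\eqref{eq:comp-scaling} equals $R_\voter/R_\voterbis$ and thus $\scaling{\voter\voterbis}=R_\voter/R_\voterbis$ exactly. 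Since $\card{\PAIRWISECOMPARABLEVOTER{\voter}}=\VOTER\ge\VOTER_0$ is large enough for $\brmean{\lipschitz/7}$ to return the mean of the bounded family $\{\scaling{\voter\voterbis}-1\}$ (Theorem~\ref{th:brmean}), \eqref{eq:scaling} yields $\scaling{\voter}=1+\frac1\VOTER\sum_{\voterbis}(R_\voter/R_\voterbis-1)=hR_\voter$, with $h:=\frac1\VOTER\sum_{\voterbis}1/R_\voterbis>0$ \emph{independent of} $\voter$. Substituting into~\eqref{eq:comp-translation}, each summand equals $h(m_\voter-m_\voterbis)$, so $\translation{\voter\voterbis}=h(m_\voter-m_\voterbis)$; these are again bounded and $\card{\COMPARABLEVOTER{\voter}}=\VOTER\ge\VOTER_0$ suffices for $\brmean{\lipschitz/7}$ to output their mean, giving $\translation{\voter}=h(m_\voter-\bar m)$ with $\bar m:=\frac1\VOTER\sum_{\voterbis}m_\voterbis$. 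Hence $\hat{\theta}_{\voter\alternative}=\scaling{\voter}\normalizedscore{\voter\alternative}+\translation{\voter}=h(\voterscore{*\alternative}-\bar m)$ \emph{for every} $\voter$, so the closing $\qrmedian{\lipschitz/7}$ over $\voter\in[\VOTER]$ is evaluated on $\VOTER$ identical inputs of magnitude at most $r_{\max}/r_{\min}$; because $\VOTER\ge\VOTER_0$ the quadratic regularizer leaves the minimizer at that common value, so $\mehestan_\lipschitz(\voterscorefamily{})=h\voterscore{*}-h\bar m\,\mathbf 1\sim\voterscore{*}$, which is sparse unanimity. Nontriviality follows at once: $\mehestan_{\lipschitz\alternative}(\voterscorefamily{})-\mehestan_{\lipschitz\alternativebis}(\voterscorefamily{})=h(\voterscore{*\alternative}-\voterscore{*\alternativebis})$, and taking $\alternative,\alternativebis$ realizing $r_{\max}$ with the right sign and using $h\ge 1/\max_{\voterbis}R_\voterbis\ge 1/r_{\max}$ gives a gap $\ge h\,r_{\max}\ge 1$.

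The main obstacle is the constant bookkeeping on both fronts. For resilience, fitting the five-step cascade into the tolerance $\lipschitz$ needs the exact $\ell_\infty$-Lipschitz constants of $\brmean{}$ and $\qrmedian{}$ with respect to their inputs (Propositions~\ref{prop:brmean-lipschitz},~\ref{prop:qrmed-lipschitz}), together with careful use of the $\normalizedscore{\cdot\alternative}\in[0,1]$ factors, which is precisely what pins the scale to $\lipschitz/7$. For sparse unanimity, one must certify that the \emph{single} quantity $\VOTER_0(\voterscore{*})=8(r_{\max}/r_{\min})^2/\lipschitz$ is simultaneously large enough to activate both invocations of $\brmean{}$ (each requiring a number of participants that one has to control, through $h$, the $m_\voter$ and the $R_\voter$, by $r_{\max}$ and $r_{\min}$ only — note this is where $\VOTER\ge\VOTER_0$, not merely $\card{\VOTER_\alternative}\ge\VOTER_0$, is exploited) and to keep the final $\qrmedian{}$ in the regime where the regularizer is inactive. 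Everything else — affine invariance of min-max normalization, the identity $\PAIRWISECOMPARABLEVOTER{\voter}=[\VOTER]$, and the telescoping of the scaling/translation recovery — is routine once these invariants are identified.
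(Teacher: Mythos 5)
Your proposal is correct and follows essentially the same route as the paper's proof: the resilience argument is the same cascade of $\brmean{}$/$\qrmedian{}$ resilience and Lipschitz-continuity bounds (yielding exactly the budget of seven contributions of size $\lipschitz/7$, matching the paper's Lemmas on scaling, translation and final aggregation), and the unanimity argument reproduces the paper's pre-normalization bound, identical-rescaling and final-$\qrmedian{}$ steps, merely reparametrized via the restricted ranges $R_\voter$ and minima $m_\voter$ instead of the paper's $\scaling{\voter}^*,\translation{\voter}^*$ taken relative to $\minmaxnormalizer{}(\voterscore{*})$. The one caveat — which you flag but do not resolve, and which the paper's own appendix shares by silently invoking the mean-returning condition of $\brmean{\lipschitz}$ rather than of $\brmean{\lipschitz/7}$ — is that the constant in $\VOTER_0(\voterscore{*})$ is not fully certified against the $\lipschitz/7$ parameterization actually used in Algorithm~\ref{alg:mehestan}.
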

\vspace{-1em}
\begin{proof}[Sketch of proof]
Let us first address \emph{sparse unanimity}.
    Placing ourselves in the situation where voters' scores are $\voterscore{*}$-unanimous, we can write each input scores vector as
$\voterscore{\voter} = \scaling{\voter}^* \voterscore{*|\ALTERNATIVE_\voter} + \translation{\voter}^*$,
where $\scaling{\voter}^*>0$ and $\translation{\voter}^*$ are \emph{unknown} to the algorithm.
The main challenge of the proof is to show that the scores $\hat{\theta}_\voter$, obtained after global normalization~\eqref{eq:global-norm}, can be written in the form $\scaling{}^* \voterscore{*|\ALTERNATIVE_\voter}+\translation{}^*$, where $\scaling{}^*>0,\translation{}^* \in \setR$ are voter-independent.
The latter quantities are exactly what defines the ``common scale'' found by \mehestan{}, before applying an alternative-wise aggregation.
To do so, the proof proceeds as follows: once enough voters participate,
    the properties of $\brmean{}$, especially that it returns the mean in specific conditions (see Section~\ref{sec:byzantine}), allow showing that each comparative scaling factor (see Equation~\ref{eq:comp-scaling}) is in fact $\scaling{\voter \voterbis}= \scaling{\voterbis}^*/\scaling{\voter}^*$.
    We then show the scaling factor $\scaling{\voter}$ (see Equation~\ref{eq:scaling}) to equal  $\scaling{}^*/\scaling{\voter}^*$, where $\scaling{}^*$ is voter-independent.
    Similarly, we show the translation factor $\translation{\voter}$ (see Equation~\ref{eq:translation}) to equal $\translation{}^* - \scaling{}^* \translation{\voter}^*/\scaling{\voter}^*$, where $\translation{}^*$ is voter-independent.
    Overall, for every $\voter$, we obtain $\hat{\theta}_\voter=\scaling{\voter} \normalizedscore{\voter \alternative}+\translation{\voter} = \scaling{}^* \voterscore{*|\ALTERNATIVE_\voter}+\translation{}^*$.
    Then, the final aggregation performed with $\qrmedian{\lipschitz}$ returns $\scaling{}^* \voterscore{*}+\translation{}^*$, which satisfies sparse unanimity.

We now address \emph{Lipschitz resilience}.
Since we show $\brmean{\lipschitz}$ and $\qrmedian{\lipschitz}$ to be $\lipschitz$-Lipschitz resilient, all aggregation operations made in \mehestan{} are resilient, and the proof mainly composes the bounds guaranteed by $\lipschitz$-Lipschitz resilience.
    Note that we set the resilience parameter of $\brmean{}$ and $\qrmedian{}$ to $\lipschitz / 7$ so that the bounds of resilience can be correctly composed to yield the final $\lipschitz$-Lipschitz resilience.

Finally, \emph{nontriviality} from the fact that scaled individual scores are necessarily blown up,
compared to min-max normalization of the full score vector $\voterscore{*}$,
which already satisfies nontriviality.
\end{proof}

\paragraph{Trade-off discussion.}
The analysis of \mehestan{} underlying Theorem~\ref{th:mehestan-br} raises a tension between $\lipschitz$-Lipschitz resilience and sparse unanimity.
To see this, recall that verifying sparse unanimity requires enough voters to participate, which means that alternatives should be $\VOTER_0$-scored (see Definition~\ref{def:sparse-unanimity}).
In fact, as shown in Equation~\ref{eq:num_voters}, it is sufficient for $\VOTER_0$ to be proportional to $\tfrac{1}{\lipschitz}$ and a function of $\theta_*$ bounding the difference between voters’ scalings; i.e. to what extent voters express the same preferences $\theta_*$ differently.
Therefore, stronger Lipschitz resilience (lower $\lipschitz$) implies that more participation is needed (larger $\VOTER_0$) to recover unanimous preferences.
This trade-off raises an interesting research question: whether this trade-off is fundamental or algorithm-dependent.
Answering this question can lead to more efficient algorithms or tighter theoretical bounds for $\mehestan{}$.

\paragraph{Complexity.}
The collaborative scaling of voters' scores (steps 2 and 3 in Section~\ref{subsec:mehestan-description}) is the computational bottleneck of \mehestan{}.
In the worst case, for all pairs of voters, it requires going through all pairs of alternatives,
thereby yielding a $\mathcal O(\ALTERNATIVE^2 \VOTER^2)$ time complexity.
In practice, this heavy workload can be mitigated,
by performing collaborative scaling in an asynchronous rare manner,
and by assuming that the scaling factors do not vary much over time~\cite{beylerian2022tournesol}.

\section{EMPIRICAL EVALUATION}
\label{sec:experiments}
We report experiments\footnote{Our code is available at \url{https://github.com/ysfalh/robust-voting}.}
testing the performance of \mehestan{} under sparsity and malicious attacks.
 \begin{figure*}[t!]
 \vspace{-5pt}
\centering
\begin{subfigure}{.49\textwidth}
  \centering
  \includegraphics[width=\linewidth]{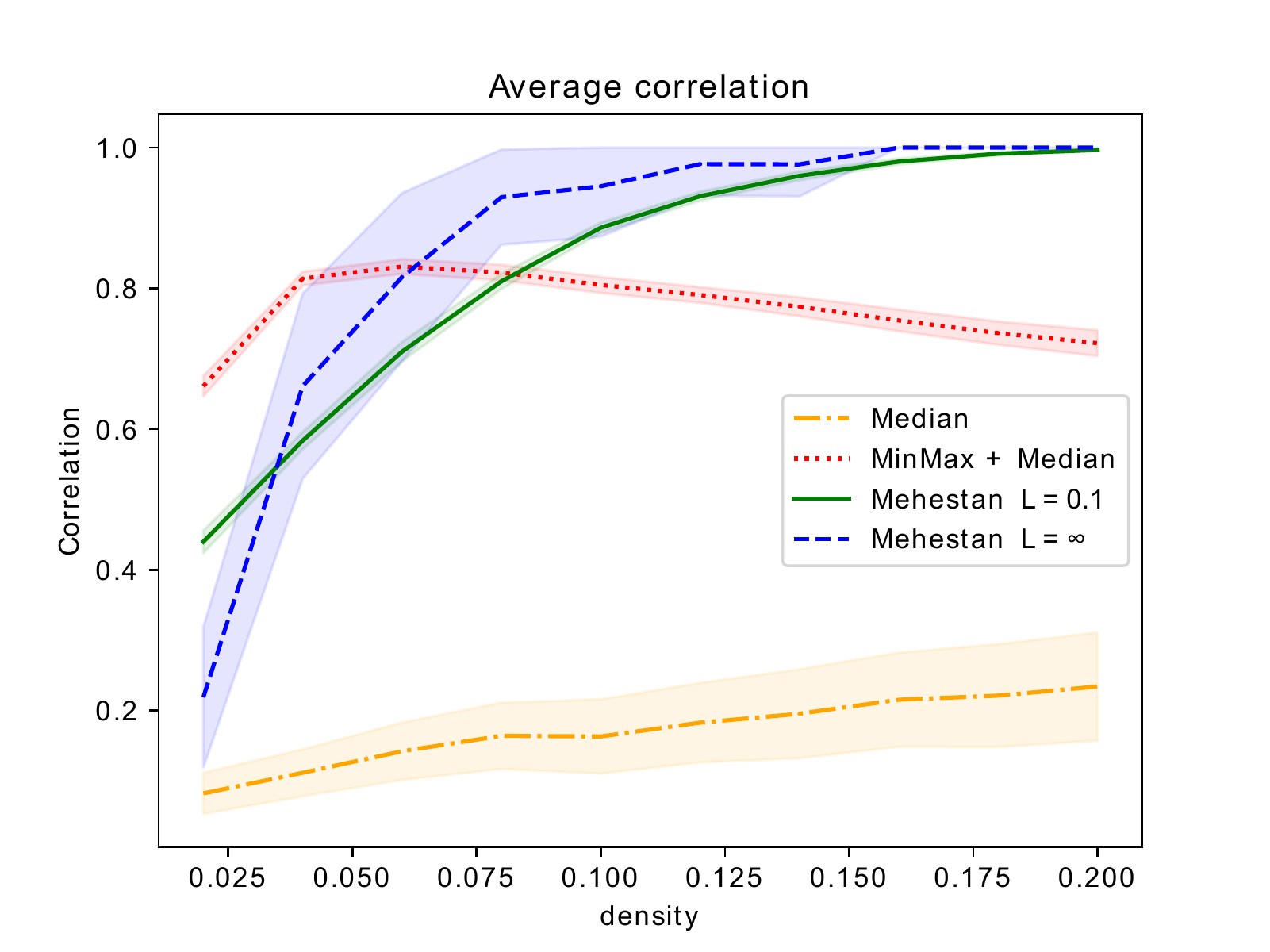}
  \caption{Performance under sparsity without malicious voters.}
  \label{fig:density}
\end{subfigure}\hfill
\begin{subfigure}{.49\textwidth}
  \centering
  \includegraphics[width=\linewidth]{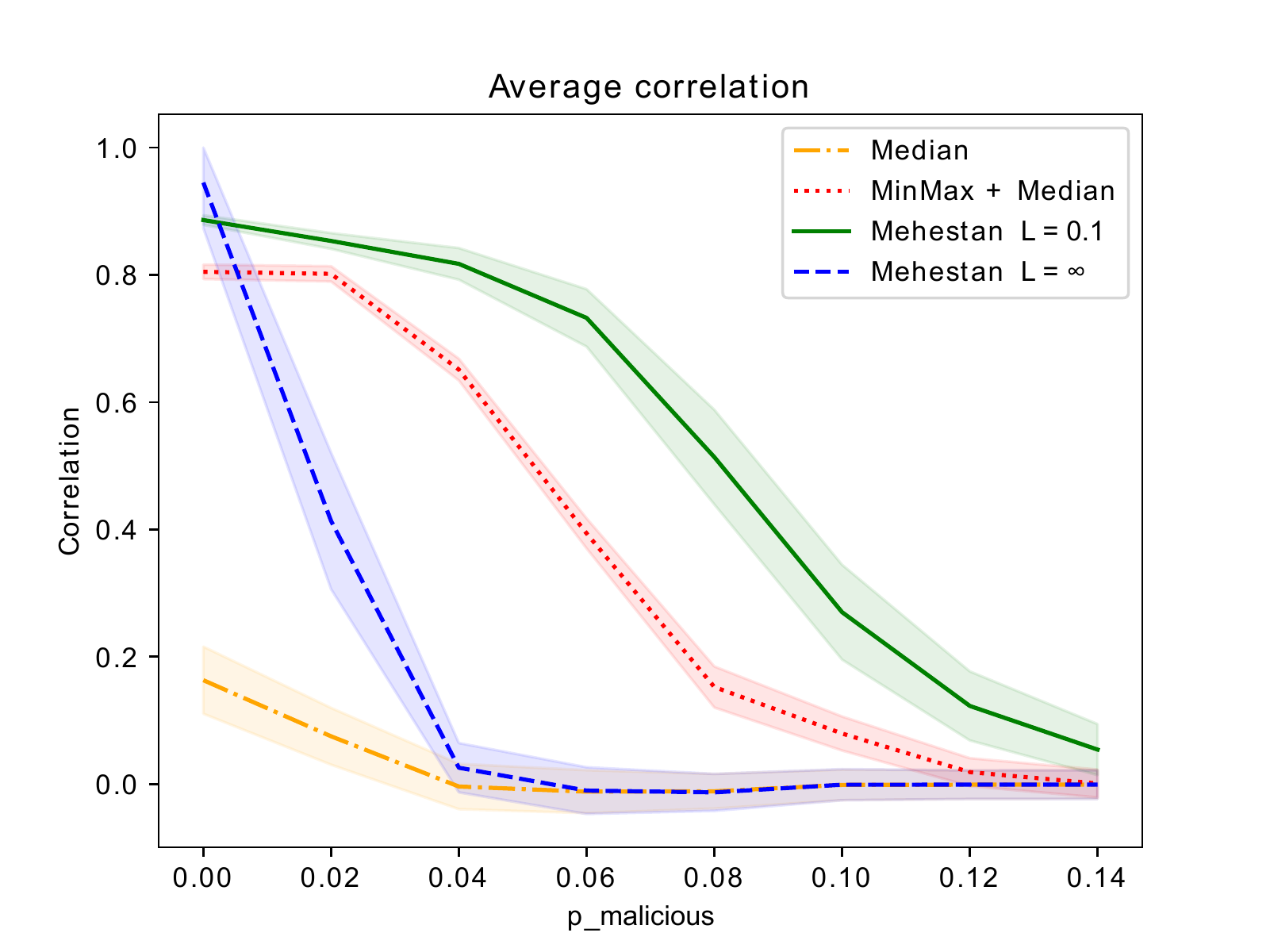}
  \caption{
  Performance with malicious voters ($\texttt{density}=0.1$). }
  
  \label{fig:sparsification}
\end{subfigure}
\vspace{-5pt}
\caption{
Performance of \mehestan{} under sparsity, with and without malicious voters.
}
\vspace{-12pt}
\label{fig:results}
\end{figure*}
\vspace{-5pt}
\subsection{Setting}

\paragraph{Data generation.} 

We generate synthetic data for $\VOTER = 150$ voters and $\ALTERNATIVE = 300$ alternatives.
We randomly draw a ground-truth score vector $\voterscore{*} \in \setR^\ALTERNATIVE$, by independently sampling the coordinates from the standard Gaussian distribution.
Each honest voter $\voter$ is assigned the score vector $\voterscore{\voter} = \scaling{\voter}^* \voterscore{*|\ALTERNATIVE_\voter} + \translation{\voter}^*$, where $\scaling{\voter}^* >0$ and  $\translation{\voter}^*$ are randomly drawn from the log-normal and normal distributions, respectively. 
To simulate sparsity, each alternative is scored by each voter following a Bernoulli trial with probability \texttt{density}. 
Additionally, to simulate biased sparsity, we remove the votes of half of the voters for the top 20\% alternatives (sorted according to $\theta_*$), and the votes of the other half of voters for the bottom 20\% alternatives.

\paragraph{Description of experiments.} Each experiment measures Pearson's correlation between the algorithms' output and the ground-truth preferences $\voterscore{*}$. 
Our experiments compare four voting algorithms: 
\emph{Median} (alternative-wise median), 
\emph{MinMax+Median} (alternative-wise median of min-max normalized scores), 
\emph{\mehestan{}} with resilience parameters $\lipschitz=0.1$ and $\lipschitz \rightarrow \infty$.
The first two algorithms serve as natural baselines (as discussed in Section~\ref{sec:unanimity}).
In the experiment reported in Figure~\ref{fig:sparsification}, we add malicious voters, whose votes are all the same, randomly drawn from the standard Gaussian. 
In Figure~\ref{fig:sparsification}, the x-axis is \texttt{p\_malicious}, which denotes the fraction of malicious voters. 
Note that this parameter takes reasonably large values in Figure~\ref{fig:sparsification}. 
Indeed, the malicious voters scores every alternative, as opposed to the honest voters who score $\approx 8\%$ of them (when $\texttt{density}=0.1$).
Each experiment is repeated $20$ times, with the seeds $1$ to $20$ for reproducibility. 
The average correlation values and the $95\%$ confidence intervals are plotted.
\subsection{Results}

Our experimental results are shown in Figure~\ref{fig:results}.
Observe on both plots that \textit{Median} fails to recover the ground-truth scores. 
This is expected, as (alternative-wise) \textit{Median} is not designed to tackle voting biases.
Below, we compare \mehestan{} with \textit{MinMax+Median}, which puts the focus on the global normalization procedure specific to \mehestan{} (see Algorithm~\ref{alg:mehestan}).
Next, we compare \mehestan{} with $\lipschitz \rightarrow \infty$ and $\lipschitz=0.1$, highlighting the impact of the resilience parameter $\lipschitz$.

\textbf{Impact of global normalization.} Figure~\ref{fig:density} shows that \mehestan{} performs well under sparsity,
and correctly recovers the unanimous preferences when the density is large enough (as guaranteed by sparse unanimity). 
 \textit{MinMax+Median} fails to do so, even for larger densities.
However, we observe that for the highest levels of sparsity (i.e. low density), \mehestan{} is less effective than \textit{MinMax+Median}. 
This can be explained by the fact that, at these densities, the search for scaling and translation factors may fail, because of the absence of comparable voters (see Section~\ref{sec:mehestan}).

\textbf{Impact of resilience parameter $\lipschitz$.} 
On the one hand, setting $\lipschitz=0.1$ only slightly hinders the performance of \mehestan{} in the absence of malicious voters~(Figure~\ref{fig:density}).
This is expected, as the non-malicious case does not require $\lipschitz$-Lipschitz resilience.
On the other hand, setting $\lipschitz=0.1$ enables better tolerance to malicious voters~(Figure~\ref{fig:sparsification}). 
Indeed, \mehestan{} with $\lipschitz \rightarrow \infty$ fares poorly in Figure~\ref{fig:sparsification}, as setting $\lipschitz \rightarrow \infty$ leaves the global normalization vulnerable to malicious manipulation.
This confirms the trade-off between resilience and sparse unanimity, as discussed in Section~\ref{subsec:theory}.

\section{RELATED WORK}
\label{sec:rel_work}
We cover closely related work below, and defer
additional related work to Appendix~\ref{sec:add-rel-work}.
\vspace{-0.3cm}
\paragraph{Social choice theory.}
The problem of \emph{sparse voting}, without malicious voters, has mainly been addressed in ordinal voting, 
i.e. voters provide incomplete \emph{rankings} or pairwise comparisons~\cite{negahban2012iterative,lu2013multi,moreno2016axiomatic,fotakis2021aggregating,imber2022approval}. 
There, voting seeks to retrieve a central ranking despite sparsity, which is very similar to sparse unanimity, 
except that the latter also recovers cardinal preferences,
which may be more suitable for some applications.
Additionally, intriguingly, \cite{DBLP:conf/atal/WangS19} proves that cardinal inputs allow constructing estimators that strictly outperform ordinal inputs,
even when considering arbitrary order-preserving input miscalibrations,
which suggests that leveraging cardinal inputs may be valuable.
The cardinal (robust) sparse voting setting has however been relatively understudied.
In this setting, the work of Meir et al.~\cite{meir2022sybil} is closest to ours, but we argue that our theoretical approach is more general.
Namely,~\cite{meir2022sybil}  
(i) does not tackle the general cardinal case, where voters provide scores for multiple alternatives;
(ii) assumes honest voters to be either passive (do not vote at all), or active (vote for all alternatives), while we allow honest voters to vote for a (strict) subset of alternatives; and
(iii) assumes the existence of a distinguished alternative, called ``reality'', which serves as an anchor to their safety and liveness properties.

\vspace{-0.3cm}
\paragraph{Robust statistics.}
A lot of prior work has provided a wide range of robust statistical estimators~\cite{morgenthaler2007survey}.
However, the theory of robust statistics has usually relied on majority-based principles.
To the best of our knowledge, our paper is the first to study Lipschitz resilience,
i.e. bounding the maximal impact of any data source,
which is arguably more adapted to a sparse setting.
Note that our algorithms rely on regularization to stabilize the estimation. 
A similar idea was previously used in signal processing~\cite{liu2019regularized}, 
in order to achieve robust mean and covariance estimation with incomplete data considering a monotone missing-data pattern.

\section{CONCLUSION}
\label{sec:conclusion}
This paper introduces the robust sparse voting problem, highlights its technical challenges, 
and presents \mehestan{}, a novel algorithm to solve it. 
Our work opens several research directions. Particularly appealing are  analyzing the strategyproofness of the system, and exploring connections with ordinal voting.
Another interesting direction is investigating properties such as order consistency and independence of irrelevant alternatives.
Overall, we regard our work as merely a first, hopefully inspiring, step towards understanding 
how a group of individuals should collaborate to securely evaluate an overwhelming amount of alternatives.

\section*{Acknowledgments}
This work was supported in part by SNSF grant 200021\_200477.
The authors are thankful to the anonymous reviewers for their constructive comments.

\bibliographystyle{plain}
\bibliography{references}

\section*{Checklist}

 \begin{enumerate}

 \item For all models and algorithms presented, check if you include:
 \begin{enumerate}
   \item A clear description of the mathematical setting, assumptions, algorithm, and/or model. [\textbf{Yes}/No/Not Applicable]
   \item An analysis of the properties and complexity (time, space, sample size) of any algorithm. [\textbf{Yes}/No/Not Applicable]
   \item (Optional) Anonymized source code, with specification of all dependencies, including external libraries. [\textbf{Yes}/No/Not Applicable]
 \end{enumerate}

 \item For any theoretical claim, check if you include:
 \begin{enumerate}
   \item Statements of the full set of assumptions of all theoretical results. [\textbf{Yes}/No/Not Applicable]
   \item Complete proofs of all theoretical results. [\textbf{Yes}/No/Not Applicable]
   \item Clear explanations of any assumptions. [\textbf{Yes}/No/Not Applicable]     
 \end{enumerate}

 \item For all figures and tables that present empirical results, check if you include:
 \begin{enumerate}
   \item The code, data, and instructions needed to reproduce the main experimental results (either in the supplemental material or as a URL). [\textbf{Yes}/No/Not Applicable]
   \item All the training details (e.g., data splits, hyperparameters, how they were chosen). [\textbf{Yes}/No/Not Applicable]
         \item A clear definition of the specific measure or statistics and error bars (e.g., with respect to the random seed after running experiments multiple times). [\textbf{Yes}/No/Not Applicable]
         \item A description of the computing infrastructure used. (e.g., type of GPUs, internal cluster, or cloud provider). [\textbf{Yes}/No/Not Applicable]
 \end{enumerate}

 \item If you are using existing assets (e.g., code, data, models) or curating/releasing new assets, check if you include:
 \begin{enumerate}
   \item Citations of the creator If your work uses existing assets. [Yes/No/\textbf{Not Applicable}]
   \item The license information of the assets, if applicable. [Yes/No/\textbf{Not Applicable}]
   \item New assets either in the supplemental material or as a URL, if applicable. [\textbf{Yes}/No/Not Applicable]
   \item Information about consent from data providers/curators. [Yes/No/\textbf{Not Applicable}]
   \item Discussion of sensible content if applicable, e.g., personally identifiable information or offensive content. [Yes/No/\textbf{Not Applicable}]
 \end{enumerate}

 \item If you used crowdsourcing or conducted research with human subjects, check if you include:
 \begin{enumerate}
   \item The full text of instructions given to participants and screenshots. [Yes/No/\textbf{Not Applicable}]
   \item Descriptions of potential participant risks, with links to Institutional Review Board (IRB) approvals if applicable. [Yes/No/\textbf{Not Applicable}]
   \item The estimated hourly wage paid to participants and the total amount spent on participant compensation. [Yes/No/\textbf{Not Applicable}]
 \end{enumerate}

 \end{enumerate}

\newpage
\onecolumn
\appendix

\section{Additional Related Work}
\label{sec:add-rel-work}

\paragraph{Social choice theory.}
This area~\cite{list2013social,endriss2017trends} is  concerned with combining individual preferences into a collective aggregate, to facilitate collective decisions.
An important result from social choice theory is Arrow's Impossibility~\cite{arrow1950difficulty},  stating that dictatorship is the only ordinal voting mechanism (i.e., voters only provide rankings instead of scores) that satisfies desirable fairness criteria. Fortunately, it is possible to escape this pessimistic result by including additional information from voters~\cite{sen1999possibility} (e.g., range voting~\cite{smith2000range}). As such, in our work, we assume that voters provide real-valued scores to alternatives (i.e., \emph{cardinal} voting). 
However, we do not require the voters to score all alternatives, nor to be all honest (i.e., non-malicious).

\noindent \textbf{Group recommender systems.}
An important use case of our work is that of group recommender systems (GRS). In contrast to a single-user recommender system (SRS)~\cite{resnick1997recommender}, a GRS~\cite{felfernig2018group} provides recommendations aimed at a group of people, rather than an individual. In practice, any algorithm used for group recommendation faces the problem of preference aggregation at some point. A categorization of the aggregation functions used in GRS, all inspired from social choice theory, was proposed in~\cite{felfernig2018algorithms}.
When the inputs of a GRS are sparse real-valued lists of ratings, especially when there could be malicious users, then our voting system can be used effectively for GRS.

In fact, there are situations where the preference aggregation techniques used in GRS can be employed to solve problems encountered in SRS.
Two such situations were discussed in~\cite{masthoff2003modeling}: (i) the cold-start problem, that is when new users join the system and the SRS has no previous data on them, and (ii) when there are multiple criteria rated for each alternative. Interestingly, single-user recommendation is another potential application of our mechanism.

\smallskip
\noindent \textbf{The scaling problem.}
When voters provide numerical ratings, preference aggregation functions need to consider the difference in users' internal rating scale ~\cite{pennock2000social,lemire2005scale,yan2013exploiting}. 
Scale invariance is then a desirable property for preference aggregation functions. 
Interestingly, it was shown in~\cite{lemire2005scale}  that some scale-invariant collaborative filtering algorithms outperform their non-invariant counterparts. The attempt to verify the scale invariance property may explain why collaborative filtering based recommender systems usually use weighted averaging to aggregate preferences~\cite{pennock2000social}.
This clearly makes such scaling-invariant solutions very vulnerable to Byzantine attacks.

A significant amount of previous work proposed solutions to address the scaling problem. 
Scale-invariant similarity metrics such as Pearson's correlation were leveraged in~\cite{lemire2005scale,su2009survey}. 
Various normalization techniques were used in~\cite{resnick1994grouplens,lemire2005scale,xie2015mathematical} to bring users' ratings to a single global scale. 
Section~\ref{sec:naive_vote} however shows that individual-based normalization techniques fail to solve the scaling problem in a sparse setting.

\smallskip
\noindent \textbf{Content recommendation with adversaries.}
Vanilla recommender systems are not immune to adversaries. 
For example, collaborative filtering is known to be vulnerable to shilling attacks~\cite{si2020shilling}, namely fake profiles and bogus ratings injection.
Meanwhile, graph-based recommender systems~\cite{fouss2007random} were shown in~\cite{fang2018poisoning} to recommend a target item a hundred times more, when only 1\% of the total users were injected fake profiles.
Most prior work investigated ways to defend against such attacks by leveraging trust (between voters). For instance,  trust network was used in~\cite{tran2009sybil} 
to limit the amount of votes collected from fake users via the max-flow concept. However, it is not clear how the quality of the vote can be evaluated.
 A trust-based SRS 
was developed in~\cite{andersen2008trust} 
(with binary recommendations) based on random walks that are incentive compatible~\cite{yu2009dsybil}.

\smallskip
\noindent \textbf{Distributed Byzantine voting.}
Voting in the presence of a limited fraction of Byzantine voters has been addressed recently in the distributed setting. This line of research tackles a generalization of the Byzantine agreement problem~\cite{lamport2019byzantine} in distributed computing. For example, the algorithms proposed in~\cite{chauhan2013democratic,tseng2017voting} yield the single winner of an election held in the presence of a fraction of Byzantine voters, using the plurality voting rule. Voters' rankings were used in~\cite{melnyk2018byzantine}  to output an aggregated ranking leveraging the Kemeny rule~\cite{kemeny1962preference}, while verifying a correctness condition.
Sparse unanimity (Definition~\ref{def:sparse-unanimity}) is closely related to the correctness properties
sought in the aforementioned works. However, in contrast, our voting mechanism tolerates sparse inputs and can utilize scores (and not just rankings).
Finally, instead of considering a constant fraction of Byzantine players, we seek $\votingresilience$-Byzantine resilience (Definition~\ref{def:byz-resilience}).

\section{Preliminaries}

\subsection{Mathematical Reminders}
\label{sec:math-reminder}

In this section, we recall the notions of Lipschitz continuity and strong convexity.
We define these notions for a general multidimensional space, although we use them for the particular one-dimensional case in the paper.
Let $d \in \setN$ be the dimension of the space $\setR^d$.
Denote by $\iprod{\cdot}{\cdot}$ the Euclidean scalar product in $\setR^d$, and by $\norm{\cdot}{2}$ the Euclidean norm.

\begin{definition}[Lipschitz continuity]
\label{def:lipschitz}
Let $L \geq 0$, and consider two metric spaces $(X, d_X)$ and $(Z, d_Z)$.
A function $g \colon X \to Z$ is $L$-Lipschitz continuous, if
\begin{equation*}
    \forall x, y \in X \mathsep d_Z(g(x), g(y)) \leq L d_X(x, y).
\end{equation*}
\end{definition}

\begin{definition}[Strong convexity]
\label{def:strong-convex}
Let $\mu \geq 0$.
A subdifferentiable function $g \colon \setR^d \to \setR$ is $\mu$-strongly convex if 
for every $x,y \in \setR^d$, and any subgradients $h_x \in \nabla g(x)$ and $h_y \in \nabla g(y)$, 
it holds that
\begin{equation*}
    \iprod{h_x - h_y}{x-y} \geq \mu \norm{x-y}{2}^2.
\end{equation*}
\end{definition}

\subsection{Generalization to Continuous Voting Rights}
\label{sec:generalization}

In this section, we generalize the setting of the main part of the paper to include continuous voting rights.
In addition to address a more general setting, which can be useful in applications~\cite{beylerian2022tournesol},
accounting for continuous voting rights yields an arguably more natural insight into the concept of Lipschitz resilience.

We start by formalizing voting rights by an assignment of a nonnegative weight $\votingrights{\voter} \geq 0$ to each voter $\voter$.
We denote $\votingrightsfamily{} = (\votingrights{1}, \ldots, \votingrights{\VOTER})$ the tuple of voting rights.
Interesting Lipschitz resilience is then defined, for fixed inputs $\voterscorefamily{}$,
a Lipschitz continuity of the map $\votingrightsfamily{} \mapsto \vote(\votingrightsfamily{}, \voterscorefamily{})$,
from $(\setR^\VOTER, \norm{\cdot}{1})$ to $(\setR^\ALTERNATIVE, \norm{\cdot}{\infty})$.
Put differently, we define Lipschitz resilience as follows.

\begin{definition}
\label{def:lipschitz-resilience}
    A vote $\vote{} : \setR^\VOTER \times X^\VOTER \rightarrow \setR^ \ALTERNATIVE$ is $\lipschitz$-Lipschitz resilient if,
    \begin{equation}
        \forall \voterscorefamily{} \in X^\VOTER \mathsep
        \forall \votingrightsfamily{}, \votingrightsfamily{}' \in \setR^\VOTER \mathsep 
        \norm{\vote{}(\votingrightsfamily{}, \voterscorefamily{}) - \vote{}(\votingrightsfamily{}', \voterscorefamily{})}{\infty}
        \leq \lipschitz \norm{\votingrightsfamily{} - \votingrightsfamily{}'}{1}.
    \end{equation}
\end{definition}

Note that, assuming that not reporting a score is tantamount to having zero voting right (which is the case for all our algorithms),
this definition clearly generalizes the one in the main part of the paper,
which amounts to flipping $\votingrights{\voter} = 1$ to $\votingrights{\voter} = 0$.

Let us also formalize the generalization of sparse unanimity to varying voting rights.
Essentially, the only modified condition is the $\VOTER_0$-scored,
which we replace by a $W_0$-scored condition 
that essentially say that the set of voters that scored an alternative must have cumulative voting right of at least $W_0$.

\begin{definition}
    An input $(\votingrightsfamily{}, \voterscorefamily{})$ is $W_0$-scored if, 
    for any alternative $\alternative \in [\ALTERNATIVE]$, 
    we have $\sum_{\voter \in \VOTER_\alternative} \votingrights{\voter} \geq W_0$.
\end{definition}

\begin{definition}
    A vote is sparsely unanimous if, 
    for all $\voterscore{*} \in \setR^\ALTERNATIVE$,
    there exists $W_0 \geq 0$ such that,
    whenever the inputs $(\votingrightsfamily{}, \voterscorefamily{})$ are $\voterscore{*}$-unanimous, comparable and $W_0$-scored,
    we have $\vote(\votingrightsfamily{}, \voterscorefamily{}) \sim \voterscore{*}$.
\end{definition}

\section{Proof of Proposition~\ref{prop:median-mean}: The Mean and Median are not Resilient}
\label{sec:app-median}

The mean and median are well-known to be generalizable to varying voting rights, 
yielding the weighted mean and the weighted median.
Note that the median may be ill-defined, as there may be multiple real numbers satisfying the corresponding properties.
In this case, we define $\median{} (\votingrightsfamily{}, \voterscorefamily{})$ to be the one closest to zero, which can be easily proven to be unique.
We recall the statement of Proposition~\ref{prop:median-mean} below for convenience.

\begin{repproposition}{prop:median-mean}
  \mean{} and \median{}
  are not Lipschitz resilient.
\end{repproposition}
\begin{proof}
First, we prove that $\averagevote$ is arbitrarily manipulable by any voter with a positive voting right.
More precisely, 
for any voter $\byzantine \in [\VOTER]$ with voting right $\votingrights{\byzantine} > 0$,
for any voting right $\votingrights{\voter}$ and score $\voterscore{\voter}$ for voters $\voter \neq \byzantine$,
and for any target score $x \in \setR$,
there exists a malicious score reporting $\voterscore{\byzantine} \in \setR$ such that 
$\averagevote(\votingrightsfamily{}, \voterscorefamily{}) = x$.
Indeed, it suffices to consider $$\voterscore{\byzantine} = \frac{1}{\votingrights{\byzantine}} \sum_{\voter \in [\VOTER]} \votingrights{\voter} x - \frac{1}{\votingrights{\byzantine}} \sum_{\voter \neq \byzantine} \votingrights{\voter} \voterscore{\voter}.$$
Arbitrare manipulability then clearly implies that $\averagevote$ cannot be Lipschitz resilient.
  
Second, we prove the assertion for $\median{}$.
  The proposition is trivial in the case $\VOTER = 1$, as $\median{}$ is then arbitrarily manipulable by the single voter.
  But to provide further insight into Lipschitz resilience, we prove it in the case where the number of voters is very large.
  Let $\lipschitz > 0$ and $\VOTER_0 \in \setN$.
  Consider $\VOTER \triangleq 2 \VOTER_0 +1$, with $\votingrights{\voter} = 1$ for all $\voter \in [\VOTER]$.
  Assume moreover that $\voterscore{\voter} \triangleq 0$ for $\voter \in [\VOTER_0]$, 
  and $\voterscore{\voter} \triangleq 2  \lipschitz$ for $\voter \in  \set{\VOTER_0+1, \VOTER_0+2, \ldots, 2\VOTER_0}$.
  Now define $\byzantine \triangleq 2 \VOTER_0+1$, i.e. the malicious voter is the last voter.
  Then the median without accounting for voter $\byzantine$ is $0$.
  But by reporting $\voterscore{\byzantine} \triangleq 2 \lipschitz$, then $\median{} (\votingrightsfamily{}, \voterscorefamily{}) = 2 \lipschitz$.
  Thus, the intervention of the malicious voter modifies the output score by $2 \lipschitz$,
  which is strictly larger than $\lipschitz$.
  Thus \median{} fails to be $\lipschitz$-Lipschitz resilient, for any value of $\lipschitz$.
  This proves the proposition.
\end{proof}
\section{Proof of Theorem~\ref{th:qrmed-br}: $\qrmedian{}$ is Lipschitz Resilient}
\label{app:qrmed}

Let $\lipschitz>0$.
We generalize $\qrmedian{\lipschitz}$ to varying voting rights as follows:
\begin{equation}
    \label{eq:def-qrmed-appendix}
  \qrmedian{\lipschitz} (\votingrightsfamily{}, \voterscorefamily{}) \triangleq \argmin_{z \in \setR} ~ 
  \set {
    \Loss_{\qrmedian{\lipschitz}} (z | \votingrightsfamily, \voterscorefamily{}) \triangleq
    \frac{1}{2 \lipschitz} z^2 + \sum_{\voter \in \VOTER^*} \votingrights{\voter} \absv{z - \voterscore{\voter}}.
  }.
\end{equation}

We recall Theorem~\ref{th:qrmed-br} below for convenience.
\begin{reptheorem}{th:qrmed-br}
$\qrmedian{\lipschitz}$ is well-defined and $\lipschitz$-Lipschitz resilient.
\end{reptheorem}
\begin{proof}
  Consider any inputs $(\votingrightsfamily{}, \voterscorefamily{})$ and $(\votingrightsfamily{}', \voterscorefamily{})$.
  Denote $q \triangleq \qrmedian{\lipschitz} (\votingrightsfamily{}, \voterscorefamily{})$
  and $q' \triangleq \qrmedian{\lipschitz} (\votingrightsfamily{}', \voterscorefamily{})$,
  and let $\Delta_{\votingrights{}} \triangleq \norm{\votingrights{} - \votingrights{}'}{1}$
  and $\Delta_q \triangleq \absv{q' - q}$.
  We aim to prove that we must have $\Delta_q \leq \lipschitz \Delta_{\votingrights{}}$.  

  Denote $\ell(z) \triangleq \Loss_{\qrmedian{\lipschitz}} (z | \votingrightsfamily{}, \voterscorefamily{})$
  and $\ell'(z) \triangleq \Loss_{\qrmedian{\lipschitz}} (z | \votingrightsfamily{}', \voterscorefamily{})$,
  Note that their difference is
  \begin{align}
    \ell(z) - \ell'(z) 
    &= \sum_{\voter \in \VOTER^*} (\votingrights{\voter} - \votingrights{\voter}') \absv{z - \voterscore{\voter}}.
  \end{align}
  The subderivatives of this difference can thus be bounded by
  \begin{align}
      \sup \absv{\partial (\ell - \ell')(z)}
      &\leq \sum_{\voter \in \VOTER^*} \absv{\votingrights{\voter} - \votingrights{\voter}'} 
        \sup \sign \left(z - \voterscore{\voter}\right) 
      \leq \sum_{\voter \in \VOTER^*} \absv{\votingrights{\voter} - \votingrights{\voter}'}
      = \norm{\votingrights{} - \votingrights{}'}{1} 
      = \Delta_{\votingrights{}}.
  \end{align}
  Thus, for any $g' \in \partial \ell'(z)$, 
  there exists $g \in \partial \ell(z)$ such that $\absv{g-g'} \leq \Delta_{\votingrights{}}$.

  Now, without loss of generality, assume that $q' > q$ (the case $q^{-\BYZANTINE} < q$ can be treated similarly).
  Now note that, since $q$ minimizes $\ell$, we must have $0 \in \partial \ell(q)$.
  Thus, in particular, $\sup \partial \ell(q) \geq 0$.
  
  Similarly, we have $\inf \partial \ell'(q') \leq 0$.
  Using what we have seen above, this implies that there must be $g \in \partial \ell(q')$ 
  such that $\absv{g - \inf \partial \ell'(q')} \leq \Delta_{\votingrights{}}$,
  which implies $g \leq \Delta_{\votingrights{}}$.
  In particular, $\inf \partial \ell(q') \leq \Delta_{\votingrights{}}$.

  Now, since $\ell$ is a sum of a quadratic term with a coefficient $(1/2\lipschitz)$, and of a convex term,
  it is clearly $(1/\lipschitz)$-strongly convex (see Definition~\ref{def:strong-convex}).
  Therefore we have
  $(\inf \partial \ell(q') - \sup \partial \ell(q)) (q' - q) \geq (q'-q)^2/ \lipschitz$,
  which implies $\inf \partial \ell(q') \geq \sup \partial \ell(q) + \Delta_q / \lipschitz \geq \Delta_q / \lipschitz$.
  Since we already showed that the left-hand side is at most $\Delta_{\votingrights{}}$,
  it follows that $\Delta_q \leq \lipschitz \Delta_{\votingrights{}}$, which is what was needed.
\end{proof}

\section{Proof of Theorem~\ref{th:brmean}: $\brmean{}$ is Resilient}
\label{app:brmean}

In this section, we provide the complete proof of the $\lipschitz$-Lipschitz resilience of $\brmean{}$.
First, we prove auxiliary properties on the clipped mean $\clippedmean{}$ operator, 
which we generalize to varying voting rights as follows:
\begin{align*}
    &\clippedmean{} (\votingrightsfamily{}, \voterscorefamily{} | \mu, \diameter) 
    \triangleq \mean (\votingrightsfamily{}, \clip(\voterscorefamily{} | \mu, \diameter)) 
    = \frac{1}{\norm{\votingrightsfamily}{1}} 
    \sum_{\voter \in [\VOTER]} \votingrights{\voter} \clip (\voterscore{\voter} | \mu, \diameter),
\end{align*}
where $\clip( x | \mu, \diameter) \triangleq \max \set{ \mu - \diameter, \min \set{ \mu + \diameter, x } }$ clips $x$ within the interval $[\mu - \diameter, \mu + \diameter]$.

\begin{lemma}
\label{lemma:clip_lipschitz_radius}
\clip{} is $1$-Lipschitz continuous with respect to the radius, i.e.
\begin{equation}
    \forall x, \mu, \diameter, \diameter' \in \setR \mathsep 
    \absv{ \clip{} (x | \mu, \diameter) - \clip{} (x | \mu, \diameter') } \leq \absv{\diameter - \diameter'}.
\end{equation}
\end{lemma}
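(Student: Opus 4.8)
The plan is to prove the claim by a direct case analysis on where the points $x$, $\mu - \diameter$, $\mu + \diameter$, $\mu - \diameter'$, $\mu + \diameter'$ sit relative to one another, but organized so that the number of cases stays small. First I would observe that without loss of generality we may assume $\diameter, \diameter' \geq 0$: indeed $\clip(x \st \mu, \diameter)$ depends on $\diameter$ only through $|\diameter|$ (since $\clip(x\st\mu,\diameter) = \max\{\mu-\diameter,\min\{\mu+\diameter,x\}\}$ and swapping the sign of $\diameter$ swaps the roles of $\mu-\diameter$ and $\mu+\diameter$ without changing the clamp to the interval $[\mu-|\diameter|,\mu+|\diameter|]$), and $||\diameter| - |\diameter'|| \leq |\diameter - \diameter'|$. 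Similarly, by translating everything by $-\mu$ we may assume $\mu = 0$, so the statement reduces to: for all $x \in \setR$ and all $\diameter, \diameter' \geq 0$, $|\clip(x\st 0,\diameter) - \clip(x\st 0, \diameter')| \leq |\diameter - \diameter'|$.

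Next I would fix $x$ and consider the function $\varphi(\diameter) \triangleq \clip(x \st 0, \diameter) = \max\{-\diameter, \min\{\diameter, x\}\}$ on $[0,\infty)$, and argue it is $1$-Lipschitz in $\diameter$. The cleanest way is to note that $\varphi(\diameter) = \sign(x)\cdot\min\{|x|,\diameter\}$ when $x \neq 0$ (and $\varphi(\diameter)=0$ when $x=0$): indeed, if $x \geq 0$ then $\min\{\diameter,x\} \in [0,x]\subseteq[-\diameter,\diameter]$ only once $\diameter \geq 0$, so $\varphi(\diameter)=\min\{\diameter,x\}$; if $x < 0$ then $\min\{\diameter,x\} = x < 0$ and clamping from below gives $\varphi(\diameter)=\max\{-\diameter,x\} = -\min\{\diameter,-x\} = -\min\{\diameter,|x|\}$. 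In both cases $\varphi(\diameter) = \sign(x)\min\{|x|,\diameter\}$. Since $\diameter \mapsto \min\{|x|,\diameter\}$ is $1$-Lipschitz (it is the pointwise minimum of two $1$-Lipschitz functions, namely $\diameter$ and the constant $|x|$), and multiplying by $\sign(x) \in \{-1,0,1\}$ does not increase the Lipschitz constant, $\varphi$ is $1$-Lipschitz. Hence $|\varphi(\diameter) - \varphi(\diameter')| \leq |\diameter - \diameter'|$, which is exactly the desired bound after undoing the normalization $\mu = 0$.

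There is no real obstacle here; the only thing requiring a little care is the reduction to $\mu = 0$ and $\diameter,\diameter'\geq 0$, and the verification of the identity $\clip(x\st 0,\diameter) = \sign(x)\min\{|x|,\diameter\}$, which is a short case split on the sign of $x$. An alternative, if one prefers to avoid the closed form, is to bound the derivative: for fixed $x \neq 0$ and $|\diameter| \neq |x|$ the map $\diameter \mapsto \clip(x\st\mu,\diameter)$ is differentiable with derivative in $\{-1,0,1\}$ (it equals $x$, constant, on $|\diameter| > |x-\mu|$, and equals $\mu \pm \diameter$ otherwise), so it is $1$-Lipschitz by the mean value theorem applied piecewise; the finitely many non-differentiability points do not matter since the function is continuous. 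Either route is routine, so I would present the closed-form argument as it is the shortest and most transparent.
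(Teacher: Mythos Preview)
Your argument is correct for $\diameter, \diameter' \geq 0$ and takes a different route from the paper. The paper fixes without loss of generality $\diameter' \geq \diameter$ and does a direct case split on the position of $x$ relative to the interval $[\mu-\diameter,\mu+\diameter]$; you instead translate to $\mu = 0$, derive the closed form $\clip(x\mid 0,\diameter) = \sign(x)\min\{|x|,\diameter\}$, and invoke the $1$-Lipschitzness of $\diameter \mapsto \min\{|x|,\diameter\}$. Your approach is a little tidier in that it avoids enumerating cases; the paper's is more hands-on but equally elementary.

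One minor correction: your reduction to $\diameter, \diameter' \geq 0$ relies on the claim that $\clip(x\mid\mu,\diameter)$ depends on $\diameter$ only through $|\diameter|$, which is false. For $\diameter < 0$ one has $\mu+\diameter < \mu-\diameter$, and then $\max\{\mu-\diameter,\min\{\mu+\diameter,x\}\} = \mu-\diameter$ for \emph{every} $x$, which is not the clamp to $[\mu-|\diameter|,\mu+|\diameter|]$. This does not actually matter: the paper's own proof also tacitly assumes nonnegative radii, and only nonnegative radii ever appear in the paper.
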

\begin{proof}
Let $x, \mu, \diameter, \diameter' \in \setR$.
Without loss of generality, assume $\diameter' \geq \diameter$.
If $x \in [\mu - \diameter, \mu + \diameter]$, then $x \in [\mu - \diameter', \mu + \diameter']$, 
and hence $\clip{} (x | \mu, \diameter) = x = \clip{} (x | \mu, \diameter')$.
Thus, the statement clearly holds.
Otherwise, if $x > \mu + \diameter$, then $\clip{} (x | \mu, \diameter) = \mu + \diameter$ and 
$\clip{} (x | \mu, \diameter') = \min \set{ \mu + \diameter', x } \in [\mu + \diameter, \mu + \diameter']$.
We then have $\absv{ \clip{} (x | \mu, \diameter) - \clip{} (x | \mu, \diameter') } \leq (\mu + \diameter') - (\mu + \diameter) = \absv{\diameter - \diameter'}$.
The case $x < \mu + \diameter$ is treated similarly. This concludes the proof.
\end{proof}

\begin{lemma}
\label{lemma:clippedmean_lipschitz_radius}
\clippedmean{} is $1$-Lipschitz continuous with respect to the radius, i.e.
\begin{equation}
    \forall \votingrightsfamily{}, \voterscorefamily{} \mathsep 
    \forall \mu, \diameter, \diameter' \in \setR \mathsep 
    \absv{ \clippedmean{} (\votingrightsfamily{}, \voterscorefamily{} | \mu, \diameter) - \clippedmean{} (\votingrightsfamily{}, \voterscorefamily{} | \mu, \diameter') } 
    \leq \absv{\diameter - \diameter'}.
\end{equation}
\end{lemma}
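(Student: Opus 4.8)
The plan is to reduce this to the pointwise bound already established in Lemma~\ref{lemma:clip_lipschitz_radius} by exploiting that $\clippedmean{}$ is just a \emph{convex combination} (weighted average with nonnegative weights summing, after normalization by $\norm{\votingrightsfamily{}}{1}$, to one) of the clipped values $\clip(\voterscore{\voter} \mid \mu, \diameter)$. Fix $\votingrightsfamily{}, \voterscorefamily{}$ and $\mu, \diameter, \diameter'$, and assume $\norm{\votingrightsfamily{}}{1} > 0$ so that $\clippedmean{}$ is well-defined (otherwise there is nothing to prove).

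First I would write the difference explicitly, using linearity of the weighted mean in its second argument:
\begin{equation*}
    \clippedmean{}(\votingrightsfamily{}, \voterscorefamily{} \mid \mu, \diameter) - \clippedmean{}(\votingrightsfamily{}, \voterscorefamily{} \mid \mu, \diameter') = \frac{1}{\norm{\votingrightsfamily{}}{1}} \sum_{\voter \in [\VOTER]} \votingrights{\voter} \left( \clip(\voterscore{\voter} \mid \mu, \diameter) - \clip(\voterscore{\voter} \mid \mu, \diameter') \right).
\end{equation*}
Then I would take absolute values, push the absolute value inside the sum by the triangle inequality (legitimate since each $\votingrights{\voter} \geq 0$), and apply Lemma~\ref{lemma:clip_lipschitz_radius} termwise to bound each $\absv{\clip(\voterscore{\voter} \mid \mu, \diameter) - \clip(\voterscore{\voter} \mid \mu, \diameter')}$ by $\absv{\diameter - \diameter'}$. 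This yields
\begin{equation*}
    \absv{\clippedmean{}(\votingrightsfamily{}, \voterscorefamily{} \mid \mu, \diameter) - \clippedmean{}(\votingrightsfamily{}, \voterscorefamily{} \mid \mu, \diameter')} \leq \frac{1}{\norm{\votingrightsfamily{}}{1}} \sum_{\voter \in [\VOTER]} \votingrights{\voter} \absv{\diameter - \diameter'} = \absv{\diameter - \diameter'},
\end{equation*}
using $\sum_\voter \votingrights{\voter} = \norm{\votingrightsfamily{}}{1}$ (again because the weights are nonnegative).

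There is essentially no obstacle here: the only points requiring any care are the nonnegativity of the voting rights (needed both to move the absolute value inside the sum and to identify $\sum_\voter \votingrights{\voter}$ with $\norm{\votingrightsfamily{}}{1}$) and the degenerate case $\norm{\votingrightsfamily{}}{1} = 0$, which is excluded by well-definedness. The same argument, incidentally, will also give $1$-Lipschitz continuity of $\clippedmean{}$ in the center $\mu$ once the analogous pointwise statement for $\clip$ in $\mu$ is available, which is presumably the next lemma on the road to Theorem~\ref{th:brmean}.
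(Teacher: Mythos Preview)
Your proof is correct and follows essentially the same approach as the paper: both expand the difference as a weighted sum of pointwise differences, apply the triangle inequality, invoke Lemma~\ref{lemma:clip_lipschitz_radius} termwise, and use $\sum_{\voter} \votingrights{\voter} = \norm{\votingrightsfamily{}}{1}$. Your remark that the same template yields the Lipschitz continuity in the center $\mu$ is also on point; that is exactly Lemma~\ref{lemma:clippedmean_lipschitz_center} in the paper.
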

\begin{proof}
Let $\mu, \diameter, \diameter' \in \setR$.
By triangle inequality and Lemma~\ref{lemma:clip_lipschitz_radius}, we have
\begin{align*}
    &\absv{ \clippedmean{} (\votingrightsfamily{}, \voterscorefamily{} | \mu, \diameter) - \clippedmean{} (\votingrightsfamily{}, \voterscorefamily{} | \mu, \diameter') } =\\ 
    &\qquad \qquad \absv{ \frac{1}{\norm{\votingrightsfamily}{1}} 
    \sum_{\voter \in [\VOTER]} \votingrights{\voter} \clip (\voterscore{\voter} | \mu, \diameter) - \frac{1}{\norm{\votingrightsfamily}{1}} 
    \sum_{\voter \in [\VOTER]} \votingrights{\voter} \clip (\voterscore{\voter} | \mu, \diameter') }\\
&\leq \frac{1}{\norm{\votingrightsfamily{}}{1}} \sum \votingrights{\voter} \absv{\clip{} (\voterscore{\voter} | \mu, \diameter) - \clip{} (\voterscore{\voter} | \mu, \diameter') } \\
&\leq \frac{1}{\norm{\votingrightsfamily{}}{1}} \sum \votingrights{\voter} \absv{ \diameter - \diameter' } = \absv{ \diameter - \diameter' }.
\end{align*}
This concludes the proof.
\end{proof}

\begin{lemma}
\label{lemma:clip_lipschitz_center}
\clip{} is $1$-Lipschitz continuous with respect to the center, i.e.
\begin{equation}
    \forall x, \diameter, \mu, \mu' \in \setR \mathsep 
    \absv{ \clip{} (x | \mu, \diameter) - \clip{} (x | \mu', \diameter) } \leq \absv{\mu - \mu'}.
\end{equation}
\end{lemma}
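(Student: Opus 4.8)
The plan is to mirror the style of Lemma~\ref{lemma:clip_lipschitz_radius}, but to avoid an exhaustive case split I would exploit two elementary monotonicity facts together with a translation identity for $\clip$. Fix $x, \diameter \in \setR$ and, without loss of generality, assume $\mu' \geq \mu$; set $\delta \triangleq \mu' - \mu \geq 0$. The goal is the two-sided bound
$\clip(x \mid \mu, \diameter) \leq \clip(x \mid \mu', \diameter) \leq \clip(x \mid \mu, \diameter) + \delta$,
which immediately gives $\absv{\clip(x \mid \mu, \diameter) - \clip(x \mid \mu', \diameter)} \leq \delta = \absv{\mu - \mu'}$.

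First I would record that $t \mapsto \clip(t \mid \mu, \diameter)$ is nondecreasing, being the composition of the nondecreasing maps $t \mapsto \max\{\mu-\diameter, t\}$ and $t \mapsto \min\{\mu+\diameter, t\}$. Combined with monotonicity of $\mu \mapsto \clip(x \mid \mu, \diameter)$, this yields the left inequality. For that monotonicity: if $\diameter < 0$ one checks directly that $\min\{\mu+\diameter, x\} \leq \mu + \diameter < \mu - \diameter$, so $\clip(x \mid \mu, \diameter) = \mu - \diameter$, which is nondecreasing in $\mu$; if $\diameter \geq 0$ then $\mu - \diameter \leq \mu + \diameter$, so one may rewrite $\clip(x \mid \mu, \diameter) = \min\{\mu + \diameter, \max\{\mu - \diameter, x\}\}$, a minimum of two maps nondecreasing in $\mu$, hence nondecreasing in $\mu$. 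This settles $\clip(x \mid \mu, \diameter) \leq \clip(x \mid \mu', \diameter)$.

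For the right inequality I would use the identity $\clip(x \mid \mu, \diameter) + \delta = \clip(x + \delta \mid \mu + \delta, \diameter) = \clip(x+\delta \mid \mu', \diameter)$, valid for every real $\diameter$ because adding a constant commutes with $\max$ and $\min$: $\max\{a, \min\{b, y\}\} + \delta = \max\{a+\delta, \min\{b+\delta, y+\delta\}\}$. Since $x \leq x + \delta$ and $\clip$ is nondecreasing in its first argument, $\clip(x \mid \mu', \diameter) \leq \clip(x + \delta \mid \mu', \diameter) = \clip(x \mid \mu, \diameter) + \delta$, which closes the bound.

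The only delicate point is the monotonicity-in-$\mu$ claim, since the statement permits $\diameter \in \setR$ and not merely $\diameter \geq 0$; the regime $\diameter < 0$ degenerates ($\clip$ becomes the constant $\mu - \diameter$ in $x$) and must be handled separately, but is trivial, so I do not expect a real obstacle. A more pedestrian alternative, closely following Lemma~\ref{lemma:clip_lipschitz_radius}, is to assume $\mu' \geq \mu$ and split on the position of $x$ relative to the four thresholds $\mu \pm \diameter$ and $\mu' \pm \diameter$; in each resulting region both clipped values are explicit affine functions of the parameters and the inequality $\absv{ \clip(x \mid \mu, \diameter) - \clip(x \mid \mu', \diameter)} \leq \absv{\mu - \mu'}$ is immediate. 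I would present the translation-identity proof as the main argument and keep the case analysis in reserve.
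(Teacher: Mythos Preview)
Your proof is correct and takes a genuinely different route from the paper. The paper proceeds by an exhaustive case analysis: assuming $\mu \leq \mu'$, it distinguishes whether the two clip intervals $[\mu-\diameter,\mu+\diameter]$ and $[\mu'-\diameter,\mu'+\diameter]$ overlap, and in each regime partitions $\setR$ into five subintervals determined by the four thresholds $\mu\pm\diameter$, $\mu'\pm\diameter$, checking the bound in every region. Your argument instead isolates three structural facts about $\clip$ --- monotonicity in the first argument, monotonicity in the center, and the translation identity $\clip(x\mid\mu,\diameter)+\delta=\clip(x+\delta\mid\mu+\delta,\diameter)$ --- and combines them into a two-sided sandwich that avoids any interval bookkeeping. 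This is cleaner and more conceptual: the same reasoning would apply verbatim to any translation-equivariant nondecreasing operator, and it also handles the degenerate regime $\diameter<0$ explicitly (which the paper's proof tacitly assumes away, since for $\diameter<0$ some of its ``intervals'' are empty or out of order). The paper's approach buys only directness; your approach buys generality and brevity. Your reserve plan, the pedestrian case split, is essentially the paper's own proof.
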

\begin{proof}
Let $x, \mu, \mu', \diameter \in \setR$.
Without loss of generality, assume $\mu \leq \mu'$.

Suppose for now that $\mu + \diameter \leq \mu' - \diameter$.
We then consider the five (possibly empty) intervals $(-\infty, \mu - \diameter]$, $[\mu - \diameter, \mu + \diameter]$, 
$[\mu + \diameter, \mu' - \diameter]$, $[\mu' - \diameter, \mu' + \diameter]$ and $[\mu' + \diameter, +\infty)$, which cover $\setR$.
If $x$ is in the first interval, then $\clip{} (x | \mu, \diameter) = \mu - \diameter$ and $\clip{} (x | \mu', \diameter) = \mu' - \diameter$, 
and their absolute difference is equal to $\absv{\mu-\mu'}$.
If $x$ is in the second or third interval, we have $\clip{} (x | \mu, \diameter) \geq \mu - \diameter$ and $\clip{} (x | \mu', \diameter) = \mu' - \diameter$,
and their absolute difference is thus at most $\absv{\mu-\mu'}$.
If $x$ is in the fourth interval, then we have $\clip{} (x | \mu, \diameter) = \mu + \diameter$ and $\clip{} (x | \mu', \diameter) \leq \mu' + \diameter$,
and their absolute difference is thus at most $\absv{\mu-\mu'}$.
Finally, the fifth interval is akin to the first interval.

Now assume that $\mu + \diameter \geq \mu' - \diameter$.
We now consider the five (possibly empty) intervals $(-\infty, \mu - \diameter]$, $[\mu - \diameter, \mu' - \diameter]$, 
$[\mu' - \diameter, \mu + \diameter]$, $[\mu + \diameter, \mu' + \diameter]$ and $[\mu' + \diameter, +\infty)$, which cover $\setR$.
The first and fifth intervals are still treated similarly as before.
Now assume that $x$ is in the second interval.
Then $\clip{} (x | \mu, \diameter) \geq \mu - \diameter$ and $\clip{} (x | \mu', \diameter) = \mu' - \diameter$, 
which implies that their absolute difference is at most $\absv{\mu-\mu'}$.
The fourth interval is treated symmetrically.
Finally, when $x$ is in the third interval, then $\clip{} (x | \mu, \diameter) = \clip{} (x | \mu', \diameter)$,
and their absolute difference is zero.
In all cases, the inequality of the lemma holds.
This concludes the proof.
\end{proof}

\begin{lemma}
\label{lemma:clippedmean_lipschitz_center}
\clippedmean{} is 1-Lipschitz continuous with respect to the center, i.e.
\begin{equation}
    \forall \votingrightsfamily{}, \voterscorefamily{} \mathsep 
    \forall \diameter, \mu, \mu' \in \setR \mathsep 
    \absv{ \clippedmean{} (\votingrightsfamily{}, \voterscorefamily{} | \mu, \diameter) - \clippedmean{} (\votingrightsfamily{}, \voterscorefamily{} | \mu', \diameter) } 
    \leq \absv{\mu - \mu'}.
\end{equation}
\end{lemma}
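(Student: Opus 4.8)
The plan is to reduce the statement for $\clippedmean{}$ to the already-established pointwise bound for $\clip{}$, exactly mirroring the proof of Lemma~\ref{lemma:clippedmean_lipschitz_radius} but using Lemma~\ref{lemma:clip_lipschitz_center} in place of Lemma~\ref{lemma:clip_lipschitz_radius}. Concretely, fix $\votingrightsfamily{}, \voterscorefamily{}$ and $\diameter, \mu, \mu' \in \setR$. Writing out both clipped means and using the triangle inequality inside the convex combination gives
\begin{align*}
    \absv{ \clippedmean{} (\votingrightsfamily{}, \voterscorefamily{} | \mu, \diameter) - \clippedmean{} (\votingrightsfamily{}, \voterscorefamily{} | \mu', \diameter) }
    &\leq \frac{1}{\norm{\votingrightsfamily{}}{1}} \sum_{\voter \in [\VOTER]} \votingrights{\voter} \absv{\clip{} (\voterscore{\voter} | \mu, \diameter) - \clip{} (\voterscore{\voter} | \mu', \diameter)}.
\end{align*}

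Then I would apply Lemma~\ref{lemma:clip_lipschitz_center} termwise, which bounds each $\absv{\clip{} (\voterscore{\voter} | \mu, \diameter) - \clip{} (\voterscore{\voter} | \mu', \diameter)}$ by $\absv{\mu - \mu'}$. Since $\sum_{\voter} \votingrights{\voter} = \norm{\votingrightsfamily{}}{1}$, the weights sum to one after normalization, so the right-hand side collapses to $\absv{\mu - \mu'}$, which is exactly the claim. (As in the surrounding lemmas one should implicitly assume $\norm{\votingrightsfamily{}}{1} > 0$ so the clipped mean is defined; the $\voterscorefamily{}$ should really be $\voterscorefamily{}$ restricted to reported entries, but this matches the paper's convention and does not affect the argument.)

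There is essentially no obstacle here — the real work was done in Lemma~\ref{lemma:clip_lipschitz_center}, whose proof handled the case analysis on the relative positions of the clipping windows $[\mu - \diameter, \mu + \diameter]$ and $[\mu' - \diameter, \mu' + \diameter]$. The only thing to be a little careful about is that the two clipped means are built from the \emph{same} weights and \emph{same} data, with only the center shifted, so the difference genuinely factors through the per-voter clip differences and no cross terms appear. Everything else is a one-line triangle inequality plus normalization, so I expect the proof to be three or four lines long, strictly parallel to that of Lemma~\ref{lemma:clippedmean_lipschitz_radius}.
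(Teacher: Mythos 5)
Your proposal is correct and follows exactly the same route as the paper: triangle inequality inside the weighted average, the termwise bound from Lemma~\ref{lemma:clip_lipschitz_center}, and the observation that the normalized weights sum to one. Nothing to add.
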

\begin{proof}
Let $\diameter, \mu, \mu' \in \setR$.
By triangle inequality and Lemma~\ref{lemma:clip_lipschitz_center}, we have
\begin{align*}
    &\absv{ \clippedmean{} (\votingrightsfamily{}, \voterscorefamily{} | \mu, \diameter) - \clippedmean{} (\votingrightsfamily{}, \voterscorefamily{} | \mu', \diameter) } =\\ 
    &\qquad \qquad \absv{ \frac{1}{\norm{\votingrightsfamily}{1}} 
    \sum_{\voter \in [\VOTER]} \votingrights{\voter} \clip (\voterscore{\voter} | \mu, \diameter) - \frac{1}{\norm{\votingrightsfamily}{1}} 
    \sum_{\voter \in [\VOTER]} \votingrights{\voter} \clip (\voterscore{\voter} | \mu', \diameter) }\\
&\leq \frac{1}{\norm{\votingrightsfamily{}}{1}} \sum \votingrights{\voter} \absv{\clip{} (\voterscore{\voter} | \mu, \diameter) - \clip{} (\voterscore{\voter} | \mu', \diameter) } \\
&\leq \frac{1}{\norm{\votingrightsfamily{}}{1}} \sum \votingrights{\voter} \absv{ \mu - \mu' } = \absv{ \mu - \mu' }.
\end{align*}
This concludes the proof.
\end{proof}

\begin{lemma}
\label{lemma:clippedmean_byzantine_resilience}
If $\votingrightsfamilybis \geq 0$, then
$\absv{ \clippedmean{} (\votingrightsfamily{}, \voterscorefamily{} | \mu, \diameter) - \clippedmean{} (\votingrightsfamily + \votingrightsfamilybis{}, \voterscorefamily{} | \mu, \diameter) } \leq 2 \frac{\norm{\votingrightsfamilybis}{1}}{\norm{\votingrightsfamily}{1}} \diameter$.
\end{lemma}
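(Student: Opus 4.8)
The plan is to bound the effect of adding the extra voting rights $\votingrightsfamilybis$ by splitting it into two contributions: the change in the numerator of the clipped mean (coming from the new mass $\votingrightsfamilybis$ placed on clipped scores lying in $[\mu-\diameter,\mu+\diameter]$), and the change coming from renormalizing by the larger total weight $\norm{\votingrightsfamily+\votingrightsfamilybis}{1} = \norm{\votingrightsfamily}{1} + \norm{\votingrightsfamilybis}{1}$. Write $W \triangleq \norm{\votingrightsfamily}{1}$, $W' \triangleq \norm{\votingrightsfamilybis}{1}$, and $c_\voter \triangleq \clip(\voterscore{\voter}\st\mu,\diameter) \in [\mu-\diameter,\mu+\diameter]$. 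Then
\begin{align*}
\clippedmean{}(\votingrightsfamily{},\voterscorefamily{}\st\mu,\diameter) &= \frac{1}{W}\sum_\voter \votingrights{\voter} c_\voter, \qquad \clippedmean{}(\votingrightsfamily+\votingrightsfamilybis{},\voterscorefamily{}\st\mu,\diameter) = \frac{1}{W+W'}\sum_\voter (\votingrights{\voter}+\votingrightsbis_\voter) c_\voter.
\end{align*}

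The key algebraic step is to write the difference as
\begin{align*}
\clippedmean{}(\votingrightsfamily+\votingrightsfamilybis{},\voterscorefamily{}\st\mu,\diameter) - \clippedmean{}(\votingrightsfamily{},\voterscorefamily{}\st\mu,\diameter) = \frac{1}{W+W'}\sum_\voter \votingrightsbis_\voter c_\voter - \frac{W'}{W(W+W')}\sum_\voter \votingrights{\voter} c_\voter,
\end{align*}
which regroups to $\frac{1}{W+W'}\sum_\voter \votingrightsbis_\voter c_\voter - \frac{W'}{W+W'}\,\clippedmean{}(\votingrightsfamily{},\voterscorefamily{}\st\mu,\diameter)$. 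Now observe that both $\frac{1}{W'}\sum_\voter \votingrightsbis_\voter c_\voter$ (the weighted average of the $c_\voter$ under $\votingrightsfamilybis$) and $\clippedmean{}(\votingrightsfamily{},\voterscorefamily{}\st\mu,\diameter)$ lie in the interval $[\mu-\diameter,\mu+\diameter]$, since each $c_\voter$ does and a convex combination stays in the interval. Hence the difference of these two quantities has absolute value at most $2\diameter$. Since the displayed expression equals $\frac{W'}{W+W'}$ times that difference, we get a bound of $\frac{W'}{W+W'}\cdot 2\diameter \leq \frac{2W'}{W}\diameter = 2\frac{\norm{\votingrightsfamilybis}{1}}{\norm{\votingrightsfamily}{1}}\diameter$, using $W+W' \geq W$.

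The main obstacle is purely bookkeeping: getting the regrouping of the two fractions right and recognizing that the right object to compare is not $\sum \votingrightsbis_\voter c_\voter$ against $\sum \votingrights{\voter} c_\voter$ directly, but rather the two \emph{normalized} averages, both of which are trapped in an interval of width $2\diameter$. One should also handle the degenerate case $W' = 0$ (trivial, both sides zero) and implicitly assume $W > 0$ so that $\clippedmean{}$ is well-defined. Everything else — the triangle inequality, the convex-combination-stays-in-interval fact — is routine, and no appeal to the earlier Lipschitz lemmas is needed; this lemma is about perturbing the weights rather than the center or radius.
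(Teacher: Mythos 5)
Your proof is correct and follows essentially the same route as the paper's: both decompose the difference into $\tfrac{\norm{\votingrightsfamilybis}{1}}{\norm{\votingrightsfamily}{1}+\norm{\votingrightsfamilybis}{1}}$ times the gap between the $\votingrightsfamily$-weighted and $\votingrightsfamilybis$-weighted averages of the clipped scores, bound that gap by $2\diameter$ (the paper via centering at $\mu$ and the triangle inequality, you via the convex-combination observation), and then relax $\norm{\votingrightsfamily}{1}+\norm{\votingrightsfamilybis}{1} \geq \norm{\votingrightsfamily}{1}$. The bookkeeping of the degenerate cases $\norm{\votingrightsfamilybis}{1}=0$ and $\norm{\votingrightsfamily}{1}>0$ is a nice touch but does not change the substance.
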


\begin{proof}
Denote $\bm{y} \triangleq \clip(\voterscorefamily{} | \mu, \diameter)$, and $\bar y \triangleq \clippedmean{} (\votingrightsfamily{}, \voterscorefamily{} | \mu, \diameter)$.
Then, we have
\begin{align*}
    &\absv{ \clippedmean{} (\votingrightsfamily{}, \voterscorefamily{} | \mu, \diameter) - \clippedmean{} (\votingrightsfamily + \votingrightsfamilybis{}, \voterscorefamily{} | \mu, \diameter) } \nonumber \\
    & =\absv{ (\bar y - \mu) - \frac{1}{\norm{\votingrightsfamily + \votingrightsfamilybis}{1}} \sum_{\voter \in [\VOTER]} (\votingrights{\voter} + \votingrightsbis_\voter) (y_\voter - \mu) } \\
    &= \absv{ (\bar y - \mu) - \frac{\norm{\votingrightsfamily}{1}}{\norm{\votingrightsfamily}{1} + \norm{\votingrightsfamilybis}{1}} (\bar y - \mu) - \frac{1}{\norm{\votingrightsfamily}{1} + \norm{\votingrightsfamilybis}{1}} \sum_{\voter \in [\VOTER]} \votingrightsbis_\voter (y_\voter - \mu) } \\
    &\leq \left( 1 - \frac{\norm{\votingrightsfamily}{1}}{\norm{\votingrightsfamily}{1} + \norm{\votingrightsfamilybis}{1}} \right) \absv{ \bar y - \mu } 
    + \frac{1}{\norm{\votingrightsfamily}{1} + \norm{\votingrightsfamilybis}{1}} \sum_{\voter \in [\VOTER]} \votingrightsbis_\voter \absv{ y_\voter - \mu } \\
    &\leq \left( 1 - \frac{\norm{\votingrightsfamily}{1}}{\norm{\votingrightsfamily}{1} + \norm{\votingrightsfamilybis}{1}} \right) \diameter 
    + \frac{1}{\norm{\votingrightsfamily}{1} + \norm{\votingrightsfamilybis}{1}} \sum_{\voter \in [\VOTER]} \votingrightsbis_\voter \diameter \\
    &\leq \frac{2 \norm{\votingrightsfamilybis{}}{1}}{\norm{\votingrightsfamily}{1} + \norm{\votingrightsfamilybis}{1}} \diameter 
    \leq \frac{2 \norm{\votingrightsfamilybis{}}{1}}{\norm{\votingrightsfamily}{1}} \diameter,
\end{align*}
which concludes the proof.
\end{proof}

Let $\lipschitz > 0$.
We generalize $\brmean{\lipschitz}$ to varying voting rights as follows:
\begin{equation*}
    \brmean{\lipschitz} (\votingrightsfamily{}, \voterscorefamily{} ) 
    \triangleq \clippedmean{} \left( \votingrightsfamily{}, \voterscorefamily{} \st \qrmedian{\lipschitz / 4} (\votingrightsfamily{}, \voterscorefamily{} ), \frac{\norm{\lipschitz \votingrightsfamily{}}{1}}{4} \right).
\end{equation*}
We finally recall and prove Theorem~\ref{th:brmean} below.

\begin{reptheorem}{th:brmean}
$\brmean{\lipschitz}$ is $\lipschitz$-resilient.
Moreover, if there exists $\diameter>0$ such that $\norm{\votingrightsfamily}{1} \geq 8 \diameter / \lipschitz$ and $\voterscore{\voter} \in [-\diameter, \diameter]$ for all $\voter$,
then $\brmean{\lipschitz} (\votingrightsfamily{}, \voterscorefamily{}) = \mean{} (\votingrightsfamily{}, \voterscorefamily{})$.
\end{reptheorem}

\begin{proof}
Denote $q_{-\BYZANTINE} \triangleq \qrmedian{\lipschitz / 4} (\votingrightsfamily_{-\BYZANTINE}, \voterscorefamily{} )$
and $q \triangleq \qrmedian{\lipschitz / 4} (\votingrightsfamily{}, \voterscorefamily{} )$.
Then, using the triangle inequality, we have
\begin{align*}
    & \absv{ \brmean{\lipschitz} (\votingrightsfamily_{-\BYZANTINE}, \voterscorefamily{}) - \brmean{\lipschitz} (\votingrightsfamily, \voterscorefamily{}) } \nonumber \\
    & = \absv{ 
        \clippedmean{} \left( \votingrightsfamily_{-\BYZANTINE}, \voterscorefamily{} \st q_{-\BYZANTINE}, \frac{\lipschitz \norm{\votingrightsfamily{}_{-\BYZANTINE}}{1}}{4} \right) 
        - \clippedmean{} \left( \votingrightsfamily, \voterscorefamily{} \st q, \frac{\lipschitz \norm{\votingrightsfamily{}}{1}}{4} \right)
    }\\
    & \leq \absv{ 
        \clippedmean{} \left( \votingrightsfamily_{-\BYZANTINE}, \voterscorefamily{} \st q_{-\BYZANTINE}, \frac{\lipschitz \norm{\votingrightsfamily{}_{-\BYZANTINE}}{1}}{4} \right) 
        - \clippedmean{} \left( \votingrightsfamily_{-\BYZANTINE}, \voterscorefamily{} \st q_{-\BYZANTINE}, \frac{\lipschitz \norm{\votingrightsfamily{}}{1}}{4} \right)
    } \nonumber \\ 
    &\quad+ \absv{ 
        \clippedmean{} \left( \votingrightsfamily_{-\BYZANTINE}, \voterscorefamily{} \st q_{-\BYZANTINE}, \frac{\lipschitz \norm{\votingrightsfamily{}}{1}}{4} \right) 
        - \clippedmean{} \left( \votingrightsfamily_{-\BYZANTINE}, \voterscorefamily{} \st q, \frac{\lipschitz \norm{\votingrightsfamily{}}{1}}{4} \right)
    } \nonumber \\ 
    &\quad+ \absv{ 
        \clippedmean{} \left( \votingrightsfamily{}_{-\BYZANTINE}, \voterscorefamily{} \st q, \frac{\lipschitz \norm{\votingrightsfamily{}}{1}}{4} \right) 
        - \clippedmean{} \left( \votingrightsfamily{}, \voterscorefamily{} \st q, \frac{\lipschitz \norm{\votingrightsfamily{}}{1}}{4} \right)
    } \\
    & \leq \frac{\lipschitz \norm{\votingrightsfamily{} - \votingrightsfamily_{-\BYZANTINE}}{1}}{4}
    + \absv{ q_{-\BYZANTINE} - q }
    + 2 \frac{\norm{\votingrightsfamily_{\BYZANTINE}}{1}}{\norm{\votingrightsfamily}{1}} \frac{\lipschitz \norm{\votingrightsfamily{}}{1}}{4}
    \leq \lipschitz \norm{\votingrightsfamily_{\BYZANTINE}}{1},
\end{align*}
where we used lemmas~\ref{lemma:clippedmean_lipschitz_radius}, \ref{lemma:clippedmean_lipschitz_center} and \ref{lemma:clippedmean_byzantine_resilience} successively.

Now, we prove the second part of the theorem.
Assume that there exists $\diameter>0$ such that $\norm{\votingrightsfamily}{1} \geq 8 \diameter / \lipschitz$ and $\voterscore{\voter} \in [-\diameter, \diameter]$ for all $\voter$.
By the latter assumption, observe that we have $q \in [-\diameter, \diameter]$, as one can check that the subderivatives of the loss minimized by $\qrmedian{}$ are positive at $\diameter$ and negative at $-\Delta$ (similar to the proof of Theorem~\ref{th:qrmed-br}).
Therefore, using the fact that $\norm{\votingrightsfamily}{1} \geq 8 \diameter / \lipschitz$, the clipping interval $[ q - \frac{\lipschitz \norm{\votingrightsfamily}{1}}{4}, q + \frac{\lipschitz \norm{\votingrightsfamily}{1}}{4} ]$ then contains all of $[-\diameter, \diameter]$, and thus also all voter scores.
Therefore, under these conditions, $\brmean{\lipschitz}$ returns the mean.
This concludes the proof.
\end{proof}

\section{Proof of Theorem~\ref{th:mehestan-br}}
\label{app:proof-main}

For convenience, we restate Theorem~\ref{th:mehestan-br}, proof of which is decomposed in lemmas in the next section.

\begin{reptheorem}{th:mehestan-br}
$\mehestan{}_{\lipschitz}$ is $\lipschitz$-resilient and sparsely unanimous.
\end{reptheorem}

\subsection{Overview of the Proof}

Below, we sketch the proof of Theorem~\ref{th:mehestan-br} in a guided proof with high-level intuitions.
We first discuss resilience, which requires controlling the worst-case impact of malicious voters at each step of \mehestan{}.
We then prove the sparse unanimity guarantee.
The missing full proofs can be found in Appendix~\ref{app:main-miss-proof}.

\subsubsection{\mehestan{} is Resilient}

We now sketch the proof of the resilience of \mehestan{}.
First, note that $\brmean{}$ ensures resilience in the search of the scaling and translation factors (steps 2 and 3 in Section~\ref{sec:mehestan}).
The guarantee a priori depends on the maximum score of each input vector $\|\voterscore{\voter}\|_{\infty}$, but the prior use of min-max pre-normalization allows to remove the dependency on this quantity.
\begin{lemma}
\label{lemma:scaling-resilience}
  For any subset $\BYZANTINE \subset [\VOTER]$, denote $\scaling{\voter}^{- \BYZANTINE}$ the scaling obtained by involving only the voters $\voter \notin \BYZANTINE$.
  Then, for any alternative $\alternative \in [\ALTERNATIVE]$, we have $\absv{ \scaling{\voter} \normalizedscore{\voter \alternative} - \scaling{\voter}^{- \BYZANTINE} \normalizedscore{\voter \alternative}} \leq \frac{\lipschitz \norm{\votingrightsfamily{}^{\BYZANTINE}}{1}}{7}$.
\end{lemma}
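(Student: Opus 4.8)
The plan is to reduce the bound to the $(\lipschitz/7)$-resilience of $\brmean{}$ established in Theorem~\ref{th:brmean}, after observing that deleting a set $\BYZANTINE$ of voters changes nothing in the pipeline leading to $\scaling{\voter}$ \emph{except} the family of values that $\brmean{}$ aggregates in~\eqref{eq:scaling}.

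First I would split off the normalization factor. For $\alternative \in \ALTERNATIVE_\voter$, the min-max normalization~\eqref{eq:local-norm} of voter $\voter$ depends only on $\voter$'s own reports, so it is untouched by discarding $\BYZANTINE$ and satisfies $\normalizedscore{\voter\alternative}\in[0,1]$ (the degenerate case where $\voter$ reported a single distinct score, handled by setting $\normalizedscore{\voter}=0$, makes the inequality trivial, as does $\alternative\notin\ALTERNATIVE_\voter$). Hence
\[
\absv{\scaling{\voter}\normalizedscore{\voter\alternative}-\scaling{\voter}^{-\BYZANTINE}\normalizedscore{\voter\alternative}}
=\normalizedscore{\voter\alternative}\,\absv{\scaling{\voter}-\scaling{\voter}^{-\BYZANTINE}}
\le\absv{\scaling{\voter}-\scaling{\voter}^{-\BYZANTINE}},
\]
so it suffices to bound $\absv{\scaling{\voter}-\scaling{\voter}^{-\BYZANTINE}}$ by $\lipschitz\norm{\votingrightsfamily{}^{\BYZANTINE}}{1}/7$.

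Next I would note that each comparative scaling $\scaling{\voter\voterbis}$ in~\eqref{eq:comp-scaling} is a function of $\normalizedscore{\voter}$ and $\normalizedscore{\voterbis}$ alone, hence is unchanged for every $\voterbis\notin\BYZANTINE$, while the comparable set $\PAIRWISECOMPARABLEVOTER{\voter}$ is merely shrunk to $\PAIRWISECOMPARABLEVOTER{\voter}\setminus\BYZANTINE$. Therefore, by~\eqref{eq:scaling}, $\scaling{\voter}^{-\BYZANTINE}$ is obtained from $\scaling{\voter}$ by feeding $\brmean{\lipschitz/7}$ the \emph{same} inputs with the entries indexed by $\BYZANTINE\cap\PAIRWISECOMPARABLEVOTER{\voter}$ removed (equivalently, their voting rights zeroed). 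Theorem~\ref{th:brmean}, in the continuous-voting-rights form proved in Appendix~\ref{app:brmean}, then yields $\absv{\scaling{\voter}-\scaling{\voter}^{-\BYZANTINE}}\le\frac{\lipschitz}{7}\sum_{\voterbis\in\BYZANTINE\cap\PAIRWISECOMPARABLEVOTER{\voter}}\votingrights{\voterbis}\le\frac{\lipschitz}{7}\norm{\votingrightsfamily{}^{\BYZANTINE}}{1}$, and combining with the display above gives exactly $\tfrac{\lipschitz\norm{\votingrightsfamily{}^{\BYZANTINE}}{1}}{7}$.

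The only delicate point — the one I would be most careful to justify rigorously — is precisely this invariance claim: that both the min-max normalization and the surviving comparative scalings $\scaling{\voter\voterbis}$ ($\voterbis\notin\BYZANTINE$) are genuinely unaffected by the removal of $\BYZANTINE$, so that the whole discrepancy between $\scaling{\voter}$ and $\scaling{\voter}^{-\BYZANTINE}$ is attributable solely to the change in the aggregation domain of $\brmean{}$ and is thus controlled by its resilience guarantee. Everything else is a one-line invocation of Theorem~\ref{th:brmean} together with the boundedness $\normalizedscore{\voter\alternative}\in[0,1]$; note in particular that it is exactly the min-max pre-normalization that strips away the dependence on $\norm{\voterscore{\voter}}{\infty}$ that a naive resilience bound for $\brmean{}$ applied to raw scores would otherwise carry.
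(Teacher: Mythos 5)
Your proposal is correct and follows essentially the same route as the paper: invoke the $(\lipschitz/7)$-resilience of $\brmean{\lipschitz/7}$ from Theorem~\ref{th:brmean} to bound $\absv{\scaling{\voter}-\scaling{\voter}^{-\BYZANTINE}}$, then use the boundedness of the min-max normalized scores to absorb the factor $\normalizedscore{\voter\alternative}$. The paper phrases the second step via $\absv{\normalizedscore{\voter\alternative}}\le\norm{\normalizedscore{\voter}}{\infty}$ (which equals $1$ after min-max normalization) rather than $\normalizedscore{\voter\alternative}\in[0,1]$ directly, but this is the same argument; your explicit justification of the invariance of the surviving $\scaling{\voter\voterbis}$ and of the degenerate cases is a welcome addition the paper leaves implicit.
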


\begin{proof}
By the $(\lipschitz / 7)$-resilience of $\brmean{\lipschitz / 7}$ shown in Theorem~\ref{th:brmean}, we have
$\absv{ \scaling{\voter} - \scaling{\voter}^{- \BYZANTINE} } \leq \lipschitz {\norm{\votingrightsfamily{}^{-\BYZANTINE}}{1}} / {7\norm{\normalizedscore{\voter}}{\infty}}$.
It follows that, for any alternative $\alternative \in [\ALTERNATIVE]$, we have
$\absv{ \scaling{\voter} \normalizedscore{\voter \alternative} - \scaling{\voter}^{- \BYZANTINE} \normalizedscore{\voter \alternative}} 
= \absv{ \scaling{\voter}  - \scaling{\voter}^{- \BYZANTINE} } \absv{\normalizedscore{\voter \alternative}}
\leq \absv{ \scaling{\voter}  - \scaling{\voter}^{- \BYZANTINE} } \norm{\normalizedscore{\voter}}{\infty}
\leq \frac{\lipschitz \norm{\votingrightsfamily{}^{\BYZANTINE}}{1}}{7}$.
\end{proof}

\begin{lemma}
\label{lemma:translation-resilience}
  For any subset $\BYZANTINE \subset [\VOTER]$, denote $\translation{\voter}^{- \BYZANTINE}$ the translation obtained by involving only the voters $\voter \notin \BYZANTINE$.
  Then, for any alternative $\alternative \in [\ALTERNATIVE]$, we have $\absv{ \translation{\voter} - \translation{\voter}^{- \BYZANTINE} } \leq \frac{5\lipschitz \norm{\votingrightsfamily{}^{\BYZANTINE}}{1}}{7}$.
\end{lemma}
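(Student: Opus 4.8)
My plan is to reduce the claim to a comparison of two invocations of $\brmean{\lipschitz/7}$ and then to track the two distinct ways in which discarding $\BYZANTINE$ perturbs that invocation. Throughout I assume $\voter \notin \BYZANTINE$, since otherwise $\translation{\voter}^{-\BYZANTINE}$ is undefined. By definition $\translation{\voter} = \brmean{\lipschitz/7}\big(\set{\translation{\voter\voterbis} \st \voterbis \in \COMPARABLEVOTER{\voter}}\big)$, and $\translation{\voter}^{-\BYZANTINE}$ is the same aggregation run on the voters outside $\BYZANTINE$. The key structural remark is that, for $\voter,\voterbis \notin \BYZANTINE$, neither the set $\ALTERNATIVE_{\voter\voterbis}$ of commonly scored alternatives nor the relation ``$\voterbis$ is translation-comparable to $\voter$'' changes when $\BYZANTINE$ is removed; hence the index set merely shrinks from $\COMPARABLEVOTER{\voter}$ to $\COMPARABLEVOTER{\voter}\setminus\BYZANTINE$, and for each surviving $\voterbis$ the only change in $\translation{\voter\voterbis}$ is carried by the scaling factors $\scaling{\voter}$ and $\scaling{\voterbis}$.

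Next I would bound the per-pair perturbation. For $\voterbis \in \COMPARABLEVOTER{\voter}\setminus\BYZANTINE$, the quantity $\translation{\voter\voterbis}$ is the average over $\alternative \in \ALTERNATIVE_{\voter\voterbis}$ of $\scaling{\voterbis}\normalizedscore{\voterbis\alternative} - \scaling{\voter}\normalizedscore{\voter\alternative}$. Applying Lemma~\ref{lemma:scaling-resilience} once with the voter $\voterbis$ and once with the voter $\voter$ bounds the change of each scaled score by $\tfrac{\lipschitz}{7}\norm{\votingrightsfamily{}^{\BYZANTINE}}{1}$, so by the triangle inequality each summand moves by at most $\tfrac{2\lipschitz}{7}\norm{\votingrightsfamily{}^{\BYZANTINE}}{1}$, and averaging preserves this: $\absv{\translation{\voter\voterbis} - \translation{\voter\voterbis}^{-\BYZANTINE}} \le \tfrac{2\lipschitz}{7}\norm{\votingrightsfamily{}^{\BYZANTINE}}{1}$.

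Then I would split $\absv{\translation{\voter} - \translation{\voter}^{-\BYZANTINE}}$ by a triangle inequality through the intermediate object $\brmean{\lipschitz/7}\big(\set{\translation{\voter\voterbis} \st \voterbis \in \COMPARABLEVOTER{\voter}\setminus\BYZANTINE}\big)$, giving two terms: (i) the change caused by dropping the voters $\COMPARABLEVOTER{\voter}\cap\BYZANTINE$ from the index set while keeping the values $\translation{\voter\voterbis}$ fixed, and (ii) the change caused by then replacing each remaining value $\translation{\voter\voterbis}$ by $\translation{\voter\voterbis}^{-\BYZANTINE}$. Term (i) is controlled by the $(\lipschitz/7)$-resilience of $\brmean{\lipschitz/7}$ (Theorem~\ref{th:brmean}): it is at most $\tfrac{\lipschitz}{7}$ times the total voting right of $\COMPARABLEVOTER{\voter}\cap\BYZANTINE$, hence at most $\tfrac{\lipschitz}{7}\norm{\votingrightsfamily{}^{\BYZANTINE}}{1}$. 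Term (ii) is controlled by the Lipschitz continuity of $\brmean{}$ with respect to its inputs (Proposition~\ref{prop:brmean-lipschitz}); since $\brmean{}$ is $2$-Lipschitz in the $\ell_\infty$-norm of its inputs --- one factor from the $1$-Lipschitz dependence of the clipping center (a $\qrmedian{}$, cf.\ Proposition~\ref{prop:qrmed-lipschitz}) on the data, one from the $1$-Lipschitz dependence of $\clippedmean{}$ on its center (Lemma~\ref{lemma:clippedmean_lipschitz_center}) --- combining with the previous paragraph gives term (ii) $\le 2\cdot\tfrac{2\lipschitz}{7}\norm{\votingrightsfamily{}^{\BYZANTINE}}{1} = \tfrac{4\lipschitz}{7}\norm{\votingrightsfamily{}^{\BYZANTINE}}{1}$. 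Adding the two contributions yields the claimed bound $\tfrac{5\lipschitz}{7}\norm{\votingrightsfamily{}^{\BYZANTINE}}{1}$.

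The hard part is bookkeeping rather than any individual estimate: one must keep separate the \emph{resilience} of $\brmean{}$, which absorbs changes to its index set, from its \emph{input-Lipschitzness}, which absorbs changes to its values, and one must remember to invoke Lemma~\ref{lemma:scaling-resilience} for $\voterbis$ as well as for $\voter$; these two doublings (one from the second voter, one from the Lipschitz constant $2$ of $\brmean{}$) are exactly what inflates the constant beyond the naive $3/7$ to $5/7$. A minor point to verify is that the degenerate case where a voter's min-max normalization collapses to the zero vector causes no trouble --- it does not, since Lemma~\ref{lemma:scaling-resilience} already covers it and $\translation{\voter\voterbis}$ stays well defined.
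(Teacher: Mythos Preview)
Your proposal is correct and follows essentially the same route as the paper: both proofs insert an intermediate $\brmean{\lipschitz/7}$ value and bound the two resulting gaps by (i) the $(\lipschitz/7)$-Lipschitz resilience of $\brmean{}$ (Theorem~\ref{th:brmean}) for the change in the index set, and (ii) the $2$-Lipschitz input sensitivity of $\brmean{}$ (Proposition~\ref{prop:brmean-lipschitz}) combined with Lemma~\ref{lemma:scaling-resilience} applied to both $\voter$ and $\voterbis$ for the change in values, yielding $\tfrac{\lipschitz}{7}+\tfrac{4\lipschitz}{7}=\tfrac{5\lipschitz}{7}$. The only cosmetic difference is that the paper's appendix phrases the argument for a single added voter $\faulty$ and chooses the intermediate with the perturbed scalings over the original index set, whereas you drop voters first and then perturb values; the two triangle-inequality splits are symmetric and give the same constants.
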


\begin{proof}[Proof sketch (The full proof is in Appendix~\ref{app:mehestan-br})]
One complication is that malicious voters $\byzantine \in \BYZANTINE$ affect the comparative translation factors $\translation{\voter \voterbis}$ 
by affecting the scaling factors $\scaling{\voter}$ and $\scaling{\voterbis}$ that appear in their computations, even when $\voterbis \notin \BYZANTINE$.
Fortunately, combining Lemma~\ref{lemma:scaling-resilience}, 
Proposition~\ref{prop:brmean-lipschitz}
allows to bound this impact.
Combining this to Theorem~\ref{th:brmean} for the direct impact of malicious voters through $\translation{\voter \byzantine}$ allows to conclude.
\end{proof}

Combining our two lemmas above, and Theorem~\ref{th:qrmed-br} on the resilience of $\qrmedian{}$ used in step 4 in Section~\ref{sec:mehestan}, 
guarantees the resilience of \mehestan{}.

\begin{lemma}
\label{lemma:mehestan-byzantine}
$\mehestan{}_{\lipschitz}$ is $\lipschitz$-Lipschitz resilient.
\end{lemma}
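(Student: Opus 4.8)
The plan is to peel \mehestan{}'s last step---the alternative-wise $\qrmedian{\lipschitz/7}$ of the globally normalized scores $\hat\theta_{\voter\alternative}=\scaling{\voter}\normalizedscore{\voter\alternative}+\translation{\voter}$---away from the upstream scaling and translation searches, and then compose the resilience bounds already established. Fix an alternative $\alternative$ and a set $\BYZANTINE\subseteq[\VOTER]$ of voters to be discarded (the single-voter case of Definition~\ref{def:byz-resilience} is $\BYZANTINE=\{\voter\}$). Write $Q_1\triangleq\mehestan{}_{\lipschitz\alternative}(\voterscorefamily{})=\qrmedian{\lipschitz/7}(\hat\theta_{\voter\alternative}\st\voter\in[\VOTER])$, and let $Q_3$ be $\mehestan{}_{\lipschitz\alternative}$ recomputed without $\BYZANTINE$, i.e. $Q_3=\qrmedian{\lipschitz/7}(\hat\theta_{\voter\alternative}^{-\BYZANTINE}\st\voter\in[\VOTER]\setminus\BYZANTINE)$ with $\hat\theta_{\voter\alternative}^{-\BYZANTINE}=\scaling{\voter}^{-\BYZANTINE}\normalizedscore{\voter\alternative}+\translation{\voter}^{-\BYZANTINE}$. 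I would introduce the hybrid quantity $Q_2\triangleq\qrmedian{\lipschitz/7}(\hat\theta_{\voter\alternative}\st\voter\in[\VOTER]\setminus\BYZANTINE)$, which aggregates the \emph{same} transformed scores as $Q_1$ but over the reduced voter set; note that local (min-max) normalization is per-voter, so $\normalizedscore{\voter}$ is untouched when $\BYZANTINE$ is removed, and only $\scaling{\voter},\translation{\voter}$ move. Then $\absv{Q_1-Q_3}\le\absv{Q_1-Q_2}+\absv{Q_2-Q_3}$, and it suffices to bound each term.

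For the first term, $Q_1$ and $Q_2$ are the output of $\qrmedian{\lipschitz/7}$ on the same scalars with the voting rights of $\BYZANTINE$ zeroed out, so by the $(\lipschitz/7)$-Lipschitz resilience of $\qrmedian{\lipschitz/7}$ (Theorem~\ref{th:qrmed-br}), $\absv{Q_1-Q_2}\le\tfrac{\lipschitz}{7}\norm{\votingrightsfamily{}^{\BYZANTINE}}{1}$. For the second term, $Q_2$ and $Q_3$ run $\qrmedian{\lipschitz/7}$ on the \emph{same} index set and voting rights but on perturbed values, the perturbation of voter $\voter$'s entry being
\begin{align*}
\absv{\hat\theta_{\voter\alternative}-\hat\theta_{\voter\alternative}^{-\BYZANTINE}}
&\le\absv{\scaling{\voter}\normalizedscore{\voter\alternative}-\scaling{\voter}^{-\BYZANTINE}\normalizedscore{\voter\alternative}}+\absv{\translation{\voter}-\translation{\voter}^{-\BYZANTINE}}\\
&\le\tfrac{\lipschitz}{7}\norm{\votingrightsfamily{}^{\BYZANTINE}}{1}+\tfrac{5\lipschitz}{7}\norm{\votingrightsfamily{}^{\BYZANTINE}}{1}=\tfrac{6\lipschitz}{7}\norm{\votingrightsfamily{}^{\BYZANTINE}}{1},
\end{align*}
by Lemma~\ref{lemma:scaling-resilience} and Lemma~\ref{lemma:translation-resilience}. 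Since $\qrmedian{\lipschitz/7}$ is $1$-Lipschitz with respect to $\ell_\infty$-perturbations of its inputs (Proposition~\ref{prop:qrmed-lipschitz}; the $\tfrac{7}{\lipschitz}$-strong convexity of objective~\eqref{eq:regmedian} together with the monotonicity of the subgradient of each $\absv{z-\cdot}$ prevents the regularizer from breaking this), it follows that $\absv{Q_2-Q_3}\le\tfrac{6\lipschitz}{7}\norm{\votingrightsfamily{}^{\BYZANTINE}}{1}$. Adding the two estimates gives $\absv{Q_1-Q_3}\le\bigl(\tfrac{1}{7}+\tfrac{6}{7}\bigr)\lipschitz\norm{\votingrightsfamily{}^{\BYZANTINE}}{1}=\lipschitz\norm{\votingrightsfamily{}^{\BYZANTINE}}{1}$, which is exactly $\lipschitz$-Lipschitz resilience (specializing to $\norm{\votingrightsfamily{}^{\BYZANTINE}}{1}=1$ for a single unit-weight voter). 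The split $1+1+5=7$ is precisely why every inner primitive is invoked at resilience level $\lipschitz/7$.

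The genuinely delicate work has already been absorbed into Lemmas~\ref{lemma:scaling-resilience} and~\ref{lemma:translation-resilience}---in particular Lemma~\ref{lemma:translation-resilience}, which must track the \emph{indirect} influence of $\BYZANTINE$ on an honest voter's $\translation{\voter}$ through the scalings $\scaling{\voter},\scaling{\voterbis}$ entering $\translation{\voter\voterbis}$, using the Lipschitz continuity of $\brmean{}$ (Proposition~\ref{prop:brmean-lipschitz}) on top of its resilience (Theorem~\ref{th:brmean}). Given those, the only remaining obstacle here is bookkeeping: choosing the hybrid $Q_2$ so that the two sources of perturbation---a dropped term in the final $\qrmedian{}$ versus shifted values of the surviving terms---are cleanly separated, and checking that the constants $1$, $1$ and $5$ saturate the budget $7$ so that the composition closes at exactly $\lipschitz$. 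No further estimates are needed.
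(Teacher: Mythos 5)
Your proof is correct and follows essentially the same route as the paper's: a triangle inequality through a hybrid $\qrmedian{\lipschitz/7}$ aggregate that isolates the direct effect of dropping the voters' own entries (bounded by the $\tfrac{\lipschitz}{7}$-resilience of $\qrmedian{}$) from the indirect effect of the perturbed $\scaling{\voter},\translation{\voter}$ (bounded by $\tfrac{6\lipschitz}{7}$ via Lemmas~\ref{lemma:scaling-resilience} and~\ref{lemma:translation-resilience} together with the $1$-Lipschitz continuity of $\qrmedian{}$ in its inputs). The only cosmetic difference is that the paper's hybrid keeps the full voter set with perturbed values while yours keeps the original values on the reduced set; the budget accounting $1+1+5=7$ is identical.
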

\begin{proof}[Proof sketch (The full proof is in Appendix~\ref{app:mehestan-br})]
The proof is akin to Lemma~\ref{lemma:translation-resilience}, 
by leveraging previous bounds, the Lipschitz continuity of $\qrmedian{}$ (Proposition~\ref{prop:qrmed-lipschitz}) and its resilience (Theorem~\ref{th:qrmed-br}).
\end{proof}

\subsubsection{\mehestan{} is Sparsely Unanimous}

We now sketch our proof of sparse unanimity, whose full proofs are provided in Appendix~\ref{app:mehestan_unanimity}.
For any $\voterscore{*} \in \setR^\ALTERNATIVE$, we first define the scaling bound
\begin{equation}
     \scalingbound{} (\voterscore{*}) \triangleq 
     \frac{\max_{\alternative, \alternativebis} \absv{\voterscore{* \alternative} - \voterscore{* \alternativebis}}}{\min_{\alternative, \alternativebis : \voterscore{* \alternative} \neq \voterscore{* \alternativebis}} \absv{\voterscore{* \alternative} - \voterscore{* \alternativebis}}},
\end{equation}
which is trivially scale-invariant.
This quantity bounds the largest possible multiplicative scaling between any voter's pre-normalized scores, and the min-max normalization $\normalizedscore{*} \triangleq \minmaxnormalizer{} (\voterscore{*})$ of $\voterscore{*}$.
Intuitively, it is an important quantity; the larger it is, the more voters will be needed to re-scale appropriately the voters' pre-normalized scores.
Interestingly, as the following lemma states it, $\scalingbound{} (\voterscore{})$ also bounds the translation discrepancies between the voters' pre-normalized scores and the min-max normalized scores $\normalizedscore{*}$.

\begin{lemma}
\label{lemma:pre-normalization-bound}
Suppose $\voterscore{*}$-unanimity and comparability.
Then, for any $\voter \in [\VOTER]$, there must exist $\scaling{\voter}^*$ and $\translation{\voter}^*$ such that 
$\normalizedscore{\voter} = \scaling{\voter}^* \normalizedscore{*|\ALTERNATIVE_\voter} + \translation{\voter}^*$,
with $1 \leq \scaling{\voter}^* \leq \scalingbound{} (\voterscore{*})$ and $- \scalingbound{} (\voterscore{*}) \leq \translation{\voter}^* \leq 0$.
\end{lemma}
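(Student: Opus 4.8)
The plan is to track how min-max normalization interacts with the affine equivalence $\voterscore{\voter} \sim \voterscore{*|\ALTERNATIVE_\voter}$. By $\voterscore{*}$-unanimity we may write $\voterscore{\voter} = \scaling{}\, \voterscore{*|\ALTERNATIVE_\voter} + \translation{}$ for some $\scaling{} > 0$, $\translation{} \in \setR$ (depending on $\voter$). The key observation is that comparability forces $\ALTERNATIVE_\voter$ to contain at least two alternatives receiving distinct ground-truth values, so both $\voterscore{\voter}$ and $\voterscore{*|\ALTERNATIVE_\voter}$ have at least two distinct reported scores, and min-max normalization~\eqref{eq:local-norm} is well-defined and \emph{invariant under the positive affine transformation} relating them: $\normalizedscore{\voter} = \minmaxnormalizer{}(\voterscore{\voter}) = \minmaxnormalizer{}(\voterscore{*|\ALTERNATIVE_\voter})$. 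So it suffices to relate $\minmaxnormalizer{}(\voterscore{*|\ALTERNATIVE_\voter})$ to $\normalizedscore{*} = \minmaxnormalizer{}(\voterscore{*})$, i.e. to find $\scaling{\voter}^*, \translation{\voter}^*$ with $\minmaxnormalizer{}(\voterscore{*|\ALTERNATIVE_\voter}) = \scaling{\voter}^*\, \normalizedscore{*|\ALTERNATIVE_\voter} + \translation{\voter}^*$ and the claimed bounds.

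The next step is an explicit computation. Write $M \triangleq \max_{\alternative} \voterscore{*\alternative}$, $m \triangleq \min_{\alternative} \voterscore{*\alternative}$ over all of $[\ALTERNATIVE]$, and $M_\voter \triangleq \max_{\alternative \in \ALTERNATIVE_\voter} \voterscore{*\alternative}$, $m_\voter \triangleq \min_{\alternative \in \ALTERNATIVE_\voter} \voterscore{*\alternative}$ over the restricted index set. Then $\normalizedscore{*\alternative} = (\voterscore{*\alternative} - m)/(M-m)$ and $(\minmaxnormalizer{}(\voterscore{*|\ALTERNATIVE_\voter}))_\alternative = (\voterscore{*\alternative} - m_\voter)/(M_\voter - m_\voter)$ for $\alternative \in \ALTERNATIVE_\voter$. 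Solving gives $\scaling{\voter}^* = (M-m)/(M_\voter - m_\voter)$ and $\translation{\voter}^* = (m - m_\voter)/(M_\voter - m_\voter)$. Since $\ALTERNATIVE_\voter \subseteq [\ALTERNATIVE]$ we have $m \leq m_\voter$ and $M_\voter \leq M$, hence $M_\voter - m_\voter \leq M - m$, giving $\scaling{\voter}^* \geq 1$; and $M_\voter - m_\voter \geq \min_{\alternative,\alternativebis : \voterscore{*\alternative} \neq \voterscore{*\alternativebis}} \absv{\voterscore{*\alternative} - \voterscore{*\alternativebis}}$ (this gap is realized since $\ALTERNATIVE_\voter$ contains two alternatives with distinct ground truth), while $M - m = \max_{\alternative,\alternativebis} \absv{\voterscore{*\alternative} - \voterscore{*\alternativebis}}$, so $\scaling{\voter}^* \leq \scalingbound{}(\voterscore{*})$. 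For the translation, $m - m_\voter \leq 0$ gives $\translation{\voter}^* \leq 0$, and $\translation{\voter}^* = -(m_\voter - m)/(M_\voter - m_\voter) \geq -(M - m)/(M_\voter - m_\voter) = -\scaling{\voter}^* \geq -\scalingbound{}(\voterscore{*})$, using $m_\voter - m \leq M - m$.

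The main obstacle — really the only subtle point — is justifying the scale-invariance of min-max normalization and, relatedly, ensuring well-definedness: one must invoke comparability to guarantee that $\ALTERNATIVE_\voter$ contains two alternatives $\alternative, \alternativebis$ with $\voterscore{*\alternative} \neq \voterscore{*\alternativebis}$ (otherwise $M_\voter = m_\voter$ and the normalization degenerates to the zero vector, a case the algorithm handles separately but which would break the lemma as stated). Once that is in place the rest is the elementary interval-rescaling algebra above, plus the remark that $\scalingbound{}(\voterscore{*})$ is scale-invariant (immediate, since both numerator and denominator are homogeneous of degree one in $\voterscore{*}$). I would present the scale-invariance of $\minmaxnormalizer{}$ as a one-line sublemma and then carry out the computation.
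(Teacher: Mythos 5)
Your proposal is correct and follows essentially the same route as the paper's proof: both identify $\scaling{\voter}^* = (M-m)/(M_\voter - m_\voter)$ and $\translation{\voter}^* = -(m_\voter - m)/(M_\voter - m_\voter)$ (the paper phrases this via the extremal normalized scores $\normalizedscore{*\alternative_\voter}, \normalizedscore{*\alternativebis_\voter}$ of voter $\voter$'s scored set, which is the same quantity), and both invoke comparability to rule out the degenerate case $M_\voter = m_\voter$ before bounding the gap below by the minimal nonzero difference in $\voterscore{*}$. The only difference is presentational — you make the affine-invariance of min-max normalization an explicit sublemma where the paper routes it through the equivalence relation $\sim$.
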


\begin{proof}[Proof sketch (the full proof is 
in Appendix~\ref{app:mehestan_unanimity})]
Denoting $\alternative_\voter$ and $\alternativebis_\voter$ the best and worst alternatives scored by voter $\voter$,
as opposed to the best and worst alternatives $\alternative$ and $\alternativebis$ according to $\voterscore{*}$,
we can see that $\scaling{\voter}^* = \frac{\voterscore{* \alternative} - \voterscore{* \alternativebis}}{\voterscore{* \alternative_\voter} - \voterscore{* \alternativebis_\voter}} \in [1, \scalingbound{} (\voterscore{*})]$.
The bound on $\translation{\voter}^*$ is then obtained by looking at the normalized score of $\alternativebis_\voter$.
\end{proof}

The previous lemma says that the scaling and translation factors $\scaling{\voter}^*$ and $\translation{\voter}^*$ that must be learned for each voter are bounded. 
Thus, this will also be the case of the relative scaling and translations $\scaling{\voter \voterbis}$ and $\translation{\voter \voterbis}$.
Therefore, assuming sufficiently many voting rights have been allocated, 
applying $\brmean{}$ to such quantities will return a mean.
Combining this observation with the linearity of the mean then allows to guarantee that the adequate scaling and translation will be inferred for all voters, as precisely proved by the following lemma.

\begin{lemma}
\label{lemma:identical-rescaling}
Suppose $\voterscore{*}$-unanimity, comparability, and that alternatives are $(8 \scalingbound{} (\voterscore{*})^2 / \lipschitz)$-scored.
Then the voters' re-scaled scores are consistent, 
in the sense that $\scaling{\voter} \normalizedscore{\voter \alternative} + \translation{\voter} = \scaling{\voterbis} \normalizedscore{\voterbis \alternative} + \translation{\voterbis}$,
for all voters $\voter, \voterbis \in [\VOTER]$ and alternatives $\alternative \in \ALTERNATIVE_{\voter \voterbis}$ that both voters scored.
\end{lemma}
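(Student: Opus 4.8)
The plan is to follow the affine representation of each voter furnished by Lemma~\ref{lemma:pre-normalization-bound} through steps~2 and~3 of \mehestan{}, showing that both invocations of $\brmean{}$ land in the regime where they return an ordinary mean, after which every voter-specific factor cancels.

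\emph{Setup.} Under $\voterscore{*}$-unanimity and comparability, Lemma~\ref{lemma:pre-normalization-bound} gives, for each $\voter$, scalars $\scaling{\voter}^* \in [1, \scalingbound{}(\voterscore{*})]$ and $\translation{\voter}^* \in [-\scalingbound{}(\voterscore{*}), 0]$ with $\normalizedscore{\voter\alternative} = \scaling{\voter}^*\normalizedscore{*\alternative} + \translation{\voter}^*$ for every $\alternative \in \ALTERNATIVE_\voter$. Moreover, comparability forces every voter to report at least two distinct scores, so $\voter \in \PAIRWISECOMPARABLEVOTER{\voter}$, and in fact $\PAIRWISECOMPARABLEVOTER{\voter} = \COMPARABLEVOTER{\voter} = [\VOTER]$ for all $\voter$; the inclusion of the self-index is precisely what will make the learned scale genuinely voter-independent.

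\emph{Scaling step.} Substituting the affine form into~\eqref{eq:comp-scaling}, and noting that $(\alternative,\alternativebis) \in \ALTERNATIVEPAIR_{\voter\voterbis}$ forces $\normalizedscore{*\alternative} \neq \normalizedscore{*\alternativebis}$ (so there is no division by zero and the common factor $|\normalizedscore{*\alternative} - \normalizedscore{*\alternativebis}|$ cancels), every summand equals $\scaling{\voterbis}^*/\scaling{\voter}^*$; hence $\scaling{\voter\voterbis} = \scaling{\voterbis}^*/\scaling{\voter}^*$. The centred values $\scaling{\voter\voterbis}-1$ are bounded (in absolute value by $\scalingbound{}(\voterscore{*})^2$, crudely), and $\PAIRWISECOMPARABLEVOTER{\voter} = [\VOTER]$ carries cumulative voting right at least $8\scalingbound{}(\voterscore{*})^2/\lipschitz$ because every alternative is $(8\scalingbound{}(\voterscore{*})^2/\lipschitz)$-scored; so the ``$\brmean{} = \mean{}$'' part of Theorem~\ref{th:brmean} applies to~\eqref{eq:scaling}, giving $\scaling{\voter} = 1 + \tfrac{1}{\VOTER}\sum_{\voterbis \in [\VOTER]}(\scaling{\voterbis}^*/\scaling{\voter}^* - 1) = \globalscaling/\scaling{\voter}^*$, where $\globalscaling \triangleq \tfrac{1}{\VOTER}\sum_{\voterbis \in [\VOTER]}\scaling{\voterbis}^* \in [1,\scalingbound{}(\voterscore{*})]$ does not depend on $\voter$. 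In particular $\scaling{\voter}\scaling{\voter}^* = \globalscaling$ for every $\voter$.

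\emph{Translation step and conclusion.} From $\scaling{\voter}\scaling{\voter}^* = \globalscaling$ we get $\scaling{\voter}\normalizedscore{\voter\alternative} = \globalscaling\normalizedscore{*\alternative} + \scaling{\voter}\translation{\voter}^*$ for $\alternative \in \ALTERNATIVE_\voter$, so for $\alternative \in \ALTERNATIVE_{\voter\voterbis}$ the summand of~\eqref{eq:comp-translation} is $\scaling{\voterbis}\normalizedscore{\voterbis\alternative} - \scaling{\voter}\normalizedscore{\voter\alternative} = \scaling{\voterbis}\translation{\voterbis}^* - \scaling{\voter}\translation{\voter}^*$, which is independent of $\alternative$; hence $\translation{\voter\voterbis} = \scaling{\voterbis}\translation{\voterbis}^* - \scaling{\voter}\translation{\voter}^*$. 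Since $\scaling{\voter} = \globalscaling/\scaling{\voter}^* \in [1/\scalingbound{}(\voterscore{*}), \scalingbound{}(\voterscore{*})]$ and $\translation{\voter}^* \in [-\scalingbound{}(\voterscore{*}),0]$, these quantities are bounded by $\scalingbound{}(\voterscore{*})^2$ in absolute value, so the $(8\scalingbound{}(\voterscore{*})^2/\lipschitz)$-scored hypothesis again puts Theorem~\ref{th:brmean} in force for~\eqref{eq:translation}, yielding $\translation{\voter} = \tfrac{1}{\VOTER}\sum_{\voterbis \in [\VOTER]}(\scaling{\voterbis}\translation{\voterbis}^* - \scaling{\voter}\translation{\voter}^*) = \globaltranslation - \scaling{\voter}\translation{\voter}^*$ with $\globaltranslation \triangleq \tfrac{1}{\VOTER}\sum_{\voterbis \in [\VOTER]}\scaling{\voterbis}\translation{\voterbis}^*$ voter-independent. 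Therefore, for every $\voter$ and $\alternative \in \ALTERNATIVE_\voter$, $\scaling{\voter}\normalizedscore{\voter\alternative} + \translation{\voter} = \globalscaling\normalizedscore{*\alternative} + \globaltranslation$, which does not depend on $\voter$; specializing to $\alternative \in \ALTERNATIVE_{\voter\voterbis}$ proves the claim.

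\emph{Main obstacle.} The one genuinely delicate point is the nested, quadratic layer: the translation offsets $\translation{\voter\voterbis}$ are polluted by the already-estimated $\scaling{\voter}, \scaling{\voterbis}$, so one must first pin down $\scaling{\voter}\scaling{\voter}^* = \globalscaling$ \emph{exactly} (not merely approximately) before the $\alternative$-independence of the translation summand — and hence the clean identity $\translation{\voter\voterbis} = \scaling{\voterbis}\translation{\voterbis}^* - \scaling{\voter}\translation{\voter}^*$ — becomes available; it is this composition that inflates the relevant radius to $\scalingbound{}(\voterscore{*})^2$ and thus drives the $\scalingbound{}(\voterscore{*})^2$ dependence of the number-of-voters threshold in the ``$\brmean{}=\mean{}$'' clause of Theorem~\ref{th:brmean}. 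Everything else is substitution and the boundedness bookkeeping already packaged in Lemma~\ref{lemma:pre-normalization-bound} and Theorem~\ref{th:brmean}.
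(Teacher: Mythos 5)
Your proposal is correct and follows essentially the same route as the paper's proof: derive $\scaling{\voter\voterbis} = \scaling{\voterbis}^*/\scaling{\voter}^*$ from Lemma~\ref{lemma:pre-normalization-bound}, invoke the $\brmean{}=\mean{}$ clause of Theorem~\ref{th:brmean} to pin down $\scaling{\voter}\scaling{\voter}^*$ as a voter-independent mean, then repeat the argument for the translation offsets $\translation{\voter\voterbis} = \scaling{\voterbis}\translation{\voterbis}^* - \scaling{\voter}\translation{\voter}^*$ with the inflated radius $\scalingbound{}(\voterscore{*})^2$. The only differences are cosmetic (your explicit remark that comparability puts the self-index in $\PAIRWISECOMPARABLEVOTER{\voter}$, and a slightly cruder bound on $|\scaling{\voter\voterbis}-1|$), neither of which changes the argument.
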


\begin{proof}[Sketch of proof (the full proof is 
in Appendix~\ref{app:mehestan_unanimity})]
By comparability, Lemma~\ref{lemma:pre-normalization-bound} and Theorem~\ref{th:brmean}, 
$\scaling{\voter} 
= \mean (\votingrightsfamily{}, (\scaling{\voterbis}^* / \scaling{\voter}^* )_{\voterbis \in [\VOTER]})
= \mean (\votingrightsfamily{}, \vec{\scaling{}^*}) / \scaling{\voter}^*$.
Thus $\scaling{\voter} \scaling{\voter}^*$, which is the overall multiplicative re-scaling compared to $\normalizedscore{*}$, is independent of $\voter$.
Similarly, Theorem~\ref{th:brmean} guarantees that
$\translation{\voter} = \mean{} (\votingrightsfamily{}, (\scaling{\voterbis} \translation{\voterbis}^*)_{\voterbis \in [\VOTER]}) - \scaling{\voter} \translation{\voter}^*$. 
Thus the overall translation, with respect to $\normalizedscore{*}$, is independent of the voter $\voter$.
\end{proof}

It is noteworthy that the global multiplicative rescaling is a weighted average of the values $\scaling{\voter}^*$, which are known to be at least 1.
Thus, under $(8 \scalingbound{} (\voterscore{*})^2 / \lipschitz)$-scored condition, the global normalization step (see Algorithm~\ref{alg:mehestan}) expands the scores' multiplicative scales.
Aggregating correctly scaled unanimous preferences then allows to recover these preferences.

\begin{lemma}
\label{lemma:sparse-majoritarian}
Under $\voterscore{*}$-unanimity, comparability, and that alternatives are $(8 \scalingbound{} (\voterscore{*})^2 / \lipschitz)$-scored,
$\voterscore{*}$ is recovered, i.e.
$\mehestan{}_\lipschitz (\voterscorefamily{}) \sim \voterscore{*}$.
In particular, $\mehestan_{\lipschitz}$ guarantees sparse unanimity, for $W_0 \triangleq 8 \scalingbound{} (\voterscore{*})^2 / \lipschitz$.
\end{lemma}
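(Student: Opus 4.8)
The plan is to feed the two structural lemmas just established (Lemma~\ref{lemma:pre-normalization-bound} and Lemma~\ref{lemma:identical-rescaling}) into the behaviour of $\qrmedian{}$ on constant inputs. First I would record two easy facts. Since $\scalingbound{}(\voterscore{*}) \ge 1$ and $\lipschitz > 0$, the threshold $W_0 = 8\,\scalingbound{}(\voterscore{*})^2/\lipschitz$ is strictly positive, so the $W_0$-scored hypothesis forces $\VOTER_\alternative \neq \varnothing$ for every alternative $\alternative$, i.e. every alternative is scored by at least one voter. Moreover, comparability together with $\voterscore{*}$-unanimity yields a pair $(\alternative,\alternativebis)$ on which $\voterscore{*}$ takes distinct values, so $\voterscore{*}$ is non-constant and $\normalizedscore{*} \triangleq \minmaxnormalizer{}(\voterscore{*}) = \alpha \voterscore{*} + \beta$ is a genuine positive affine image of $\voterscore{*}$ (with $\alpha > 0$).

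Second, I would pin down the globally-normalized scores $\hat{\theta}_{\voter\alternative} = \scaling{\voter}\normalizedscore{\voter\alternative} + \translation{\voter}$. By Lemma~\ref{lemma:pre-normalization-bound}, $\normalizedscore{\voter\alternative} = \scaling{\voter}^* \normalizedscore{*\alternative} + \translation{\voter}^*$ on $\ALTERNATIVE_\voter$, hence $\hat{\theta}_{\voter\alternative} = (\scaling{\voter}\scaling{\voter}^*)\normalizedscore{*\alternative} + (\scaling{\voter}\translation{\voter}^* + \translation{\voter})$. Lemma~\ref{lemma:identical-rescaling} says this value is the same for any two voters sharing alternative $\alternative$; evaluating that identity at a distinctly-scored pair of any two (comparable) voters—whose two values of $\normalizedscore{*}$ differ, by unanimity—forces the slope $\scaling{\voter}\scaling{\voter}^*$ and intercept $\scaling{\voter}\translation{\voter}^* + \translation{\voter}$ to be voter-independent; call them $S^*$ and $T^*$. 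Then for each $\alternative$ the column passed to the final aggregation consists only of the single value $v_\alternative \triangleq S^*\normalizedscore{*\alternative} + T^*$ (repeated over $\VOTER_\alternative$, with weights). Using the explicit shapes obtained inside the proofs of those lemmas—$S^* = \mean{}(\votingrightsfamily{}, (\scaling{\voterbis}^*)_\voterbis)$ is a weighted average of the $\scaling{\voterbis}^* \in [1, \scalingbound{}(\voterscore{*})]$, and $T^* = \mean{}(\votingrightsfamily{}, (\scaling{\voterbis}\translation{\voterbis}^*)_\voterbis)$ is a weighted average of the $\scaling{\voterbis}\translation{\voterbis}^* = (S^*/\scaling{\voterbis}^*)\translation{\voterbis}^* \in [-\scalingbound{}(\voterscore{*})^2, 0]$—I get $S^* \ge 1$ and $v_\alternative \in [-\scalingbound{}(\voterscore{*})^2, \scalingbound{}(\voterscore{*})]$, hence $\absv{v_\alternative} \le \scalingbound{}(\voterscore{*})^2$.

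The step I expect to be the real crux is showing the last $\qrmedian{\lipschitz/7}$ returns $v_\alternative$ \emph{exactly}, rather than shrinking it toward $0$. The minimized objective at alternative $\alternative$ is $z \mapsto \tfrac{7}{2\lipschitz}z^2 + W_\alternative\absv{z - v_\alternative}$ with $W_\alternative \triangleq \sum_{\voter \in \VOTER_\alternative}\votingrights{\voter}$, whose subdifferential at $v_\alternative$ is $[\tfrac{7 v_\alternative}{\lipschitz} - W_\alternative,\ \tfrac{7 v_\alternative}{\lipschitz} + W_\alternative]$; it contains $0$, so $v_\alternative$ is the minimizer, precisely when $\absv{v_\alternative} \le \tfrac{\lipschitz}{7} W_\alternative$. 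Since the input is $W_0$-scored with $W_0 = 8\,\scalingbound{}(\voterscore{*})^2/\lipschitz$, we have $\tfrac{\lipschitz}{7}W_\alternative \ge \tfrac{8}{7}\scalingbound{}(\voterscore{*})^2 > \scalingbound{}(\voterscore{*})^2 \ge \absv{v_\alternative}$, which closes the gap—this is exactly the point where the constant $8$ in $W_0$ (as opposed to $7$) is consumed. Consequently $\mehestan{}_{\lipschitz\alternative}(\voterscorefamily{}) = v_\alternative$ for all $\alternative$.

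Finally I would assemble: $\mehestan{}_\lipschitz(\voterscorefamily{}) = (v_\alternative)_{\alternative} = S^*\normalizedscore{*} + T^*$ (the constant $T^*$ added coordinate-wise), and since $\normalizedscore{*} = \alpha\voterscore{*} + \beta$ with $\alpha > 0$ and $S^* > 0$, this is a positive affine image of $\voterscore{*}$, i.e. $\mehestan{}_\lipschitz(\voterscorefamily{}) \sim \voterscore{*}$. Taking $W_0 \triangleq 8\,\scalingbound{}(\voterscore{*})^2/\lipschitz$—which specializes to $\VOTER_0(\voterscore{*})$ of~\eqref{eq:num_voters} under unit voting rights—then establishes sparse unanimity; and as a by-product $v_\alternative - v_\alternativebis = S^*(\normalizedscore{*\alternative} - \normalizedscore{*\alternativebis})$ equals $S^* \ge 1$ on the $\voterscore{*}$-extremal pair, giving nontriviality as well. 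The only genuinely delicate point is the magnitude bookkeeping that matches $\absv{v_\alternative}$ against the available weight $W_\alternative$; everything else is routine composition of the preceding lemmas.
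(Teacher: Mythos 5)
Your proposal is correct and follows essentially the same route as the paper's proof: both reduce via Lemma~\ref{lemma:identical-rescaling} to the observation that each alternative's column is a single repeated value $v_\alternative = \avgtruescaling \normalizedscore{*\alternative} + \avgtruetranslation$ with $\absv{v_\alternative} \leq \scalingbound{}(\voterscore{*})^2$, and then invoke the optimality condition of $\qrmedian{}$ together with the $W_0$-scored hypothesis to conclude the aggregate equals $v_\alternative$ exactly. Your explicit subdifferential computation ($0 \in [\tfrac{7 v_\alternative}{\lipschitz} - W_\alternative, \tfrac{7 v_\alternative}{\lipschitz} + W_\alternative]$ iff $\absv{v_\alternative} \leq \tfrac{\lipschitz}{7} W_\alternative$, which the factor $8$ in $W_0$ guarantees) is precisely the step the paper compresses into ``the optimality condition of $\qrmedian{}$ shows that\ldots''.
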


\begin{proof}
By Lemma~\ref{lemma:identical-rescaling}, under $\voterscore{*}$-unanimity, comparability and $(8 \scalingbound{} (\voterscore{*})^2 / \lipschitz)$-scored,
we know that there exists $\avgtruescaling \triangleq \mean (\votingrightsfamily{}, \vec{s^*}) \in [1, \scalingbound{}(\voterscore{*})]$
and $\avgtruetranslation \triangleq \mean (\votingrightsfamily{}, (\scaling{\voterbis} \translation{\voterbis}^*)_{\voterbis \in [\VOTER]}) \in [-\scalingbound{}(\voterscore{*})^2, 0]$
such that 
$\scaling{\voter} \normalizedscore{\voter \alternative} + \translation{\voter}
= \avgtruescaling \normalizedscore{* \alternative} + \avgtruetranslation
\in [-\scalingbound{}(\voterscore{*}), \scalingbound{}(\voterscore{*})^2]$
for all voters $\voter$ and alternatives $\alternative \in \ALTERNATIVE_\voter$.
Now any alternative $\alternative$ has received at least $8 \scalingbound{}(\voterscore{*})^2 / \lipschitz$ votes (measured in voting rights).
We know that these votes are all identical and equal to $\avgtruescaling \normalizedscore{* \alternative} + \avgtruetranslation$.
But then, the optimality condition of $\qrmedian{\lipschitz}$ shows that
we must have $\qrmedian{\lipschitz} (\votingrightsfamily{}, (\scaling{\voter} \normalizedscore{\voter \alternative} + \translation{\voter})_{\voter \in [\VOTER]})
= \avgtruescaling \normalizedscore{* \alternative} + \avgtruetranslation$.
Thus, for all alternatives $\alternative \in [\ALTERNATIVE]$,
we must have $\mehestan_{\lipschitz, \alternative} (\votingrightsfamily{}, \voterscorefamily{}) = \avgtruescaling \normalizedscore{* \alternative} + \avgtruetranslation$, 
which is a positive affine transformation of $\normalizedscore{*}$ (and thus of $\voterscore{*}$).
\end{proof}

\subsection{Missing Proofs}
\label{app:main-miss-proof}
In this section, we provide missing proofs of lemmas from the previous section.
\label{app:mehestan-br}

\begin{replemma}{lemma:translation-resilience}
  Denote $\translation{\voter}^{+\faulty}$ the scaling by including inputs from a new voter $\faulty \notin [\VOTER]$.
  Then, for any alternative $\alternative \in [\ALTERNATIVE]$, we have $\absv{\translation{\voter}^{+\faulty} - \translation{\voter}} \leq \frac{5 \lipschitz \votingrights{\faulty}}{7}$.
\end{replemma}
\begin{proof}
Let $\faulty \in \BYZANTINE$ and $\alternative \in [\ALTERNATIVE]$.
Let us introduce the following variable
\begin{equation}
  t_{\voter} \triangleq
  \brmean{\lipschitz / 7} 
  \set{ \votingrights{\voterbis}, t_{\voter \voterbis} 
  \st
  \voterbis \in \COMPARABLEVOTER{\voter} }.
\end{equation}
where $t_{\voter\voterbis} \triangleq \frac{1}{\card{\ALTERNATIVE_{\voter \voterbis}}}  (\sum_{\alternative \in \ALTERNATIVE_{\voter \voterbis}} \scaling{\voterbis}^{+\faulty} \normalizedscore{\voterbis\alternative}-\scaling{\voter}^{+\faulty} \normalizedscore{\voter\alternative})$.

First, we will prove that
\begin{equation}
    \label{ineq:prop-scale-resilience}
    \absv{\translation{\voter} - t_{\voter}} \leq \frac{4\lipschitz \votingrights{\faulty}}{7}.
\end{equation}

For this, consider $\xi = \left [ (\scaling{\voterbis}-\scaling{\voterbis}^{+\faulty}) \normalizedscore{\voterbis\alternative}-(\scaling{\voter}-\scaling{\voter}^{+\faulty}) \normalizedscore{\voter\alternative} \right]
_{\substack{\voterbis \in [\VOTER]\\ \alternative \in \ALTERNATIVE_{\voter\voterbis}}}.$

We deduce from Lemma~\ref{lemma:scaling-resilience} that $\norm{\xi}{\infty} \leq \frac{2\lipschitz\votingrights{\faulty}}{7}$.
Now, by using the previous $\xi$ in Proposition~\ref{prop:brmean-lipschitz} of $\brmean{}$, we obtain Inequality~\eqref{ineq:prop-scale-resilience}.
Second, by Proposition~\ref{prop:brmean-lipschitz}, we know that
\begin{equation}
    \label{ineq:prop-scale-resilience-2}
    \absv{\translation{\voter}^{+\faulty} - t_{\voter}} \leq \frac{\lipschitz \votingrights{\faulty}}{7}.
\end{equation}

We conclude by using the triangular inequality and inequalities~\eqref{ineq:prop-scale-resilience} and \eqref{ineq:prop-scale-resilience-2}.
\end{proof}

\begin{replemma}{lemma:mehestan-byzantine}
$\mehestan_{\lipschitz}$ is $\lipschitz$-Lipschitz resilient.
\end{replemma}
\begin{proof}
Let $\faulty \in \BYZANTINE$ and $\alternative \in [\ALTERNATIVE]$.
Let us introduce the following variable
\begin{equation}
  r_{\alternative} \triangleq \qrmedian{\lipschitz / 7} (\votingrightsfamily{}, (\scaling{\voter}^{+\faulty} \normalizedscore{\voter}+\translation{\voter}^{+\faulty})_{\voter \in [\VOTER]}).
\end{equation}

First, we will prove that
\begin{equation}
\label{ineq:th-byz-resilience-1}
    \absv{r_{\alternative} - \mehestan_{\lipschitz}(\votingrightsfamily{},\voterscorefamily{})} \leq \frac{6\votingrights{\faulty}}{7\votingrights{}}.
\end{equation}

For this, consider $\xi = \left [ (\scaling{\voter}-\scaling{\voter}^{+\faulty})\normalizedscore{\voter} + (\translation{\voter}-\translation{\voter}^{+\faulty}) \right]
_{\voter \in [\VOTER]}.$

We deduce from propositions~\ref{lemma:scaling-resilience} and \ref{lemma:translation-resilience} that $\norm{\xi}{\infty} \leq \frac{6\lipschitz \votingrights{\faulty}}{7}$.
Now, by using the previous $\xi$ in Proposition~\ref{prop:qrmed-lipschitz} (Lipschitz continuity of $\qrmedian{}$ in score inputs), we obtain Inequality~\eqref{ineq:th-byz-resilience-1}.
Second, by Proposition~\ref{prop:qrmed-lipschitz}, we know that
\begin{equation}
    \label{ineq:th-byz-resilience-2}
    \absv{r_{\alternative} - \mehestan_{\lipschitz}(\votingrights{},\voterscore{}^{+\faulty})} \leq \frac{\lipschitz \votingrights{\faulty}}{7}.
\end{equation}
We conclude by using the triangular inequality and inequalities~\eqref{ineq:th-byz-resilience-1} and \eqref{ineq:th-byz-resilience-2}.
\end{proof}
\label{app:mehestan_unanimity}

\begin{replemma}{lemma:pre-normalization-bound}
Suppose $\voterscore{*}$-unanimity and comparability.
Then, for any $\voter \in [\VOTER]$, there must exist $\scaling{\voter}^*$ and $\translation{\voter}^*$ such that 
$\normalizedscore{\voter} = \scaling{\voter}^* \normalizedscore{*|\ALTERNATIVE_\voter} + \translation{\voter}^*$,
with $1 \leq \scaling{\voter}^* \leq \scalingbound{} (\voterscore{*})$ and $- \scalingbound{} (\voterscore{*}) \leq \translation{\voter}^* \leq 0$.
\end{replemma}

\begin{proof}
Let $\voter \in [\VOTER]$.
By $\voterscore{*}$-unanimity, we know that 
$\normalizedscore{\voter} \sim \voterscore{\voter} \sim \voterscore{*|\ALTERNATIVE_\voter} \sim \normalizedscore{*|\ALTERNATIVE_\voter}$.
Thus there exists $\scaling{\voter}^*$ and $\translation{\voter}^*$ such that 
$\normalizedscore{\voter} = \scaling{\voter}^* \normalizedscore{*|\ALTERNATIVE_\voter} + \translation{\voter}^*$.
Now consider $\alternative_\voter$ the best-scored alternative by voter $\voter$, and $\alternativebis_\voter$ their worst-scored alternative.
By comparability, we have $\ALTERNATIVEPAIR_{\voter \voterbis} \neq \emptyset$, which implies that there exists two alternatives $\alternative, \alternativebis \in \ALTERNATIVE_\voter$ that voter $\voter$ scored differently.
Therefore, $\voterscore{\voter \alternative} > \voterscore{\voter \alternativebis}$.
Then we must have $\normalizedscore{\voter \alternative} > \normalizedscore{\voter \alternativebis}$.
In particular, by min-max normalization, we then have $1 = \max_{\alternativeter \in \ALTERNATIVE_\voter} \normalizedscore{\voter \alternativeter} - \min_{\alternativeter \in \ALTERNATIVE_\voter} \normalizedscore{\voter \alternativeter} 
= \normalizedscore{\voter \alternative} - \normalizedscore{\voter \alternativebis}
= (\scaling{\voter}^* \normalizedscore{*\alternative} + \translation{\voter}^*) - (\scaling{\voterbis}^* \normalizedscore{*\alternativebis} + \translation{\voterbis}^*)
= \scaling{\voter}^* (\normalizedscore{*\alternative} - \normalizedscore{*\alternativebis})$.
Therefore $\scaling{\voter}^* = \frac{1}{\normalizedscore{*\alternative} - \normalizedscore{*\alternativebis}} 
= \frac{\max_{\alternativeter, \alternativeter'} \absv{\normalizedscore{*\alternativeter} - \normalizedscore{*\alternativeter'}}}{\normalizedscore{*\alternative} - \normalizedscore{*\alternativebis}} 
\leq \scalingbound{} (\normalizedscore{*})
= \scalingbound{} (\voterscore{*})$.
Moreover, since by min-max normalization, we must also have $\normalizedscore{*\alternative} - \normalizedscore{*\alternativebis} \leq 1$,
we also conclude that $\scaling{\voter}^* \geq 1$.
Finally, observe that $0 = \normalizedscore{\voter \alternativebis} = \scaling{\voter}^* \normalizedscore{* \alternativebis} + \translation{\voter}^*$.
Thus $\translation{\voter} = - \scaling{\voter}^* \normalizedscore{* \alternativebis}$.
Since $\normalizedscore{* \alternativebis} \in [0,1]$, we must then have $\translation{\voter}^* \in [- \scaling{\voter}^*, 0] \subset [-\scalingbound{} (\voterscore{*}), 0]$.
\end{proof}

\begin{replemma}{lemma:identical-rescaling}
Suppose $\voterscore{*}$-unanimity, comparability, and that alternatives $(8 \scalingbound{} (\voterscore{*})^2) / \lipschitz)$-scored.
Then the voters' re-scaled scores are consistent, 
in the sense that $\scaling{\voter} \normalizedscore{\voter \alternative} + \translation{\voter} = \scaling{\voterbis} \normalizedscore{\voterbis \alternative} + \translation{\voterbis}$,
for all voters $\voter, \voterbis \in [\VOTER]$ and alternatives $\alternative \in \ALTERNATIVE_{\voter \voterbis}$ that both voters scored.
\end{replemma}

\begin{proof}
Under $\voterscore{*}$-unanimity and comparability,
by virtue of Lemma~\ref{lemma:pre-normalization-bound},
we know that $\scaling{\voter \voterbis} 
    = \frac{1}{\card{\ALTERNATIVEPAIR_{\voter \voterbis}}}
    \sum_{(\alternative, \alternativebis) \in \ALTERNATIVEPAIR_{\voter \voterbis}} 
    \frac{\absv{\normalizedscore{\voterbis \alternative} - \normalizedscore{\voterbis \alternativebis}}}{\absv{\normalizedscore{* \alternative} - \normalizedscore{* \alternativebis}}}
    \frac{\absv{\normalizedscore{* \alternative} - \normalizedscore{* \alternativebis}}}{\absv{\normalizedscore{\voter \alternative} - \normalizedscore{\voter \alternativebis}}}
    = \frac{1}{\card{\ALTERNATIVEPAIR_{\voter \voterbis}}}
    \sum_{(\alternative, \alternativebis) \in \ALTERNATIVEPAIR_{\voter \voterbis}} \frac{\scaling{\voterbis}^*}{\scaling{\voter}^*}
    = \frac{\scaling{\voterbis}^*}{\scaling{\voter}^*}$.
Using the bounds of Lemma~\ref{lemma:pre-normalization-bound}, we know that $\scaling{\voter \voterbis} \in [0, \scalingbound{}(\voterscore{*})]$.
In particular, $\absv{\scaling{\voter \voterbis}} -1 \leq \scalingbound{}(\voterscore{*})$ (using also $\scalingbound{}(\voterscore{*}) \geq 1$).
Theorem~\ref{th:brmean} (combined with comparability) then guarantees that for $\norm{\votingrightsfamily{}}{1} \geq 8 \scaling{} (\voterscore{*}) / \lipschitz$,
we have $\scaling{\voter} \triangleq 1+ \brmean{\lipschitz} (\votingrightsfamily{}, \vec{s}_{\voter} - 1) 
= 1 + \mean{} (\votingrightsfamily{}, \vec{s}_{\voter} - 1)
= \frac{1}{\norm{\votingrightsfamily{}}{1}} \sum_\voterbis \votingrights{\voterbis} \frac{\scaling{\voterbis}^*}{\scaling{\voter}^*} 
= \frac{1}{\scaling{\voter}^*} \mean (\votingrightsfamily{}, \vec{s^*})$,
where $\vec{s}_\voter \triangleq (\scaling{\voter \voterbis})_{\voterbis \in [\VOTER]}$
and $\vec{s^*} \triangleq (\scaling{\voter}^*)_{\voter \in [\VOTER]}$.
Crucially, we then have $\scaling{\voter} \scaling{\voter}^* = \mean (\votingrightsfamily{}, \vec{s^*})$, which is independent from $\voter$.
In particular, we then have $\scaling{\voter} \scaling{\voter}^* = \scaling{\voterbis} \scaling{\voterbis}^*$ for all voters $\voter, \voterbis$.
In fact, we make the additional remark that $\scaling{\voter} \scaling{\voter}^*$ is an average of values $\scaling{\voter}^*$, which all belong to $[1, \scalingbound{}(\voterscore{*})]$.
Thus $\scaling{\voter} \scaling{\voter}^* \in [1, \scalingbound{}(\voterscore{*})]$.

Now we note that we have the equality $\translation{\voter \voterbis} 
= \frac{1}{\ALTERNATIVE_{\voter \voterbis}} \sum_{\alternative \in \ALTERNATIVE_{\voter \voterbis}} \left( \scaling{\voterbis} (\scaling{\voterbis}^* \normalizedscore{* \alternative} + \translation{\voterbis}^*) - \scaling{\voter} (\scaling{\voter}^* \normalizedscore{* \alternative} + \translation{\voter}^*) \right)
= \frac{1}{\ALTERNATIVE_{\voter \voterbis}} \sum_{\alternative \in \ALTERNATIVE_{\voter \voterbis}} \left( \scaling{\voterbis} \translation{\voterbis}^* - \scaling{\voter}  \translation{\voter}^* \right) 
= \scaling{\voterbis} \translation{\voterbis}^* - \scaling{\voter}  \translation{\voter}^*$,
using the equality $\scaling{\voter} \scaling{\voter}^* = \scaling{\voterbis} \scaling{\voterbis}^*$.
But given Lemma~\ref{lemma:pre-normalization-bound},
we know that $- \scalingbound{} (\voterscore{*})^2 \leq \scaling{\voter} \translation{\voter}^* \leq 0$,
thus $\absv{\translation{\voter \voterbis}} \leq \scalingbound{} (\voterscore{*})^2$.
Theorem~\ref{th:brmean} (combined with comparability) then guarantees that for $\norm{\votingrightsfamily{}}{1} \geq 8 \scaling{} (\voterscore{*})^2 / \lipschitz$,
we have $\translation{\voter} 
\triangleq \brmean{\lipschitz} (\votingrightsfamily{}, \vec{\translation{}}_\voter) 
= \mean{} (\votingrightsfamily{}, \vec{\translation{}}_\voter)
= \mean{} (\votingrightsfamily{}, (\scaling{\voterbis} \translation{\voterbis}^*)_{\voterbis \in [\VOTER]}) - \scaling{\voter} \translation{\voter}^*$.
As a result, $\translation{\voter} + \scaling{\voter} \translation{\voter}^* = \mean{} (\votingrightsfamily{}, (\scaling{\voterbis} \translation{\voterbis}^*)_{\voterbis \in [\VOTER]})$,
which is independent from $\voter$.

We conclude by noting that, for any voter $\voter \in [\VOTER]$ and any alternative $\alternative \in \ALTERNATIVE_\voter$, we then have
$\scaling{\voter} \normalizedscore{\voter \alternative} + \translation{\voter}
= \scaling{\voter} (\scaling{\voter}^* \normalizedscore{* \alternative} + \translation{\voter}^*) + \translation{\voter}
= \scaling{\voter} \scaling{\voter}^* \normalizedscore{* \alternative} + \translation{\voter} + \scaling{\voter} \translation{\voter}^*
= \mean (\votingrightsfamily{}, \vec{s^*}) \normalizedscore{* \alternative} + \mean (\votingrightsfamily{}, (\scaling{\voterbis} \translation{\voterbis}^*)_{\voterbis \in [\VOTER]})$,
which is independent from voter $\voter$.
\end{proof}

\section{Other Proofs}
\label{sec:app-other-proofs}

\begin{proposition}
\label{prop:qrmed-lipschitz}
$\qrmedian{}$ is $1$-Lipschitz continuous with respect to the $\ell_\infty$-norm.
That is, for any $\xi \in \setR^\VOTER$, we have
    \begin{equation*}
        \absv{\qrmedian{\lipschitz} (\votingrightsfamily{}, \voterscorefamily{}+\xi) - \qrmedian{\lipschitz} (\votingrightsfamily{}, \voterscorefamily{})}
        \leq \norm{\xi}{\infty}.
    \end{equation*}
\end{proposition}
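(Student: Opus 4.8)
The plan is to mirror the argument used for Theorem~\ref{th:qrmed-br}: exploit the $(1/\lipschitz)$-strong convexity of the objective together with a pointwise comparison of subgradients. Write $\ell(z) \triangleq \Loss_{\qrmedian{\lipschitz}}(z \mid \votingrightsfamily{}, \voterscorefamily{})$ and $\ell'(z) \triangleq \Loss_{\qrmedian{\lipschitz}}(z \mid \votingrightsfamily{}, \voterscorefamily{} + \xi)$, let $q, q'$ be their respective (unique, by strong convexity) minimizers, and set $\delta \triangleq \norm{\xi}{\infty}$. We want $\absv{q - q'} \leq \delta$. The case $\delta = 0$ is immediate, so assume $\delta > 0$; and by symmetry — exchanging the roles of $\voterscorefamily{}$ and $\voterscorefamily{} + \xi$, i.e.\ replacing $\xi$ by $-\xi$, which has the same $\ell_\infty$ norm — it suffices to prove $q' \leq q + \delta$.

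The key step is a subgradient domination inequality: for every $z \in \setR$,
\[ \inf \partial \ell'(z) \;\geq\; \inf \partial \ell(z - \delta). \]
This is argued termwise. The quadratic part contributes $z/\lipschitz$ to $\partial\ell'(z)$ and $(z-\delta)/\lipschitz$ to $\partial\ell(z-\delta)$, and $z/\lipschitz \geq (z-\delta)/\lipschitz$. For each voter $\voter \in \VOTER^*$, the subdifferential of $\votingrights{\voter}\absv{z - \voterscore{\voter} - \xi_\voter}$ has infimum $\votingrights{\voter}$ if $z > \voterscore{\voter} + \xi_\voter$ and $-\votingrights{\voter}$ if $z \leq \voterscore{\voter} + \xi_\voter$; since $\xi_\voter \leq \delta$, whenever $z \leq \voterscore{\voter} + \xi_\voter$ we also have $z - \delta \leq \voterscore{\voter}$, so the corresponding term of $\ell$ at $z-\delta$ likewise has subdifferential infimum $-\votingrights{\voter}$, while if $z > \voterscore{\voter} + \xi_\voter$ the left-hand side equals $+\votingrights{\voter}$, which dominates whatever value ($\pm\votingrights{\voter}$) the right-hand side takes. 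Summing these term-by-term inequalities — using that the infimum of a Minkowski sum of intervals is the sum of the infima — yields the displayed bound.

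It then remains to combine this with optimality. Since $q'$ minimizes $\ell'$, $0 \in \partial\ell'(q')$, hence $\inf\partial\ell'(q') \leq 0$, and the domination inequality gives $\inf\partial\ell(q' - \delta) \leq 0$. On the other hand, $q$ minimizes $\ell$, so $0 \in \partial\ell(q)$; using $(1/\lipschitz)$-strong convexity of $\ell$ (Definition~\ref{def:strong-convex}) with the subgradient $0$ at $q$, for every $z > q$ and every $g \in \partial\ell(z)$ we get $g\,(z - q) \geq (z-q)^2/\lipschitz$, hence $g \geq (z-q)/\lipschitz > 0$, so $\inf\partial\ell(z) > 0$ for all $z > q$. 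Combined with $\inf\partial\ell(q'-\delta) \leq 0$ this forces $q' - \delta \leq q$, i.e.\ $q' \leq q + \delta$, and the symmetric argument gives $q \leq q' + \delta$, whence $\absv{q - q'} \leq \delta = \norm{\xi}{\infty}$.

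The main obstacle is precisely the termwise subgradient comparison: it must be done with care because the perturbation $\xi$ acts coordinate-wise (so $\qrmedian{}$ is not simply translation-invariant, and a crude ``$\ell - \ell'$ is Lipschitz'' bound loses all dependence on $\xi$), and because the subdifferential of $\absv{\cdot}$ is set-valued at its kink, so one must track infima (and, for the symmetric half, suprema) of the subdifferentials rather than a single derivative value. Everything else is a routine use of strong convexity, exactly as in the proof of Theorem~\ref{th:qrmed-br}. An essentially equivalent alternative route is to use monotonicity of $\qrmedian{}$ in the data to sandwich $\qrmedian{\lipschitz}(\votingrightsfamily{}, \voterscorefamily{} + \xi)$ between $\qrmedian{\lipschitz}(\votingrightsfamily{}, \voterscorefamily{})$ with $\pm\delta$ added to every coordinate, and then observe that adding the constant $\delta$ to all scores shifts the minimizer by at most $\delta$, since the substitution $w = z - \delta$ turns this into adding the linear term $\delta w/\lipschitz$ to $\ell$, which moves its minimizer monotonically.
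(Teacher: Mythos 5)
Your proof is correct and follows essentially the same route as the paper's: a termwise comparison of the subdifferentials of the perturbed and unperturbed objectives at points offset by $\norm{\xi}{\infty}$, combined with (strong) convexity to localize the minimizer. The only cosmetic difference is the direction of the argument — the paper starts from $0 \in \partial \ell(q)$ and exhibits sign conditions for $\partial \ell'$ at $q \pm \norm{\xi}{\infty}$, while you start from $0 \in \partial \ell'(q')$ and push back to $\ell$ at $q' - \norm{\xi}{\infty}$ — and your explicit tracking of infima of the set-valued subdifferentials is, if anything, slightly more careful than the paper's use of $\sign$.
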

\begin{proof}
Let $\xi \in \setR^\VOTER$. For simplicity, we will use the notation $q \triangleq \qrmedian{\lipschitz} (\votingrightsfamily{}, \voterscorefamily{})$.
Let us also denote for all $\alternative \in [\ALTERNATIVE]$ the loss function as follows:
\begin{align}
  L_{\xi}(x) = \qrmedian{\lipschitz} (x|\votingrightsfamily{}, \voterscorefamily{}+\xi) \triangleq \sum_{\voter \in [\VOTER]} \votingrights{\voter } \absv{x - \voterscore{\voter} - \xi_\voter} + \frac{1}{2\lipschitz} x^2.
\end{align}

We will prove that
\begin{align}
    &\partial L_{\xi}(q+\norm{\xi}{\infty}) \cap \setR^+ \neq \varnothing, \text{ and} \label{eq:qrmed-properties-1} \\
    &\partial L_{\xi}(q-\norm{\xi}{\infty}) \cap \setR^- \neq \varnothing. \label{eq:qrmed-properties-2}
\end{align}

Given that $L_{\xi}$ is strictly convex, and since it is differentiable almost everywhere, its derivative $L_{\xi}'$ is increasing. 
We will only show \eqref{eq:qrmed-properties-1} holds, as \eqref{eq:qrmed-properties-2} is an analogous case.

Now, since $q$ minimizes $L$, we know that $0 \in \partial L(q)$. This implies that
\begin{align*}
    0 &= \frac{q}{\lipschitz} + \sum_{\voter \in [\VOTER]} \votingrights{\voter} \sign(q-\voterscore{\voter})
    \leq \frac{1}{\lipschitz} \norm{\xi}{\infty} + \frac{q}{\lipschitz} + \sum_{\voter \in [\VOTER]} \votingrights{\voter} \sign(q-\voterscore{\voter})\\
    &\leq \frac{1}{\lipschitz} \cdot (q+\norm{\xi}{\infty}) + \sum_{\voter \in [\VOTER]} \votingrights{\voter} \sign(q+\norm{\xi}{\infty}-\voterscore{\voter}-\xi_\voter)
    \in \partial L_{\xi}(q+\norm{\xi}{\infty}).
\end{align*}
This proves \eqref{eq:qrmed-properties-1} (and \eqref{eq:qrmed-properties-2} by analogy) and concludes the proof.
\end{proof}

\begin{lemma}
\label{lemma:clip-lipschitz}
\clip{} is 1-Lipschitz continuous, i.e.
\begin{equation*}
    \forall x, \xi, \diameter, \mu \in \setR \mathsep 
    \absv{ \clip{} (x + \xi | \mu, \diameter) - \clip{} (x | \mu, \diameter) } \leq \absv{\xi}.
\end{equation*}
\end{lemma}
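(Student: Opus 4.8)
The plan is to reduce the claim to the (one‑dimensional) nonexpansiveness of projection onto an interval. First I would dispose of the degenerate regime $\diameter \le 0$: there $\min\{\mu+\diameter, x\} \le \mu+\diameter \le \mu-\diameter$, so $\clip(x \mid \mu,\diameter) = \mu-\diameter$ does not depend on $x$ at all, and the claimed inequality holds trivially (its left‑hand side is $0 \le \absv{\xi}$). So from now on assume $\diameter > 0$, in which case $[\mu-\diameter,\,\mu+\diameter]$ is a genuine interval and $g \colon x \mapsto \clip(x\mid\mu,\diameter)$ is precisely the projection of $x$ onto it.

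Next I would record two elementary facts about $g$, writing $c \triangleq \mu-\diameter$ and $d \triangleq \mu+\diameter$, so $c \le d$. (i) $g$ is non-decreasing, since $x \mapsto \min\{d,x\}$ and $y \mapsto \max\{c,y\}$ are both non-decreasing. (ii) If $a \le b$ then $g(b)-g(a) \le b-a$. For (ii) I would split on where $a$ lies: if $a \ge d$, then $b \ge d$ as well, so $g(a)=g(b)=d$ and the difference is $0$; if $c \le a \le d$, then $g(a)=a$, while $b \ge a \ge c$ forces $g(b)=\min\{d,b\} \le b$, hence $g(b)-g(a) \le b-a$; and if $a \le c$, then $g(a)=c$, while either $b \le c$ (so $g(b)=c$ and the difference is $0 \le b-a$) or $b \ge c$ (so $g(b)=\min\{d,b\} \le b$, giving $g(b)-g(a) \le b-c \le b-a$). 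In every case the inequality holds.

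From (i)--(ii) the lemma is immediate: given $x$ and $\xi$, set $a \triangleq \min\{x,x+\xi\}$ and $b \triangleq \max\{x,x+\xi\}$, so $b-a = \absv{\xi}$; then $\absv{g(x+\xi)-g(x)} = g(b)-g(a) \le b-a = \absv{\xi}$, using monotonicity for the first equality and (ii) for the bound. This is exactly the assertion of the lemma.

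The main obstacle here is essentially nonexistent: the only points needing care are the collapse of $\clip$ to a constant when $\diameter \le 0$ and keeping the case split in (ii) exhaustive. One could instead cite the general fact that the metric projection onto a nonempty closed convex set of a Hilbert space is $1$-Lipschitz, or note that $\clip(x\mid\mu,\diameter)$ is the median of the three reals $\mu-\diameter$, $x$, $\mu+\diameter$ (and the median is $1$-Lipschitz in each argument); the self-contained case analysis above is, however, just as short and matches the style of the neighbouring lemmas~\ref{lemma:clip_lipschitz_radius} and~\ref{lemma:clip_lipschitz_center}.
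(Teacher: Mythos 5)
Your proof is correct. It takes a somewhat different route from the paper's: the paper disposes of this lemma in one line by observing that $x \mapsto \clip(x \mid \mu, \diameter)$ is continuous and piecewise linear with subdifferential contained in $[0,1]$ everywhere, so it is $1$-Lipschitz by the mean-value/derivative-bound argument. You instead prove nonexpansiveness directly, via monotonicity plus an exhaustive case split on where the two arguments sit relative to the interval endpoints (equivalently, the standard fact that projection onto an interval is nonexpansive). Your version is longer but entirely self-contained and also explicitly handles the degenerate regime $\diameter \le 0$, where $\clip$ collapses to a constant, which the paper's one-liner glosses over; the paper's version buys brevity and matches the subdifferential bookkeeping it already uses in the proof of Theorem~\ref{th:qrmed-br}. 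Either argument is acceptable here.
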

\begin{proof}
We conclude by remarking that the function $x \mapsto \clip{}(x | \mu, \diameter)$ is piecewise, continuous, and that its subdifferential is a subset of $[0,1]$ at every point.
\end{proof}

\begin{lemma}
\label{lemma:clippedmean-lipschitz}
$\clippedmean{}$ is $1$-Lipschitz continuous with respect to the input scores.
Formally, for any $\xi \in \setR^\VOTER$, we have
\begin{equation*}
    \forall \votingrightsfamily{}, \voterscorefamily{} \mathsep 
    \forall \diameter, \mu \in \setR \mathsep 
    \absv{ \clippedmean{} (\votingrightsfamily{}, \voterscorefamily{} + \xi | \mu, \diameter) - \clippedmean{} (\votingrightsfamily{}, \voterscorefamily{} | \mu, \diameter) } 
    \leq \norm{\xi}{\infty}.
\end{equation*}
\end{lemma}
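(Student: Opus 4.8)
The plan is to mirror the proofs of Lemmas~\ref{lemma:clippedmean_lipschitz_radius} and \ref{lemma:clippedmean_lipschitz_center}, replacing the perturbation of the radius (resp.\ center) by a perturbation of the input scores, and reducing the statement about $\clippedmean{}$ to the coordinate-wise $1$-Lipschitz continuity of the scalar map $x \mapsto \clip{}(x | \mu, \diameter)$ already established in Lemma~\ref{lemma:clip-lipschitz}. Concretely, I would fix $\votingrightsfamily{}, \voterscorefamily{}, \mu, \diameter$ and $\xi \in \setR^\VOTER$, expand the definition of $\clippedmean{}$, and apply the triangle inequality (the weights $\votingrights{\voter} \geq 0$ being left untouched):
\begin{align*}
    &\absv{ \clippedmean{} (\votingrightsfamily{}, \voterscorefamily{} + \xi | \mu, \diameter) - \clippedmean{} (\votingrightsfamily{}, \voterscorefamily{} | \mu, \diameter) } \\
    &\qquad = \absv{ \frac{1}{\norm{\votingrightsfamily{}}{1}} \sum_{\voter \in [\VOTER]} \votingrights{\voter} \left( \clip{}(\voterscore{\voter} + \xi_\voter | \mu, \diameter) - \clip{}(\voterscore{\voter} | \mu, \diameter) \right) } \\
    &\qquad \leq \frac{1}{\norm{\votingrightsfamily{}}{1}} \sum_{\voter \in [\VOTER]} \votingrights{\voter} \absv{ \clip{}(\voterscore{\voter} + \xi_\voter | \mu, \diameter) - \clip{}(\voterscore{\voter} | \mu, \diameter) }.
\end{align*}
Then I would bound each summand via Lemma~\ref{lemma:clip-lipschitz}, giving $\absv{ \clip{}(\voterscore{\voter} + \xi_\voter | \mu, \diameter) - \clip{}(\voterscore{\voter} | \mu, \diameter) } \leq \absv{\xi_\voter} \leq \norm{\xi}{\infty}$, and conclude with $\frac{1}{\norm{\votingrightsfamily{}}{1}} \sum_{\voter} \votingrights{\voter} \norm{\xi}{\infty} = \norm{\xi}{\infty}$, which is exactly the claimed inequality.

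I do not expect any real obstacle: the argument is routine and follows the same template as the two earlier clipped-mean Lipschitz lemmas. The only mild point of care — shared with those lemmas — is to pass from the coordinate-dependent bound $\absv{\xi_\voter}$ to the uniform bound $\norm{\xi}{\infty}$ \emph{inside} the sum, before factoring the constant out of the normalized average; otherwise one is just reusing the scalar estimate of Lemma~\ref{lemma:clip-lipschitz} term by term. This Lipschitz-in-the-inputs property of $\clippedmean{}$ is what is needed downstream: it is the key ingredient behind Proposition~\ref{prop:brmean-lipschitz} on the Lipschitz continuity of $\brmean{}$, which in turn underlies the resilience analysis of \mehestan{} (e.g.\ Lemma~\ref{lemma:translation-resilience}).
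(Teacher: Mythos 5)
Your proposal is correct and matches the paper's proof essentially verbatim: expand $\clippedmean{}$, apply the triangle inequality, bound each summand by $\absv{\xi_\voter} \leq \norm{\xi}{\infty}$ via the scalar Lipschitz property of $\clip{}$, and factor the constant out of the normalized average. (You even cite the right ingredient, Lemma~\ref{lemma:clip-lipschitz}, whereas the paper's own proof mistakenly points to Lemma~\ref{lemma:clip_lipschitz_center} while using exactly the bound from Lemma~\ref{lemma:clip-lipschitz}.)
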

\begin{proof}
By triangle inequality and Lemma~\ref{lemma:clip_lipschitz_center}, we have
$\absv{ \clippedmean{} (\votingrightsfamily{}, \voterscorefamily{} + \xi | \mu, \diameter) - \clippedmean{} (\votingrightsfamily{}, \voterscorefamily{} | \mu, \diameter) }
\leq \frac{1}{\norm{\votingrightsfamily{}}{1}} \sum \votingrights{\voter} \absv{\clip{} (x_\voter +\xi_\voter | \mu, \diameter) - \clip{} (x_\voter | \mu, \diameter) }
\leq \frac{1}{\norm{\votingrightsfamily{}}{1}} \sum \votingrights{\voter} \absv{ \xi_\voter } \leq \norm{\xi}{\infty}$.
\end{proof}

\begin{proposition}
\label{prop:brmean-lipschitz}
$\brmean{}$ is $2$-Lipschitz continuous with respect to the $\ell_\infty$-norm.
Formally, for any $\xi \in \setR^\VOTER$, we have
    \begin{equation*}
        \absv{\brmean{\lipschitz} (\votingrightsfamily{}, \voterscorefamily{}+\xi) - \brmean{\lipschitz} (\votingrightsfamily{}, \voterscorefamily{})}
        \leq \norm{\xi}{\infty}.
    \end{equation*}
\end{proposition}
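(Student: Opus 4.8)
The plan is to exploit the two-stage structure of $\brmean{}$: it is a $\clippedmean{}$ whose \emph{center} is a $\qrmedian{}$ and whose \emph{radius} depends only on the voting rights, so that a perturbation $\xi$ of the scores enters only through the clipping center (via $\qrmedian{}$) and through the clipped points themselves. I would therefore compose the Lipschitz estimates already established for $\clippedmean{}$ and $\qrmedian{}$, namely Lemma~\ref{lemma:clippedmean_lipschitz_center}, Lemma~\ref{lemma:clippedmean-lipschitz}, and Proposition~\ref{prop:qrmed-lipschitz}.

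Concretely, fix $\votingrightsfamily{}$ and $\voterscorefamily{}$, write $\diameter \triangleq \lipschitz \norm{\votingrightsfamily{}}{1}/4$ for the radius (which is unchanged when $\voterscorefamily{}$ is perturbed), and set $q \triangleq \qrmedian{\lipschitz/4}(\votingrightsfamily{}, \voterscorefamily{})$ and $q_\xi \triangleq \qrmedian{\lipschitz/4}(\votingrightsfamily{}, \voterscorefamily{}+\xi)$. By definition $\brmean{\lipschitz}(\votingrightsfamily{}, \voterscorefamily{}) = \clippedmean{}(\votingrightsfamily{}, \voterscorefamily{} \mid q, \diameter)$ and $\brmean{\lipschitz}(\votingrightsfamily{}, \voterscorefamily{}+\xi) = \clippedmean{}(\votingrightsfamily{}, \voterscorefamily{}+\xi \mid q_\xi, \diameter)$. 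I would insert the hybrid term $\clippedmean{}(\votingrightsfamily{}, \voterscorefamily{}+\xi \mid q, \diameter)$ and split the difference by the triangle inequality into (i) a change of center only, from $q_\xi$ to $q$, and (ii) a change of scores only, from $\voterscorefamily{}+\xi$ to $\voterscorefamily{}$.

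For (i), Lemma~\ref{lemma:clippedmean_lipschitz_center} bounds the term by $\absv{q_\xi - q}$, and Proposition~\ref{prop:qrmed-lipschitz} (the $1$-Lipschitzness of $\qrmedian{}$ in $\norm{\cdot}{\infty}$) gives $\absv{q_\xi - q} \leq \norm{\xi}{\infty}$. For (ii), Lemma~\ref{lemma:clippedmean-lipschitz} (the $1$-Lipschitzness of $\clippedmean{}$ in the input scores, for fixed center and radius) bounds the term by $\norm{\xi}{\infty}$. Adding the two contributions yields $\absv{\brmean{\lipschitz}(\votingrightsfamily{}, \voterscorefamily{}+\xi) - \brmean{\lipschitz}(\votingrightsfamily{}, \voterscorefamily{})} \leq 2\norm{\xi}{\infty}$, i.e. the claimed $2$-Lipschitz continuity in the $\ell_\infty$-norm.

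There is no genuinely hard step here — the argument is pure composition. The two points that need care are: (a) recording that the radius $\diameter$ is a function of $\votingrightsfamily{}$ alone, so that Lemma~\ref{lemma:clippedmean_lipschitz_radius} (Lipschitzness in the radius) is not needed in this proof; and (b) the bookkeeping of the two separate $\norm{\xi}{\infty}$ contributions — one from displacing the clipping center through $\qrmedian{}$, and one from displacing the clipped points. These two effects do not cancel in general, so the constant $2$ produced by this decomposition is the natural one (and the displayed bound should be read with this factor).
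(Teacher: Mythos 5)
Your proposal is correct and is essentially identical to the paper's own proof: the same hybrid term $\clippedmean{}(\votingrightsfamily{}, \voterscorefamily{}+\xi \mid q, \lipschitz\norm{\votingrightsfamily{}}{1}/4)$ is inserted, and the same three results (Lemma~\ref{lemma:clippedmean_lipschitz_center}, Proposition~\ref{prop:qrmed-lipschitz}, Lemma~\ref{lemma:clippedmean-lipschitz}) are composed to obtain the bound $2\norm{\xi}{\infty}$. Your closing remark is also right that the displayed inequality in the statement should carry the factor $2$ to match the ``$2$-Lipschitz'' claim, which is exactly what the paper's derivation produces.
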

\begin{proof}
Denote $q_{+\xi} \triangleq \qrmedian{\lipschitz / 4} (\votingrightsfamily{}, \voterscorefamily{} + \xi)$
and $q \triangleq \qrmedian{\lipschitz / 4} (\votingrightsfamily{}, \voterscorefamily{} )$.
Using the triangle inequality, we have
\begin{align*}
    &\absv{\brmean{\lipschitz} (\votingrightsfamily{}, \voterscorefamily{} + \xi) - \brmean{\lipschitz} (\votingrightsfamily{}, \voterscorefamily{})}
    \\ 
    &\quad =\absv{\clippedmean{} \left( \votingrightsfamily{}, \voterscorefamily{} + \xi \st q_{+\xi}, \frac{\lipschitz \norm{\votingrightsfamily{}}{1}}{4} \right) - \clippedmean{} \left( \votingrightsfamily{}, \voterscorefamily{} \st q, \frac{\lipschitz \norm{\votingrightsfamily{}}{1}}{4} \right)} \\
    &\quad \leq \absv{\clippedmean{} \left( \votingrightsfamily{}, \voterscorefamily{} + \xi \st q_{+\xi}, \frac{\lipschitz \norm{\votingrightsfamily{}}{1}}{4} \right) - \clippedmean{} \left( \votingrightsfamily{}, \voterscorefamily{} + \xi \st q, \frac{\lipschitz \norm{\votingrightsfamily{}}{1}}{4} \right)} \\
    &\qquad + \absv{\clippedmean{} \left( \votingrightsfamily{}, \voterscorefamily{} + \xi \st q, \frac{\lipschitz \norm{\votingrightsfamily{}}{1}}{4} \right) - \clippedmean{} \left( \votingrightsfamily{}, \voterscorefamily{} \st q, \frac{\lipschitz \norm{\votingrightsfamily{}}{1}}{4} \right)} \\
    &\quad \leq \absv{q_{+\xi} - q}
    + \absv{\clippedmean{} \left( \votingrightsfamily{}, \voterscorefamily{} + \xi \st q, \frac{\lipschitz \norm{\votingrightsfamily{}}{1}}{4} \right) - \clippedmean{} \left( \votingrightsfamily{}, \voterscorefamily{} \st q, \frac{\lipschitz \norm{\votingrightsfamily{}}{1}}{4} \right)}\\
    &\quad \leq \norm{\xi}{\infty} + \absv{\clippedmean{} \left( \votingrightsfamily{}, \voterscorefamily{} + \xi \st q, \frac{\lipschitz \norm{\votingrightsfamily{}}{1}}{4} \right) - \clippedmean{} \left( \votingrightsfamily{}, \voterscorefamily{} \st q, \frac{\lipschitz \norm{\votingrightsfamily{}}{1}}{4} \right)}\\
    &\quad \leq 2\norm{\xi}{\infty},
\end{align*}
where the last three steps are successively due to Lemma~\ref{lemma:clippedmean_lipschitz_center}, Proposition~\ref{prop:qrmed-lipschitz} and Lemma~\ref{lemma:clippedmean-lipschitz}.
\end{proof}

\begin{lemma}
\label{lemma:qrmed-med}
$\qrmedian{\lipschitz} (\votingrightsfamily{}, \voterscorefamily{})$ has the same sign as $\median(\votingrightsfamily{}, \voterscorefamily{})$ and $\absv{\qrmedian{\lipschitz} (\votingrightsfamily{}, \voterscorefamily{})} \leq \absv{\median(\votingrightsfamily{}, \voterscorefamily{})}$.
\end{lemma}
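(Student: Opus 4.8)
The plan is to read both $\median$ and $\qrmedian{\lipschitz}$ off the (sub)optimality conditions of two closely related convex objectives and compare them. Set $g(z)\triangleq\sum_{\voter\in\VOTER^{*}}\votingrights{\voter}\absv{z-\voterscore{\voter}}$, so that $M\triangleq\median(\votingrightsfamily{},\voterscorefamily{})$ is the minimizer of the convex, piecewise-linear $g$ closest to $0$, whereas $q\triangleq\qrmedian{\lipschitz}(\votingrightsfamily{},\voterscorefamily{})$ is the unique minimizer of $L(z)\triangleq g(z)+\tfrac{1}{2\lipschitz}z^{2}$. The minimizer set of $g$ is a closed interval $[M_{-},M_{+}]$ characterized by $0\in\partial g(\cdot)$, and $M$ is whichever endpoint of $[M_{-},M_{+}]$ is closest to $0$ (so $M=0$ when $0\in[M_{-},M_{+}]$). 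The only optimality facts I will use are $0\in\partial g(M_{\pm})$ and, since $\partial L(q)=\partial g(q)+q/\lipschitz\ni 0$, the relation $-q/\lipschitz\in\partial g(q)$.

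First I would record the elementary monotonicity facts about the nondecreasing one-sided derivatives $g'_{-}\le g'_{+}$ of $g$: $g'_{+}(z)<0$ for every $z<M_{-}$ (otherwise $0\in\partial g(z)$ and $z$ would minimize $g$, contradicting $z<M_{-}$), and $g'_{+}(M_{-})\ge 0$ since $0\in\partial g(M_{-})$. By the symmetry $z\mapsto -z$, which negates both $q$ and $M$ (including the tie-break toward $0$), it suffices to handle $M\ge 0$.

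Then I would split on $M$. If $M=0$, then $0\in\partial g(0)$, hence $0\in\partial L(0)$ because the quadratic term has zero derivative at $0$, so $q=0=M$ and we are done. If $M>0$, then $M=M_{-}>0$, and I would argue in two steps. Step one, $q>0$: were $q\le 0$ then $q<M_{-}$, so $g'_{+}(q)<0$, yet $-q/\lipschitz\ge 0$ lies in $\partial g(q)$, forcing $\sup\partial g(q)=g'_{+}(q)\ge 0$ — a contradiction. Step two, $q\le M$: were $q>M_{-}$ then convexity together with $g'_{+}(M_{-})\ge 0$ give $\inf\partial g(q)=g'_{-}(q)\ge g'_{+}(M_{-})\ge 0$, yet the negative number $-q/\lipschitz$ (negative since $q>0$) lies in $\partial g(q)$ — a contradiction. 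Hence $0<q\le M$, which yields both $\sign(q)=\sign(M)$ and $\absv{q}\le\absv{M}$.

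The substance of the proof is this bookkeeping with subdifferentials; the one genuinely delicate point is handling the endpoints of $[M_{-},M_{+}]$ and the degenerate case $M_{-}=M_{+}$ — specifically, checking that $g'_{+}$ is strictly negative on the open ray $(-\infty,M_{-})$ while $g'_{+}(M_{-})\ge 0$ and $g'_{-}(M_{-})\le 0$ at the endpoint itself. These are exactly the statements guaranteed by $M_{-}$ being the smallest minimizer of $g$, so no real difficulty arises beyond being careful with one-sided-derivative conventions and with ties among the values $\voterscore{\voter}$.
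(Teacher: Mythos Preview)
Your argument via the subdifferential of $g$ is correct, and you are appropriately careful with the endpoint $M_{-}$ and with the degenerate case $M=0$.

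The paper takes a shorter and more direct route: it never looks at first-order conditions at all, but simply compares values of the loss $\Loss_{\qrmedian{\lipschitz}}(z)=g(z)+\tfrac{1}{2\lipschitz}z^{2}$ at well-chosen test points. Concretely, if $q$ and $M$ had opposite signs then $0$ lies strictly between them, and since $g$ is nonincreasing toward $M$ one gets $g(0)\le g(q)$ while the quadratic term strictly drops, so $\Loss_{\qrmedian{\lipschitz}}(0)<\Loss_{\qrmedian{\lipschitz}}(q)$, contradicting optimality of $q$. Similarly, if $\absv{q}>\absv{M}$ (with the signs already matched), then $g(M)\le g(q)$ because $M$ minimizes $g$, and again the quadratic term is strictly smaller at $M$, so $\Loss_{\qrmedian{\lipschitz}}(M)<\Loss_{\qrmedian{\lipschitz}}(q)$. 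This ``plug in a better point'' argument is two lines, whereas your subdifferential bookkeeping requires tracking $g'_{\pm}$ across the interval $[M_{-},M_{+}]$. On the other hand, your version makes the tie-breaking convention for $\median$ and the cases $M=0$, $q=0$ completely explicit, which the paper's proof glosses over.
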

\begin{proof}
Recall the notation from Equation~\ref{eq:def-qrmed-appendix}.
If $\sign(\qrmedian{\lipschitz} (\votingrightsfamily{}, \voterscorefamily{})) \neq \sign(\median(\votingrightsfamily{}, \voterscorefamily{}))$, then $\Loss_{\qrmedian{\lipschitz}}(0) < \Loss_{\qrmedian{\lipschitz}}(\qrmedian{\lipschitz} (\votingrightsfamily{}, \voterscorefamily{})) $. 
This contradicts the fact that $\qrmedian{\lipschitz} (\votingrightsfamily{}, \scorefamily{})$ minimizes $\Loss_{\qrmedian{\lipschitz}}$. This proves the first assertion.

If $\absv{\qrmedian{\lipschitz} (\votingrightsfamily{}, \voterscorefamily{})} > \absv{\median(\votingrightsfamily{}, \voterscorefamily{})}$, then we have $\Loss_{\qrmedian{\lipschitz}}(\median (\votingrightsfamily{}, \voterscorefamily{})) < \Loss_{\qrmedian{\lipschitz}}(\qrmedian{\lipschitz} (\votingrightsfamily{}, \voterscorefamily{}))$. 
This contradicts the fact that $\qrmedian{\lipschitz} (\votingrightsfamily{}, \voterscorefamily{})$ minimizes $\Loss_{\qrmedian{\lipschitz}}$.
This proves the second assertion and concludes the proof.
\end{proof}

\section{Technical lemmas}

\begin{lemma}
\label{lem:cesaro}
Let $(u_n)_{n \geq 1}$ a sequence of real numbers.
Consider the sequence $(S_n = \frac{1}{n} \sum_{k=1}^n u_k)_{n \geq 1}$.
If $(u_n)_{n \geq 1}$ converges to $l \in \setR$, then $(S_n)_{n \geq 1}$ converges to $l$ as well.
\end{lemma}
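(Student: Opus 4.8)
The plan is to use the standard $\varepsilon$/splitting argument for Cesàro means. Fix $\varepsilon > 0$. Since $(u_n)_{n \geq 1}$ converges to $l$, there exists an index $N$ such that $\absv{u_k - l} \leq \varepsilon / 2$ for all $k > N$. The key algebraic observation is that $S_n - l = \frac{1}{n} \sum_{k=1}^n (u_k - l)$, which I would split into the ``head'' $\frac{1}{n}\sum_{k=1}^{N}(u_k - l)$ and the ``tail'' $\frac{1}{n}\sum_{k=N+1}^{n}(u_k - l)$.

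For the tail, the triangle inequality together with the bound $\absv{u_k - l} \leq \varepsilon/2$ for $k > N$ gives that its absolute value is at most $\frac{n-N}{n}\cdot\frac{\varepsilon}{2} \leq \frac{\varepsilon}{2}$, uniformly in $n > N$. For the head, the numerator $C \triangleq \sum_{k=1}^{N}\absv{u_k - l}$ is a fixed finite constant (independent of $n$), so $\frac{C}{n} \to 0$ as $n \to \infty$; hence there exists $N' \geq N$ such that $\frac{C}{n} \leq \varepsilon/2$ for all $n \geq N'$. Combining the two bounds, $\absv{S_n - l} \leq \varepsilon$ for all $n \geq N'$, which establishes convergence of $(S_n)$ to $l$ since $\varepsilon$ was arbitrary.

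There is no real obstacle here; the only point requiring a little care is to make sure the two thresholds are chosen in the right order — first pick $N$ from the convergence of $(u_n)$ to control the tail, and only afterwards pick $N' \geq N$ to control the (now fixed) head — so that both halves are simultaneously below $\varepsilon/2$ for $n \geq N'$. I would also note explicitly that the argument is trivial in the degenerate range $n \leq N$ is simply absorbed by taking $N'$ large, so no separate case analysis is needed.
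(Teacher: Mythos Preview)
Your proof is correct and follows essentially the same head/tail splitting argument as the paper: pick $N$ from convergence of $(u_n)$ to bound the tail by $\varepsilon/2$, then choose a larger threshold to make the fixed head term $\frac{1}{n}\sum_{k\leq N}(u_k-l)$ smaller than $\varepsilon/2$. Your write-up is in fact slightly cleaner than the paper's, which has a minor slip in the head term (it writes $\frac{1}{n}\sum_{k=1}^{N_1} u_k$ rather than $\frac{1}{n}\sum_{k=1}^{N_0}(u_k-l)$).
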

\begin{proof}
Assume $(u_n)_{n \geq 1}$ converges to $l \in \setR$.
Let $\varepsilon > 0$.

By the convergence of $(u_n)_{n \geq 1}$ to $l$, we know that there exists $N_0$ such that for all $n \geq N_0, \absv{u_n - l} \leq \epsilon/2$.
Also, we know that there exists $N_1 \geq N_0$ such that for all $n \geq N_1$, we have $\absv{\frac{1}{n} \sum_{k=1}^{N_1} u_k} \leq \epsilon/2$.
As a consequence, we have for all $n > N_1$
\begin{align*}
    \absv{S_n - l}
     \leq \absv{\frac{1}{n} \sum_{k=1}^{N_1} u_k} + \absv{\frac{1}{n} \sum_{k=N_1 + 1}^{n} (u_k - l)}
     \leq \varepsilon/2 + \varepsilon/2 = \varepsilon.
\end{align*}
\end{proof}

\begin{lemma}
\label{lem:med-cesaro}
Let $(u_n)_{n \geq 1}$ a sequence of real numbers.
Consider the sequence $(M_n = \median{}((u_k)_{1 \leq k \leq n})$.
If $(u_n)_{n \geq 1}$ converges to $l \in \setR$, then $(M_n)_{n \geq 1}$ converges to $l$ as well.
\end{lemma}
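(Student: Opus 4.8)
The plan is to mimic the Cesàro-type argument of Lemma~\ref{lem:cesaro}, but replacing the averaging stability by the fact that a median of a finite list is insensitive to a minority of aberrant values. Fix $\varepsilon > 0$. By convergence of $(u_n)_{n\geq1}$ to $l$, pick $N_0$ such that $\absv{u_k - l} \leq \varepsilon$ for every $k \geq N_0$, i.e. $u_k \in [l - \varepsilon, l + \varepsilon]$ for all $k \geq N_0$. The key observation is that once $n$ is large enough that strictly more than half of the indices in $\set{1, \ldots, n}$ are at least $N_0$ — which holds as soon as $n > 2 N_0$, since then $\card{\set{k : N_0 \leq k \leq n}} = n - N_0 + 1 > n/2$ — a strict majority of the sampled values $u_1, \ldots, u_n$ lies in $[l - \varepsilon, l + \varepsilon]$.

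From there I would conclude by contradiction, using only the defining counting property of the median: any $M_n \triangleq \median{}((u_k)_{1 \leq k \leq n})$ satisfies $\card{\set{k \leq n : u_k < M_n}} \leq n/2$ and $\card{\set{k \leq n : u_k > M_n}} \leq n/2$. Suppose $M_n > l + \varepsilon$ for some $n > 2 N_0$. Then every index $k$ with $N_0 \leq k \leq n$ satisfies $u_k \leq l + \varepsilon < M_n$, hence $\card{\set{k \leq n : u_k < M_n}} \geq n - N_0 + 1 > n/2$, a contradiction. Symmetrically, $M_n < l - \varepsilon$ would force $\card{\set{k \leq n : u_k > M_n}} > n/2$, again impossible. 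Therefore $\absv{M_n - l} \leq \varepsilon$ for all $n > 2 N_0$, and since $\varepsilon > 0$ was arbitrary, $M_n \to l$.

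I do not expect a genuine obstacle here: the argument relies only on the two-sided counting inequality shared by every median (so it is unaffected by the paper's tie-breaking convention of taking the median closest to zero), together with the elementary estimate $n - N_0 + 1 > n/2$ for $n$ large. The only point requiring a little care is to state the threshold on $n$ explicitly and correctly; $n > 2 N_0$ is a safe and simple choice, and this lemma feeds directly into the final-aggregation step of \mehestan{} exactly as Lemma~\ref{lem:cesaro} does.
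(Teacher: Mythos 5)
Your proof is correct and follows essentially the same route as the paper's: both arguments observe that for $n$ beyond roughly $2N_0$ a strict majority of the terms $u_1,\ldots,u_n$ lie in $[l-\varepsilon, l+\varepsilon]$ and conclude that the median must lie there too. You merely make explicit the two-sided counting property of the median that the paper leaves implicit, which is a slight gain in rigor but not a different approach.
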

\begin{proof}
Assume $(u_n)_{n \geq 1}$ converges to $l \in \setR$.
Let $\varepsilon > 0$.
By the convergence of $(u_n)_{n \geq 1}$ to $l$, we know that there exists $N_0$ such that for all $n \geq N_0, \absv{u_n - l} \leq \epsilon$.
This implies that for all $n \geq 2 N_0 + 1$, we have $\absv{M_n - l} \leq \epsilon$. Indeed, when $n \geq 2 N_0 + 1$ there is a majority (at least $N_0+1$ among $2N_0+1$) of terms of the sequence in the interval $[l-\epsilon, l+\epsilon]$.
\end{proof}
\section{Sparse Unanimity Requires Collaborative Preference Normalization}
\label{sec:naive_vote}

In this section, we present an impossibility theorem, which roughly says that any \emph{coordinate-wise} vote with \emph{individually normalized} scores must violate sparse unanimity.
Our result highlights a central and nontrivial challenge for \emph{sparse voting}, even in the absence of disagreeing voters.
In spirit, we essentially prove that any scale-resilient sparse voting algorithm must leverage \emph{collaborative} preference scaling, as \mehestan{} does.
To formalize our impossibility theorem, we first need to introduce some assumptions on what seem to be reasonable 
\emph{individual-based normalizations} and \emph{score aggregations}.

\subsection{Individual-based Normalization}

Intuitively, any \emph{robust sparse voting} algorithm must make sure that its output will not be affected by the scaling used by voters when they report their scores.
The simplest way to guarantee this is to perform a score \emph{normalization} on voters' reported scores.
In this section, we define what a score normalization is, and what desirable properties it ought to have.
First we define a {\bf normalizer} as a function $\groupnormalizer{} : \left( \setR^{\leq \ALTERNATIVE} \right)^\VOTER \rightarrow \left( \setR^{\leq \ALTERNATIVE} \right)^\VOTER$ that preserves voters' preferences, 
i.e. such that, for any voter $\voter \in [\VOTER]$, we have $\groupnormalizer{\voter} (\voterscore{}) \sim \voterscore{\voter}$.
Below, we list other desirable properties.

\begin{definition}
\begin{enumerate}
\item  A normalizer $\groupnormalizer{}$ is {\bf individual-based} if a voter's normalized scores only depend on the voters' reported scores, i.e.
  \begin{equation}
      \exists \normalizer{} : \setR^{\leq \ALTERNATIVE} \rightarrow \setR^{\leq \ALTERNATIVE} \mathsep
      \forall \voterscorefamily{} \in \left( \setR^{\leq \ALTERNATIVE} \right)^\VOTER \mathsep
      \forall \voter \in [\VOTER] \mathsep
      \groupnormalizer{\voter} (\voterscorefamily{}) = \normalizer{} (\voterscore{\voter}),
  \end{equation}

\item  A normalizer $\groupnormalizer{}$ is {\bf scale-invariant} 
  if the normalized scores are independent of the preference scaling of the reported scores, i.e.
  \begin{equation}
    \forall \voterscorefamily{}, \voterscorefamily{}' \in \left( \setR^{\leq \ALTERNATIVE} \right)^\VOTER \mathsep
      \forall \voter \in [\VOTER] \mathsep \voterscore{\voter} \sim \voterscore{\voter}' 
      ~~\Longrightarrow~~
      \groupnormalizer{} (\voterscorefamily{}) = \groupnormalizer{} (\voterscorefamily{}').
  \end{equation}

\item  A normalizer $\groupnormalizer{}$ is {\bf neutral} if it treats all alternatives symmetrically.
  More precisely, denote $\PERMUTATION(\ALTERNATIVE)$ the set of permutations of $[\ALTERNATIVE]$.
  For any $\voterscore{} \in \setR^{\leq \ALTERNATIVE}$ and $\permutation \in \PERMUTATION(\ALTERNATIVE)$,
  we define $(\permutation \cdot \voterscore{})_\alternative \triangleq \voterscore{\permutation(\alternative)}$ 
  if the entry $\permutation(\alternative)$ of partial vector $\voterscore{}$ exists 
  (otherwise $(\permutation \cdot \voterscore{})_\alternative$ is not defined).
  Similarly, we define $(\permutation \cdot \voterscorefamily{})_\voter \triangleq \permutation \cdot \voterscore{\voter}$.
  {\bf Neutrality} then demands that
  \begin{equation}
      \forall \voterscorefamily{} \in \left( \setR^{\leq \ALTERNATIVE} \right)^\VOTER \mathsep
      \forall \permutation \in \PERMUTATION(\ALTERNATIVE) \mathsep
      \groupnormalizer{} (\permutation \cdot \voterscorefamily{})
      = \permutation \cdot \groupnormalizer{} (\voterscorefamily{}).
  \end{equation}

\item  An individual-based normalizer $\groupnormalizer{}$ is {\bf stable} if the function $x \mapsto \normalizer{} (0, x, 1)$ is Lipschitz continuous on $[0,1]$.
\end{enumerate}
\end{definition}

As an example of a (single-voter) normalizer, standardization is given by
$\standardnormalizer{\alternative}(\score{}) \triangleq \frac{\score{\alternative} - \mean(\score{})}{\stddev(\score{})}$,
where $\ALTERNATIVE_{\score{}}$ is the subset of alternatives scored by the score vector $\score{} \in \setR^{\leq \ALTERNATIVE}$, 
$\mean(\score{}) \triangleq \frac{1}{\card{\ALTERNATIVE_{\score{}}}} \sum_{\alternative \in \ALTERNATIVE_{\score{}}} \score{\alternative}$
is the mean of the scores
and $\stddev^2(\score{}) \triangleq \frac{1}{\card{\ALTERNATIVE_{\score{}}} - 1} \sum_{\alternative \in \ALTERNATIVE_{\score{}}} (\score{\alternative} - \mean(\score{}))^2$ is their standard deviation, 
assuming $\stddev(\score{}) > 0$.
If $\stddev(\score{}) = 0$, then we may simply set $\standardnormalizer{\alternative}(\score{}) \triangleq 0$
for all scored alternatives $\alternative \in \ALTERNATIVE_{\score{}}$.
Another popular normalizer is min-max normalization, given by
$\minmaxnormalizer{\alternative} (\score{}) 
\triangleq \frac{\score{\alternative} - \min_{\alternativebis \in \ALTERNATIVE_{\score{}}} \score{\alternativebis}}
{\max_{\alternativebis \in \ALTERNATIVE_{\score{}}} \score{\alternativebis} - \min_{\alternativebis \in \ALTERNATIVE_{\score{}}} \score{\alternativebis}}$,
assuming $\max_{\alternativebis \in \ALTERNATIVE_{\score{}}} \score{\alternativebis} > \min_{\alternativebis \in \ALTERNATIVE_{\score{}}} \score{\alternativebis}$
(otherwise, we return $\minmaxnormalizer{\alternative}(\score{}) \triangleq 0$ for all scored alternatives $\alternative \in \ALTERNATIVE_{\score{}}$).
Applying such single-voter normalizers to all voters clearly yield normalizers $\overrightarrow{\standardnormalizer{}}$ and $\overrightarrow{\minmaxnormalizer{}}$.

\begin{proposition}
Standardization and min-max normalizers are individual-based, scale-invariant, neutral and stable.
\end{proposition}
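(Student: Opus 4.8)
The plan is to check the four properties in turn, running the two normalizers $\overrightarrow{\standardnormalizer{}}$ and $\overrightarrow{\minmaxnormalizer{}}$ in parallel wherever possible. \emph{Individual-basedness} is immediate from the construction: $\overrightarrow{\standardnormalizer{}}$ (resp. $\overrightarrow{\minmaxnormalizer{}}$) is by definition obtained by applying the single-voter map $\standardnormalizer{}$ (resp. $\minmaxnormalizer{}$) to each coordinate $\voterscore{\voter}$ of $\voterscorefamily{}$ separately, so one simply takes $\normalizer{} \triangleq \standardnormalizer{}$ (resp. $\minmaxnormalizer{}$). Along the way I would also record, to justify the term ``normalizer'', that $\standardnormalizer{}(\score{}) \sim \score{}$ and $\minmaxnormalizer{}(\score{}) \sim \score{}$: on a score vector with at least two distinct values each map is an \emph{explicit} positive affine transformation of $\score{}$ on the same scored subset, and on a constant vector both return the zero vector, which is equivalent to a constant vector on that subset.

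For \emph{scale-invariance}, since individual-basedness lets us argue voter by voter, it suffices to show $\normalizer{}(\scaling{}\score{}+\translation{}) = \normalizer{}(\score{})$ for $\scaling{}>0$. The key elementary facts are the equivariances $\mean(\scaling{}\score{}+\translation{}) = \scaling{}\mean(\score{})+\translation{}$, $\min(\scaling{}\score{}+\translation{}) = \scaling{}\min(\score{})+\translation{}$, $\max(\scaling{}\score{}+\translation{}) = \scaling{}\max(\score{})+\translation{}$, and $\stddev(\scaling{}\score{}+\translation{}) = \scaling{}\stddev(\score{})$ (the last using $\scaling{}>0$). Substituting into the definitions, the additive term $\translation{}$ cancels against $\mean(\score{})$ (resp. against $\min(\score{})$) in the numerator, and the factor $\scaling{}>0$ cancels between numerator and denominator. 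The degenerate case $\stddev(\score{})=0$, equivalently $\max(\score{})=\min(\score{})$, is preserved by the affine transformation and there both sides are the zero vector, so it is consistent.

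For \emph{neutrality}, the observation is that $\mean(\score{})$, $\stddev(\score{})$, $\min(\score{})$ and $\max(\score{})$ depend only on the multiset of scored values, and relabelling the alternatives via $\permutation$ leaves this multiset (and transforms the scored subset correctly), hence leaves all four statistics unchanged. Consequently, coordinate-wise, $\standardnormalizer{\alternative}(\permutation\cdot\score{}) = \standardnormalizer{\permutation(\alternative)}(\score{})$, i.e. $\standardnormalizer{}(\permutation\cdot\score{}) = \permutation\cdot\standardnormalizer{}(\score{})$, and identically for min-max; applying this to every voter gives $\overrightarrow{\standardnormalizer{}}(\permutation\cdot\voterscorefamily{}) = \permutation\cdot\overrightarrow{\standardnormalizer{}}(\voterscorefamily{})$ and the same for $\overrightarrow{\minmaxnormalizer{}}$.

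The only property that needs a short explicit computation is \emph{stability}, where we restrict to the three-alternative vector $(0,x,1)$ with $x\in[0,1]$. For min-max this is trivial: $\min(0,x,1)=0$ and $\max(0,x,1)=1$ for every $x\in[0,1]$, so $\minmaxnormalizer{}(0,x,1) = (0,x,1)$, visibly $1$-Lipschitz in $x$. For standardization, one computes $\mean(0,x,1) = (x+1)/3$ and then $\stddev^2(0,x,1) = (x^2-x+1)/3$; since $x^2-x+1 = (x-\tfrac{1}{2})^2 + \tfrac{3}{4} \geq \tfrac{3}{4}$, the standard deviation is smooth and bounded below by $1/2$ on $[0,1]$, so each coordinate of $\standardnormalizer{}(0,x,1)$ is a ratio of an affine function of $x$ by a $C^1$ function that never vanishes on the compact interval $[0,1]$, hence is $C^1$, hence Lipschitz on $[0,1]$. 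I do not expect a genuine obstacle anywhere; the only points deserving a little care are the bookkeeping of the degenerate constant-vector cases in scale-invariance (and the well-definedness clause of a normalizer), and the verification that standardization never degenerates on the family $(0,x,1)$, which the bound $x^2-x+1 \geq 3/4$ settles.
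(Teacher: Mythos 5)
Your proposal is correct and follows essentially the same route as the paper: the first three properties are direct verifications, and stability of standardization is settled by explicitly computing $\standardnormalizer{}(0,x,1)$ (your $\stddev^2(0,x,1)=(x^2-x+1)/3\geq 1/4$ is exactly the non-vanishing of the paper's denominator $\sqrt{(x+1)^2+(2x-1)^2+(x-2)^2}$) and concluding that a $C^1$ function on the compact interval $[0,1]$ is Lipschitz. Your write-up is simply more detailed than the paper's, which dismisses the first three properties as clear.
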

\begin{proof}
   They are clearly individual-based, scale-invariant and neutral normalizers.
   Plus, min-max normalizer is clearly stable.
   To show that standardization is stable, consider $g \colon x \mapsto \standardnormalizer{}(0,x,1) = \sqrt{2} (-x-1,2x-1,2-x) / \sqrt{(x+1)^2 + (2x-1)^2 + (x-2)^2} $.
   It follows that $g$ is continuously differentiable, and therefore Lipschitz continuous on $[0,1]$.
\end{proof}

\subsection{Score Aggregation}

A score aggregation is a function $\aggregation : \left( \setR_+ \times \setR^{\leq \ALTERNATIVE} \right)^\VOTER \rightarrow \setR^\ALTERNATIVE$.
Below, we identify properties that score aggregations may have.

\begin{definition}
\begin{enumerate}
\item A score aggregation $\aggregation$ is {\bf coordinate-wise} 
if the score computed by an alternative only depends on the reported scores for this alternative, i.e.,
for any alternative $\alternative \in [\ALTERNATIVE]$,
\begin{equation}
    \exists \aggregation_\alternative : \left( \setR_+ \times \setR \right)^{\leq \VOTER} \rightarrow \setR \mathsep 
    \forall \votingrightsfamily{}, \voterscorefamily{} \mathsep 
    \left( \aggregation (\votingrightsfamily{}, \voterscorefamily{}) \right)_\alternative
    = \aggregation_\alternative \left( (\votingrights{\voter})_{\voter \in \VOTER_\alternative}, (\voterscore{\voter \alternative})_{\voter \in \VOTER_\alternative} \right).
\end{equation}

\item A score aggregation $\aggregation$ is {\bf anonymous}
if it treats alternatives symmetrically, i.e.
\begin{equation}
      \forall \votingrightsfamily{} \in \setR_+^\VOTER \mathsep
      \forall \voterscorefamily{} \in \left( \setR^{\leq \ALTERNATIVE} \right)^\VOTER \mathsep
      \forall \permutation \in \PERMUTATION(\ALTERNATIVE) \mathsep
      \aggregation{} (\votingrightsfamily{}, \permutation \cdot \voterscorefamily{})
      = \permutation \cdot \aggregation{} (\votingrightsfamily{}, \voterscorefamily{}).
\end{equation}

\item A score aggregation $\aggregation$ is {\bf neutral}
if it treats voters symmetrically, i.e.
\begin{equation}
      \forall \votingrightsfamily{} \in \setR_+^\VOTER \mathsep
      \forall \voterscorefamily{} \in \left( \setR^{\leq \ALTERNATIVE} \right)^\VOTER \mathsep
      \forall \permutation \in \PERMUTATION(\VOTER) \mathsep
      \aggregation{} (\permutation \cdot \votingrightsfamily{}, \permutation \cdot \voterscorefamily{})
      = \aggregation{} (\votingrightsfamily{}, \voterscorefamily{}),
\end{equation}
where the action of $\permutation$ on $\voterscorefamily{}$ is defined by $(\permutation \cdot \voterscorefamily{})_\voter \triangleq \voterscore{\permutation(\voter)}$.

\item A coordinate-wise score aggregation $\aggregation$ is {\bf max-dominated} if each of its coordinates is dominated by the $\max$ aggregation, i.e.
\begin{equation}
    \exists \lambda > 0 \mathsep
    \forall \votingrightsfamily{} \in \setR_+^\VOTER \mathsep
    \forall \voterscore{} \in \setR^{\leq \ALTERNATIVE} \mathsep
    \norm{\aggregation(\votingrights{}, \voterscore{})}{\infty} 
    \leq \lambda \max_{\voter \in [\VOTER], \alternative \in [\ALTERNATIVE]} \absv{\voterscore{\voter \alternative}}.
\end{equation}

\item A coordinate-wise score aggregation $\aggregation$ is {\bf locally Lipschitz continuous} if each of its coordinates is locally Lipschitz continuous with respect to the $\ell_\infty$-norm.

{\label{def:asymptotically-correct}
\item A coordinate-wise score aggregation $\aggregation$ is {\bf asymptotically correct} if each of its coordinates can recover any score $\voterscore{*}$, once the input is a sufficiently large sequence converging to $\voterscore{*}$, i.e. $\lim_{\voter \rightarrow \infty} \xi_\voter = 0$ implies that
\begin{equation}
    \forall \voterscore{*} \in \setR^\ALTERNATIVE \mathsep
    \forall \varepsilon > 0 \mathsep
    \exists \VOTER_0 > 0 \mathsep
    \norm{\aggregation(\vec{1}^{\VOTER_0}, (\voterscore{*}+\xi_\voter)_{\voter \in [\VOTER_0]}) - \voterscore{*}}{\infty} 
    \leq \varepsilon,
\end{equation}
where $\vec{1}^{\VOTER_0} \in \setR^{\VOTER_0}$ is the vector whose entries all equal 1.}
\end{enumerate}
\end{definition}

\begin{lemma}
\label{lemma:qrmed_correctness}
$\qrmedian{\lipschitz}$ is asymptotically correct.
\end{lemma}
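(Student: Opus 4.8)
The plan is to reduce to the scalar case and then exploit the subdifferential optimality condition of the convex objective defining $\qrmedian{\lipschitz}$, in the same spirit as the proofs of Theorem~\ref{th:qrmed-br} and Proposition~\ref{prop:qrmed-lipschitz}.

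First I would observe that, applied coordinatewise, $\qrmedian{\lipschitz}$ treats each alternative $\alternative \in [\ALTERNATIVE]$ independently, and that the input appearing in Definition~\ref{def:asymptotically-correct} gives every voter unit voting right. Hence it suffices to prove the following one-dimensional statement: given $x_* \in \setR$ (the ground-truth value of the fixed coordinate) and reals $(\xi_n)_{n \geq 1}$ with $\xi_n \to 0$ (the corresponding scalar perturbations), the point $q_{\VOTER_0} \triangleq \qrmedian{\lipschitz}(\vec{1}^{\VOTER_0}, (x_* + \xi_n)_{n \in [\VOTER_0]})$ satisfies $\absv{q_{\VOTER_0} - x_*} \leq \varepsilon$ for all $\VOTER_0$ large enough. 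The full lemma then follows by taking, for each target accuracy $\varepsilon$, the maximum over the finitely many alternatives of the thresholds produced for each coordinate.

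To prove the scalar statement, fix $\varepsilon > 0$ and write $L_{\VOTER_0}(z) \triangleq \tfrac{1}{2\lipschitz} z^2 + \sum_{n=1}^{\VOTER_0} \absv{z - x_* - \xi_n}$, which is $(1/\lipschitz)$-strongly convex and therefore has the unique minimizer $q_{\VOTER_0}$, characterized by $0 \in \partial L_{\VOTER_0}(q_{\VOTER_0})$. Since $\xi_n \to 0$, choose $N_0$ with $\absv{\xi_n} < \varepsilon$ for all $n \geq N_0$. Evaluating the subdifferential at $z = x_* + \varepsilon$, every summand with index $n \geq N_0$ contributes exactly $+1$ (because $\varepsilon - \xi_n > 0$), while the at most $N_0$ remaining summands contribute at least $-1$ each, so
\[
    \inf \partial L_{\VOTER_0}(x_* + \varepsilon) \;\geq\; \frac{x_* + \varepsilon}{\lipschitz} + (\VOTER_0 - N_0) - N_0 \;\geq\; \VOTER_0 - 2 N_0 - \frac{\absv{x_*} + \varepsilon}{\lipschitz},
\]
which is strictly positive as soon as $\VOTER_0 > 2 N_0 + (\absv{x_*} + \varepsilon)/\lipschitz$; by monotonicity of $\partial L_{\VOTER_0}$ together with $0 \in \partial L_{\VOTER_0}(q_{\VOTER_0})$ this forces $q_{\VOTER_0} < x_* + \varepsilon$. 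A symmetric evaluation at $z = x_* - \varepsilon$ gives $\sup \partial L_{\VOTER_0}(x_* - \varepsilon) \leq 2 N_0 - \VOTER_0 + (\absv{x_*} + \varepsilon)/\lipschitz < 0$ in the same range of $\VOTER_0$, whence $q_{\VOTER_0} > x_* - \varepsilon$. Combining the two bounds yields $\absv{q_{\VOTER_0} - x_*} < \varepsilon$ for all $\VOTER_0$ large enough, as required.

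I do not expect a genuine obstacle here; the only delicate point is that the quadratic regularizer $z^2/(2\lipschitz)$ never vanishes, so the threshold on $\VOTER_0$ must be large enough to absorb both the fixed count of ``bad'' voters $2N_0$ and the term $(\absv{x_*} + \varepsilon)/\lipschitz$ coming from the regularizer at the test point. In particular the threshold inevitably depends on $x_*$, which is nonetheless permitted by Definition~\ref{def:asymptotically-correct}. (One could alternatively combine Lemma~\ref{lem:med-cesaro} on the plain median with the identity $\qrmedian{\lipschitz}(\vec{1}^{\VOTER_0}, \cdot) = \qrmedian{\lipschitz \VOTER_0}(\tfrac{1}{\VOTER_0}\vec{1}^{\VOTER_0}, \cdot)$, noting that the effective regularization weakens as $\VOTER_0 \to \infty$; but the direct subdifferential argument above is self-contained and mirrors the other proofs in the paper.)
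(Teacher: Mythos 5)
Your proof is correct and follows essentially the same route as the paper's: both reduce to the scalar case, split the voters into a finite prefix with uncontrolled perturbations and a tail with $\absv{\xi_n}$ small, and then show that the subdifferential of the $\qrmedian{}$ objective is strictly positive at $x_*+\varepsilon$ and strictly negative at $x_*-\varepsilon$ once $\VOTER_0$ is large enough to dominate both the prefix and the derivative of the quadratic regularizer, concluding by convexity. The only cosmetic difference is that the paper assumes $x_*\geq 0$ without loss of generality and uses the threshold $2N_0+\max(0,(x_*-\varepsilon)/\lipschitz)$, whereas you handle both signs at once via $\absv{x_*}$.
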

\begin{proof}
Let $x \in \setR$, and $(\xi_\voter)_{\voter \geq 1}$ a sequence of real numbers converging to $0$. We will assume that $x\geq0$, the case $x\leq0$ being analogous.
Now, let $\varepsilon>0$.
By convergence of $(\xi_\voter)_{\voter \geq 1}$, we know that there is $\VOTER_0$ such that for all $\voter \geq \VOTER_0+1$, we have $\absv{\xi_\voter} \leq \varepsilon/2$.
We then introduce $\VOTER_1 \triangleq 2\VOTER_0 + \max(0,\tfrac{1}{L}(x-\varepsilon))$ and the function $L \colon t \mapsto \frac{1}{2L}t^2 + \sum_{1 \leq \voter \leq \VOTER_1} \absv{t-x-\xi_\voter}$. The function $L$ is strictly convex, subdifferentiable, and its minimum is attained at $q \triangleq \qrmedian{L}((x+\xi_\voter)_{1 \leq \voter \leq \VOTER_1})$.
Thanks to the aforementioned result about the convergence of $(\xi_\voter)_{\voter \geq 1}$, we can show that $\partial L(x+\varepsilon) \subset \setR^+$.
Similarly, we can show that $\partial L(x-\varepsilon) \subset \setR^-$.
We conclude by convexity of $L$ that we necessarily have $q \in [x-\varepsilon,x+\varepsilon]$, i.e. $\absv{x-q} \leq \varepsilon$.
\end{proof}

\begin{proposition}
The mean, the median and $\qrmedian{}$ are all coordinate-wise, anonymous, neutral, max-dominated, locally Lipschitz continuous and asymptotically correct score aggregations.
\end{proposition}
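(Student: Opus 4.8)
The plan is to verify the six properties for each of the three operators $\mean$, $\median$ and $\qrmedian{\lipschitz}$, reducing everything to statements about the corresponding \emph{scalar} operators on $\setR$, since each of these aggregations is, by construction, the coordinate-wise application of its scalar counterpart: for every alternative $\alternative$, $\aggregation_\alternative$ runs the scalar operator on the pairs $(\votingrights{\voter}, \voterscore{\voter\alternative})$ with $\voter \in \VOTER_\alternative$, and this rule does not depend on which $\alternative$ we picked. This immediately gives \textbf{coordinate-wise}, and since $\aggregation_\alternative$ is the same scalar map for every $\alternative$, permuting alternatives just permutes the coordinates of the input and hence of the output, giving \textbf{anonymity}. \textbf{Neutrality} (symmetry in the voters) follows because each scalar operator depends only on the multiset $\{(\votingrights{\voter}, \voterscore{\voter\alternative}) : \voter \in \VOTER_\alternative\}$, which is unchanged by relabeling voters.

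For \textbf{max-domination} I would take $\lambda = 1$ in all three cases: a weighted mean of values lying in $[-M, M]$ stays in $[-M,M]$; the set of weighted medians of such values is a closed subinterval of $[-M,M]$, so the closest-to-zero median (the convention fixed in Appendix~\ref{sec:app-median}) is in $[-M,M]$; and for $\qrmedian{\lipschitz}$, Lemma~\ref{lemma:qrmed-med} gives $\absv{\qrmedian{\lipschitz}} \leq \absv{\median} \leq M$, where $M = \max_{\voter,\alternative}\absv{\voterscore{\voter\alternative}}$. For \textbf{local Lipschitz continuity} with respect to $\norm{\cdot}{\infty}$: the mean is linear, hence globally $1$-Lipschitz; $\qrmedian{\lipschitz}$ is globally $1$-Lipschitz in the scores by Proposition~\ref{prop:qrmed-lipschitz}; and the scalar median is the pointwise limit as $\lipschitz \to \infty$ of the scalar $\qrmedian{\lipschitz}$ (with exactly the closest-to-zero tie-break, since the $\tfrac{1}{2\lipschitz}z^2$ regularizer selects the minimal-modulus median in the limit), so it inherits $1$-Lipschitzness as a pointwise limit of $1$-Lipschitz maps. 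A coordinate-wise map whose coordinates are $1$-Lipschitz in their scalar argument is $1$-Lipschitz in $\norm{\cdot}{\infty}$, hence a fortiori locally Lipschitz. (Alternatively, for the median one argues directly: for fixed weights, on the dense open set where the relevant order comparisons are strict the median equals either a fixed coordinate or the constant $0$, and continuity of the endpoints of the median interval extends $1$-Lipschitzness to the whole space.)

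For \textbf{asymptotic correctness}, fix $\voterscore{*} \in \setR^\ALTERNATIVE$ and a sequence $\xi_\voter \to 0$ in $\setR^\ALTERNATIVE$; since there are finitely many alternatives, it suffices to show that for each coordinate $\alternative$ the scalar operator applied to $(\voterscore{*\alternative} + \xi_{\voter\alternative})_{\voter \leq \VOTER_0}$ with all weights equal to $1$ tends to $\voterscore{*\alternative}$, and then take $\VOTER_0$ to be the maximum of the finitely many coordinate-wise thresholds. For the mean this equals $\voterscore{*\alternative} + \tfrac{1}{\VOTER_0}\sum_{\voter\leq\VOTER_0}\xi_{\voter\alternative}$, which converges to $\voterscore{*\alternative}$ by the Cesàro Lemma~\ref{lem:cesaro}; for the median it is Lemma~\ref{lem:med-cesaro} applied to $u_\voter = \voterscore{*\alternative} + \xi_{\voter\alternative}$; and for $\qrmedian{\lipschitz}$ it is exactly Lemma~\ref{lemma:qrmed_correctness}.

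There is no single deep step — the proof is a catalogue check — so the only place that calls for care is the median: ensuring the closest-to-zero tie-break is used consistently in the max-domination bound, in the local-Lipschitz argument (via the $\lipschitz \to \infty$ limit), and in the asymptotic-correctness argument, and checking that the all-ones weight vector in the definition of asymptotic correctness genuinely reduces to the unweighted Cesàro-type lemmas of the appendix. I would also note the degenerate case of an alternative with no reporting voter, where the scalar operator takes its default value ($0$ for $\qrmedian{\lipschitz}$), which trivially satisfies all the inequalities above.
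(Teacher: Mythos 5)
Your proposal is correct and follows essentially the same route as the paper's proof: the symmetry properties are checked directly, max-domination and asymptotic correctness are reduced to Lemmas~\ref{lemma:qrmed-med}, \ref{lem:cesaro}, \ref{lem:med-cesaro} and \ref{lemma:qrmed_correctness}, and Lipschitz continuity comes from linearity of the mean, Proposition~\ref{prop:qrmed-lipschitz}, and viewing the median as a limiting case of $\qrmedian{}$. Your treatment of the median's tie-break and of the $\lipschitz \to \infty$ limit is in fact more careful than the paper's terse statement, but the underlying argument is the same.
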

\begin{proof}
   It is straightforward that the mean, the median and $\qrmedian{}$ are coordinate-wise, anonymous, and neutral score aggregations.
   The mean and the median are asymptotically correct by Lemmas \ref{lem:cesaro} and \ref{lem:med-cesaro} respectively. 
   $\qrmedian{}$ is asymptotically correct by Lemma~\ref{th:qrmed-br}.
   The mean and the median are trivially max-dominated. $\qrmedian{}$ is max-dominated since $\absv{\qrmedian{}} \leq \absv{\median{}}$ by Lemma~\ref{lemma:qrmed_correctness}.
   The mean is trivially Lipschitz continuous. $\qrmedian{}$ is Lipschitz continuous by Proposition~\ref{prop:qrmed-lipschitz}. Since $\median{} = \qrmedian{0}$, the median is also Lipschitz continuous.
   They are thus locally Lipschitz continuous.
\end{proof}

\subsection{The Impossibility Theorem}

We now state our impossibility theorem. 
We stress that the theorem does not assume any Byzantine voter; 
in fact, as demanded by \emph{sparse unanimity}, it assumes that all voters are honest and express the same preference $\voterscore{*}$, 
albeit each voter only scores a (potentially small) subset of all alternatives.

\begin{theorem}
\label{th:basicvote-majoritarian}
    Given any {\bf individual-based}, scale-invariant, neutral and stable normalizer $\groupnormalizer{}$ 
    and any {\bf coordinate-wise}, anonymous, neutral, max-dominated, locally Lipschitz continuous and asymptotically correct score aggregation $\aggregation$,
    $\vote{} (\votingrightsfamily{}, \voterscorefamily{}) \triangleq \aggregation (\votingrightsfamily{}, \groupnormalizer{}(\voterscorefamily{}))$ fails to be {\bf sparsely unanimous}.
\end{theorem}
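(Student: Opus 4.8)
The plan is to assume, for contradiction, that $\vote{}(\votingrightsfamily{},\voterscorefamily{}) := \aggregation(\votingrightsfamily{},\groupnormalizer{}(\voterscorefamily{}))$ is sparsely unanimous, and then to defeat it with a single, easily describable family of unanimous, comparable, arbitrarily-well-scored instances. First I would record two reductions. Since $\groupnormalizer{}$ is individual-based and scale-invariant, there is a map $\normalizer{}$, well defined on equivalence classes of partial score vectors, with $\groupnormalizer{\voter}(\voterscorefamily{}) = \normalizer{}(\voterscore{\voter})$ for every $\voter$; since $\aggregation$ is coordinate-wise and anonymous, there is a single symmetric function $\bar{\aggregation}$ such that $\vote_\alternative(\voterscorefamily{})$ equals $\bar{\aggregation}$ evaluated on the multiset of normalized scores the voters reported for $\alternative$. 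Since $\normalizer{}$ preserves preferences, $\normalizer{}(0,\tfrac12,1) = (\beta,\, \beta+\tfrac12\gamma,\, \beta+\gamma)$ for some $\beta\in\setR$ and some $\gamma>0$.

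Next I would fix $\ALTERNATIVE=4$, take the ground truth $\voterscore{*}=(0,\tfrac13,\tfrac23,1)$, and for integers $M,N$ consider the instance with $M$ voters reporting $\voterscore{*|\{1,2,3\}}$ and $N$ voters reporting $\voterscore{*|\{2,3,4\}}$. Every pair of voters shares alternatives $2$ and $3$ and scores them distinctly, so the instance is comparable; each alternative is scored by at least $\min(M,N)$ voters; and $\voterscore{*}$ is fixed, so once $M$ and $N$ exceed the (finite) threshold that sparse unanimity attaches to this $\voterscore{*}$ we obtain $\vote(\voterscorefamily{})\sim\voterscore{*}$, i.e.\ $\vote_\alternative(\voterscorefamily{})=\scaling{}\voterscore{*\alternative}+\translation{}$ for some $\scaling{}>0$, and in particular $\vote_3-\vote_1 = \tfrac23\,(\vote_4-\vote_1)$. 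Both $\voterscore{*|\{1,2,3\}}$ and $\voterscore{*|\{2,3,4\}}$ are equivalent to $(0,\tfrac12,1)$, so by scale-invariance the first block's normalized scores are $(\beta,\beta+\tfrac12\gamma,\beta+\gamma)$ on alternatives $(1,2,3)$ and the second block's are $(\beta,\beta+\tfrac12\gamma,\beta+\gamma)$ on alternatives $(2,3,4)$; reading off coordinates, $\vote_1 = \bar{\aggregation}(M\text{ copies of }\beta)$, $\vote_3 = \bar{\aggregation}(M\text{ copies of }\beta+\gamma\text{ and }N\text{ copies of }\beta+\tfrac12\gamma)$, and $\vote_4 = \bar{\aggregation}(N\text{ copies of }\beta+\gamma)$, this last not depending on $M$.

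The concluding step is a double limit. Letting $M\to\infty$ with $N$ fixed, asymptotic correctness --- in the form (established for the mean, median and $\qrmedian{}$ in Lemmas~\ref{lem:cesaro},~\ref{lem:med-cesaro} and~\ref{lemma:qrmed_correctness}) that $\bar{\aggregation}$ of a convergent sequence converges to its limit, after reordering the block using neutrality --- gives $\vote_1\to\beta$ and $\vote_3\to\beta+\gamma$; passing to the limit in $\vote_3-\vote_1 = \tfrac23(\vote_4-\vote_1)$ yields $\gamma = \tfrac23\big(\bar{\aggregation}(N\text{ copies of }\beta+\gamma)-\beta\big)$ for all sufficiently large $N$. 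Letting now $N\to\infty$ and applying asymptotic correctness once more, $\bar{\aggregation}(N\text{ copies of }\beta+\gamma)\to\beta+\gamma$, so $\gamma=\tfrac23\gamma$, i.e.\ $\gamma=0$, contradicting $\gamma>0$. The main obstacle I anticipate is making the two limits rigorous while keeping the sparse-unanimity threshold under control: because it depends only on $\voterscore{*}$, not on $M$ or $N$, we may take $M,N$ as large as needed; the argument hinges on $\groupnormalizer{}$ being individual-based and scale-invariant (so the first block cannot ``see'' alternative $4$, and the two blocks agree on the shared alternatives $2$ and $3$) and on $\aggregation$ being coordinate-wise, anonymous and asymptotically correct. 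Carrying out the same computation with $\minmaxnormalizer{}$ or $\standardnormalizer{}$ in place of $\normalizer{}$, or with the mean, median or $\qrmedian{}$ in place of $\bar{\aggregation}$, is a routine sanity check.
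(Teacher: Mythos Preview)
Your proof is correct and takes a genuinely different, more elementary route than the paper's. The paper builds a family with a \emph{growing} number of alternatives: it sets $\voterscore{*\alternative}=\alternative$, takes $K(\ALTERNATIVE-2)$ voters each scoring only $\{1,2,\voter+2\}$, uses \emph{stability} of $\normalizer{}$ to get $\normalizer{}(\voterscore{\voter})=(x,x,z)+\mathcal{O}(1/\voter)$, \emph{max-domination} to control $\globalscore{1},\globalscore{2}$, and \emph{local Lipschitz continuity} of $\aggregation$ to propagate the $\mathcal{O}(1/\alternative)$ error, and then shows $\correlation(\globalscore{},\voterscore{*})=\mathcal{O}(\ln\ALTERNATIVE/\sqrt{\ALTERNATIVE})\to 0$ as $\ALTERNATIVE\to\infty$. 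By contrast, you keep $\ALTERNATIVE=4$ fixed and run a clean double-limit contradiction; you use only that $\groupnormalizer{}$ is individual-based, scale-invariant and neutral, and that $\aggregation$ is coordinate-wise, anonymous, neutral and asymptotically correct. In particular you never invoke stability, max-domination or local Lipschitz continuity, so your argument in fact proves a strictly stronger impossibility than stated. What the paper's approach buys in exchange is a quantitative conclusion (vanishing correlation) and an explanation of how the failure scales with $\ALTERNATIVE$; yours is shorter but yields only the bare contradiction.

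One small caveat, which equally affects the paper's own proof: the written definition of asymptotic correctness only asserts the \emph{existence} of some $\VOTER_0$ with error at most $\varepsilon$, not that this holds for all larger $\VOTER_0$. You rightly flag that you need the stronger ``$\bar{\aggregation}$ of a convergent sequence converges'' form, and point to Lemmas~\ref{lem:cesaro},~\ref{lem:med-cesaro} and~\ref{lemma:qrmed_correctness} for it; this is exactly the reading the paper itself adopts when it writes ``once $K$ is large enough, $|f(z,\ldots,z)-z|\leq\ldots$''. So this is an imprecision in the definition rather than a gap in your argument.
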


\begin{proof}[Sketch of proof]
   Our proof assumes $\voterscore{* \alternative} \triangleq \alternative$, for all alternatives $\alternative \in [\ALTERNATIVE]$.
   We consider $\VOTER \triangleq K \cdot (\ALTERNATIVE - 2)$ voters, each with a unit voting right, 
   with voters $\lbrace \voter, \dots, \voter + K-1 \rbrace$ reporting the scores of alternatives 1, 2 and $\voter+2$.
   The assumptions on $\groupnormalizer{}$ and $\aggregation$ then imply 
   that most alternatives $\alternative$ will receive roughly the same score, especially for $\alternative$ large enough.
   This then implies that the vote outputs scores that are hardly correlated with $\voterscore{*}$.
   In fact, the correlation goes to $0$ in the limit $\ALTERNATIVE, K \rightarrow \infty$.
   Appendix~\ref{app:impossibility} provides the full proof.
\end{proof}

Theorem~\ref{th:basicvote-majoritarian} suggests that \emph{sparse unanimity} cannot be achieved with individual score normalization. 
Instead, Robust Sparse voting seems to require adapting a voter's score normalization based on other voters' scores, 
i.e. the score normalization must be \emph{collaborative}.
It is critical to note that this may create a vulnerability in practice, as Byzantine voters may leverage their impact on other voters' scores to
scale these scores as best fits their purposes.
Typically, whenever a voter $\voter$ prefers $\alternative$ to $\alternativebis$,
a disagreeing Byzantine voter may want to make voter $\voter$'s preference scale vanish, 
so that the vote essentially considers that voter $\voter$ is nearly indifferent between $\alternative$ and $\alternativebis$.
Instead, our solution to Robust Sparse voting builds upon new provably $\lipschitz$-Lipschitz-resilient primitives.

\subsection{Proof of our Impossibility Theorem}
\label{app:impossibility}

\begin{reptheorem}{th:basicvote-majoritarian}
    Given any {\bf individual-based}, scale-invariant, neutral and stable normalizer $\groupnormalizer{}$ 
    and any {\bf coordinate-wise}, anonymous, neutral, max-dominated, locally Lipschitz continuous and asymptotically correct score aggregation $\aggregation$,
    the vote $\vote{} (\votingrightsfamily{}, \voterscorefamily{}) \triangleq \aggregation (\votingrightsfamily, \groupnormalizer{}(\voterscorefamily{}))$ fails to be {\bf sparsely unanimous}.
\end{reptheorem}

\begin{proof}
   Consider $\voterscore{*}$ defined by $\voterscore{* \alternative} = \alternative$, for all alternatives $\alternative \in [\ALTERNATIVE]$.
   Assume that there are $\VOTER \triangleq K \cdot (\ALTERNATIVE - 2)$ voters, with voters $\lbrace \voter, \dots, \voter + K-1 \rbrace$ reporting the scores of alternatives 1, 2 and $\voter+2$.
   Note that any alternative $\alternative \geq 3$ is then scored only by the group of voters $\lbrace \alternative-2, \dots, \alternative+K-3 \rbrace$.
   We then give the same voting right $\votingrights{\voter} \triangleq \votingrights{0}$ to all voters $\voter$,
    which we choose to be large enough to guarantee the conditions of sparse unanimity.
   We will then consider the limit of this setting, as $\ALTERNATIVE, K \rightarrow \infty$.
   
   By scale invariance, we know that $\normalizer{} (\voterscore{\voter}) = \normalizer{} \left( \frac{\voterscore{\voter} - 1}{\voter +1} \right) = \normalizer{} (0,\frac{1}{\voter+1},1)$,
   where the entries of $(0,\frac{1}{\voter+1},1)$ correspond to the alternatives 1, 2 and $\voter+2$.
   Since $\normalizer{}$ is stable, we have $\normalizer{} (\voterscore{\voter}) = \normalizer{} (0,0,1) + \mathcal O(1/\voter) = (x,x,z) + \mathcal O(1/\voter)$,
   where $x,z \in \setR$ are fully determined by the normalization function $\normalizer{}$
   (note that we use the neutrality of $\normalizer{}$ to guarantee that $x$ and $z$ do not depend on which alternatives' scores have been reported by voter $\voter$).
   Moreover, we necessarily have $x \neq z$ because $\normalizer{}$ is a normalizer, and thus preserves strict order.
   Since $F$ is coordinate-wise, for any alternative $\alternative$, there exists a function $f_\alternative$ such that
   $F_\alternative (\voterscorefamily{}) = f_\alternative (\voterscorefamily{\alternative})$.
   Since $F$ is neutral, we must have $f_\alternative = f_\alternativebis \triangleq f$ for any two alternatives $\alternative, \alternativebis \in [\ALTERNATIVE]$.
   Since $F$ is anonymous and neutral, we must have $f(\voterscorefamily{\alternative}) = f(\voterscorefamily{\alternativebis})$ if both $\voterscorefamily{\alternative}$ and $\voterscorefamily{\alternativebis}$ contain the same entries (but potentially from different voters).
   By local Lipschitz continuity of $F$, if these entries are equal to $z + \mathcal O(1/\alternative)$, then  $F_\alternative(\voterscorefamily{}) = f(z,\dots,z) + \mathcal O(1/\alternative)$.
   
   Now, recall that for all $x,y \in \setR^\ALTERNATIVE \mathsep \correlation(x,y) = \langle \frac{\projection{x}}{\norm{\projection{x}}{2}}, \frac{\projection{y}}{\norm{\projection{y}}{2}} \rangle$, where for all $x \in \setR^\ALTERNATIVE,\ \projection{x} = x - \mean(x)$.

   Denote $\globalscore{} \triangleq \vote{} ( \votingrights{}, \voterscorefamily{} )$.
   We then have $\globalscore{1} = f(\vec x_1)$ and $\globalscore{2} = f(\vec x_2)$,
   where $x_{1 \voter} = x + \mathcal O(1/\voter)$ and $x_{2 \voter} = x + \mathcal O(1/\voter)$.
   Since $F$ is max-dominated, it directly follows that $f(\vec x_1), f(\vec x_2) = \mathcal O(1)$.
   For $\alternative \geq 3$, we then have $\globalscore{\alternative} = f(z,\dots,z) + \mathcal O(1/\alternative)$.
   Moreover, since $F$ is asymptotically correct, once $K$ is large enough, we have $\absv{f(z,\dots,z) - z} \leq \absv{z-x}/2$. 
   In particular, since $z \neq x$, this means that there exists a constant $c_1 > 0$ such that $\absv{f(z,\dots,z) - x} > c_1$.
   
   Thus, on one hand
   \begin{align*}
       \mean(\globalscore{}) 
       = \frac{\mathcal O(1) + \mathcal O(1) + \ALTERNATIVE f(z,\dots,z) + \sum_{\alternative} \mathcal O(1/\alternative)}{\ALTERNATIVE} 
       = f(z,\dots,z) + \mathcal O \left( \frac{\ln \ALTERNATIVE}{\ALTERNATIVE} \right).
   \end{align*}
   
   And on the other hand, we have that there exists a constant $c_2 > 0$ such that
    \begin{align*}
       \norm{\globalscore{} - \mean(\globalscore{})}{2}^2 > c_2.
   \end{align*}

   Also, thanks to the choice of $\voterscore{*}$, by computing $\frac{\projection{\voterscore{*}}}{\norm{\projection{\voterscore{*}}}{2}}$ we have
   \begin{equation*}
       \norm{\frac{\projection{\voterscore{*}}}{\norm{\projection{\voterscore{*}}}{2}}}{\infty} = \mathcal{O}(\frac{1}{\sqrt{\ALTERNATIVE}}).
   \end{equation*}
   
   We can now conclude that
   \begin{align*}
       \absv{\correlation(\globalscore{}, \voterscore{*})}
       & = \absv{\langle  \frac{\projection{\voterscore{*}}}{\norm{\projection{\voterscore{*}}}{2}},
       \frac{\projection{\globalscore{}}}{\norm{\projection{\globalscore{}}}{2}}
       \rangle} \\
       & \leq \norm{\frac{\projection{\voterscore{*}}}{\norm{\projection{\voterscore{*}}}{2}}}{\infty} \cdot \left ( \mathcal{O}(1) + \mathcal{O}(1) + \ALTERNATIVE \mathcal{O}(\frac{\ln \ALTERNATIVE}{\ALTERNATIVE}) + \sum_\alternative \mathcal{O}(\frac{1}{\alternative}) \right ) \\
       &= \mathcal{O}(\frac{\ln \ALTERNATIVE}{\sqrt{\ALTERNATIVE}}).
   \end{align*}
\end{proof}

\section{Additional Experiments}
\label{sec:additional-exp}

In this section, we present the results of additional experiments aiming at testing \mehestan{}'s ability to tolerate sparsity.
Recall the experimental setup from Section~\ref{sec:experiments} in the main body.
We conduct the following additional experiments: 
\begin{itemize}
    \item the same experiments as Section~\ref{sec:experiments} with $\voterscore{*}$ following a Uniform (Figure~\ref{fig:uniform}), Gaussian (Figure~\ref{fig:gaussian}) and Cauchy (Figure~\ref{fig:cauchy}) distribution. All were performed both with and without biased sparsity.
    \item the measurement of the performance depending on the sparsity bias, with $\voterscore{*}$ following a Uniform, Gaussian, and Cauchy distribution (Figure~\ref{fig:bias}).
\end{itemize}

\begin{figure}[H]
\centering
\begin{subfigure}{.48\textwidth}
  \centering
  \includegraphics[width=\linewidth]{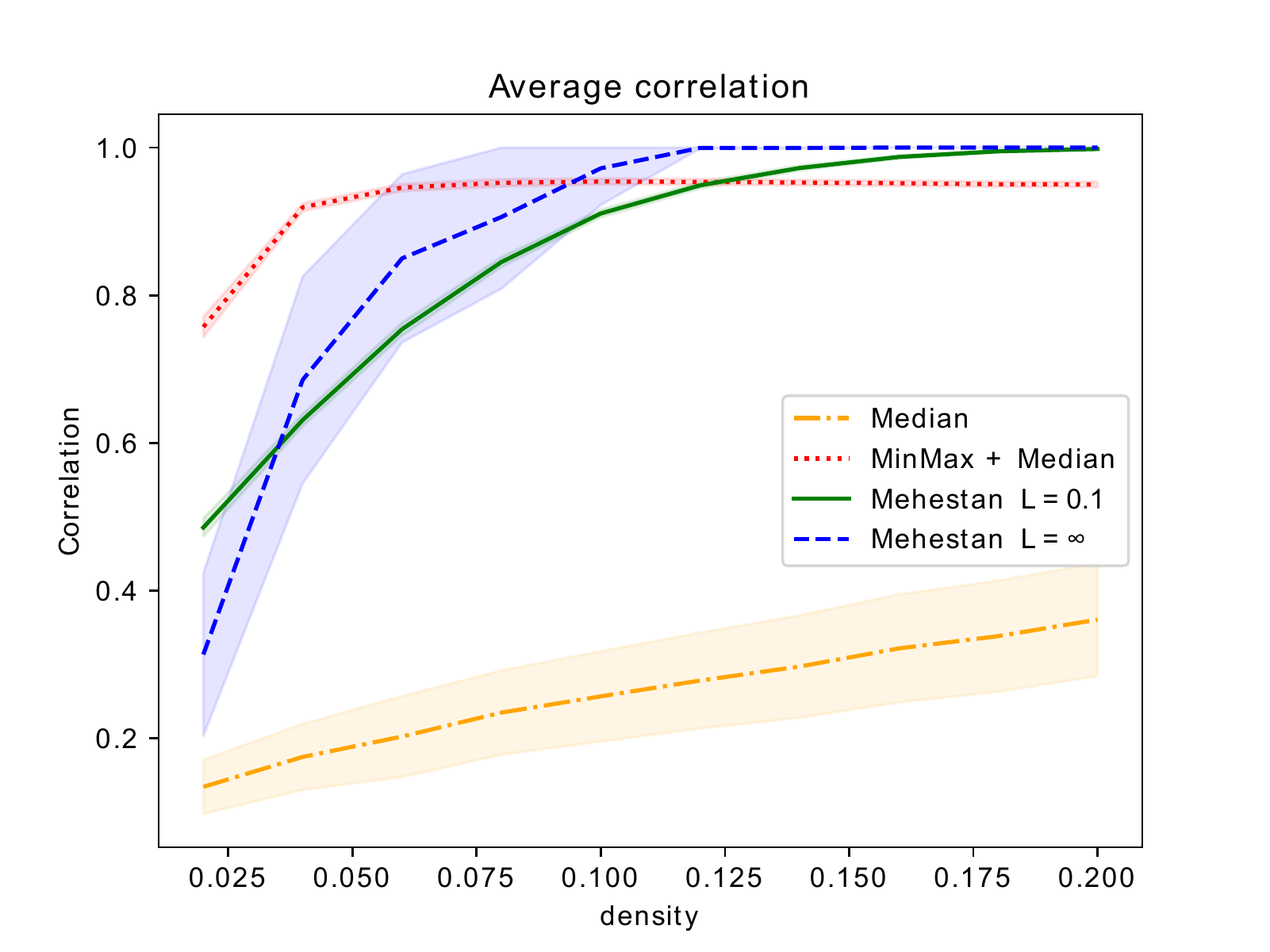}
  \caption{Influence of the \textbf{density} (with bias in sparsity)}
  \label{fig:sm}
\end{subfigure}\hfill
\begin{subfigure}{.48\textwidth}
  \centering
  \includegraphics[width=\linewidth]{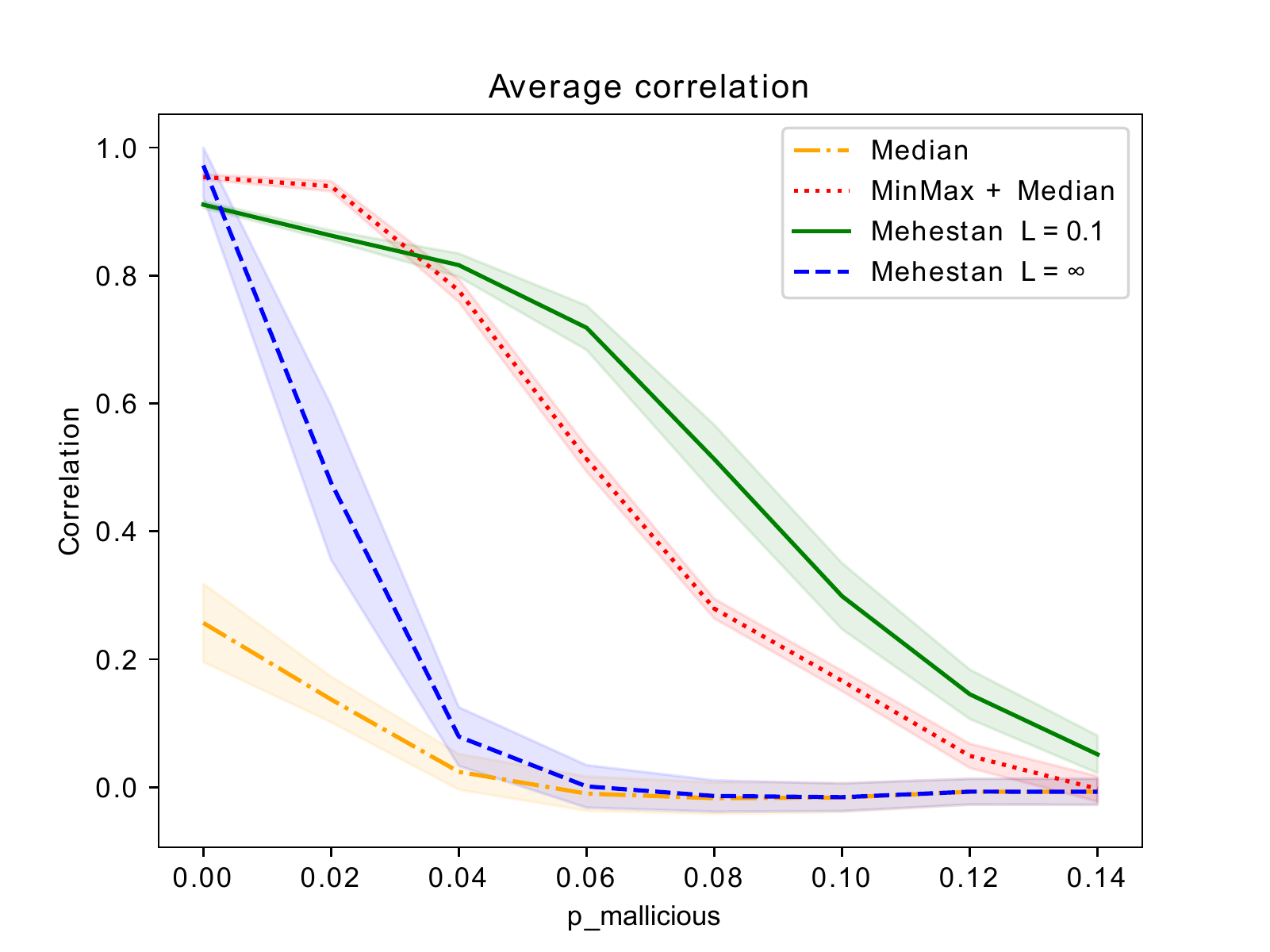}
  \caption{Influence of the \textbf{fraction of malicious voters} (with bias in sparsity)}
  \label{fig:nextreme}
\end{subfigure}
\begin{subfigure}{.48\textwidth}
  \centering
  \includegraphics[width=\linewidth]{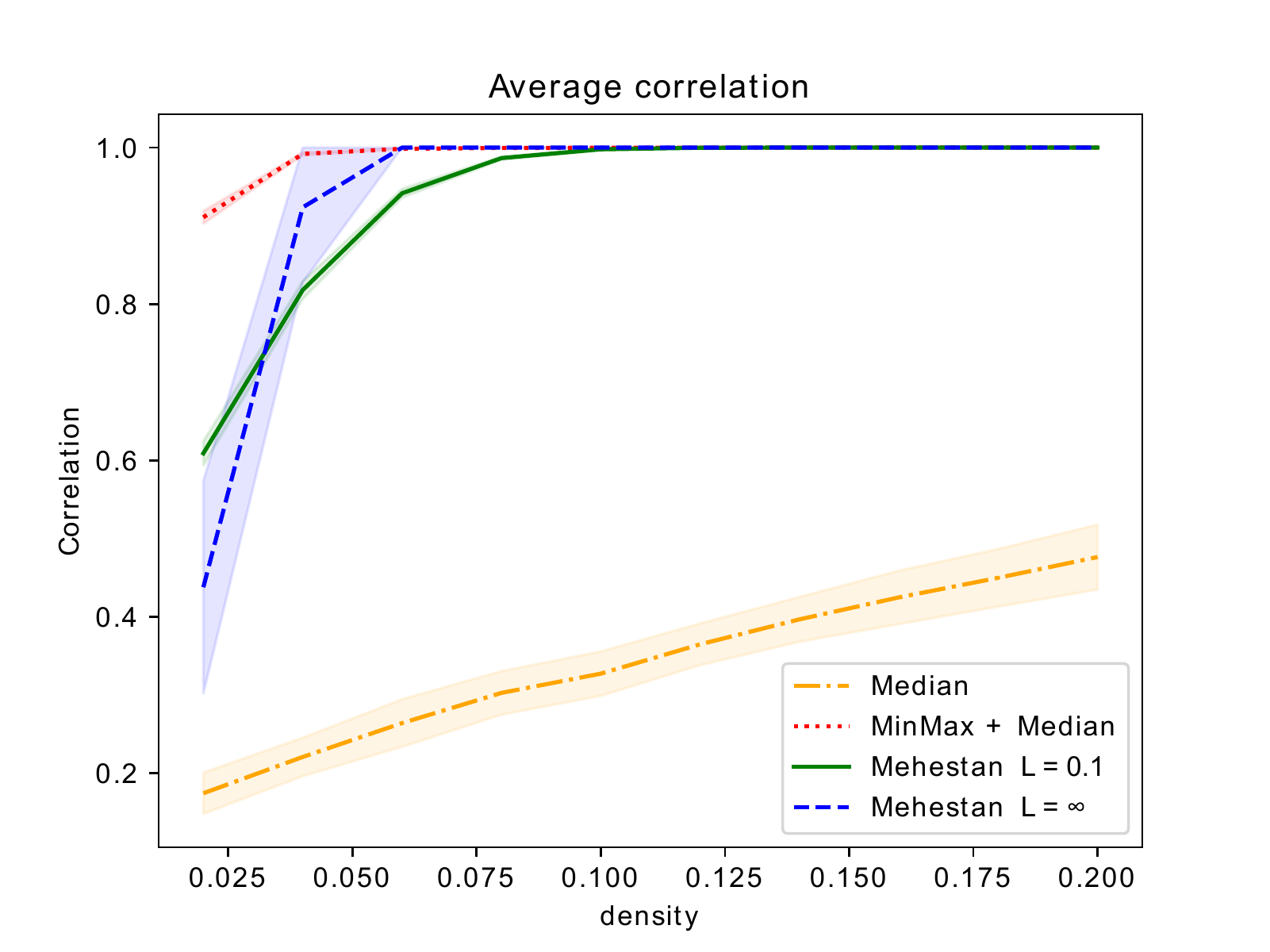}
  \caption{Influence of the \textbf{density} (without bias in sparsity)}
  \label{fig:sm}
\end{subfigure}\hfill
\begin{subfigure}{.48\textwidth}
  \centering
  \includegraphics[width=\linewidth]{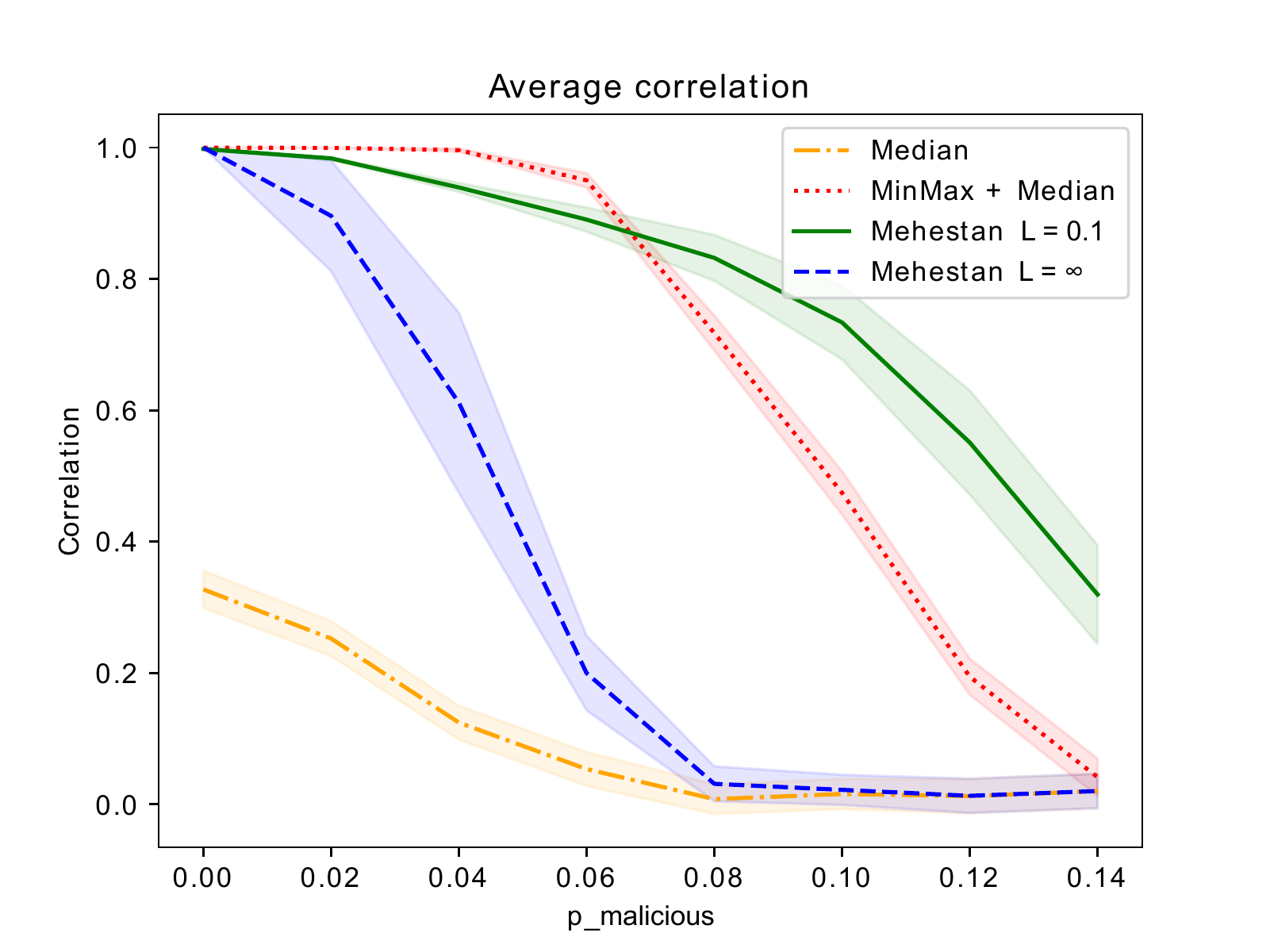}
  \caption{Influence of the \textbf{fraction of malicious voters} (without bias in sparsity)}
  \label{fig:nextreme}
\end{subfigure}
\caption{
Performance of \mehestan{} under sparsity, with and without malicious voters (\textbf{Uniform} distribution of $\theta_*$)
}
\label{fig:uniform}
\end{figure}

\begin{figure}[H]
\centering
\begin{subfigure}{.48\textwidth}
  \centering
  \includegraphics[width=\linewidth]{img/gaussian/dens_extr.pdf}
  \caption{Influence of the \textbf{density} (with bias in sparsity)}
  \label{fig:sm}
\end{subfigure}\hfill
\begin{subfigure}{.48\textwidth}
  \centering
  \includegraphics[width=\linewidth]{img/gaussian/byz_extr.pdf}
  \caption{Influence of the \textbf{fraction of malicious voters} (with bias in sparsity)}
  \label{fig:nextreme}
\end{subfigure}
\begin{subfigure}{.48\textwidth}
  \centering
  \includegraphics[width=\linewidth]{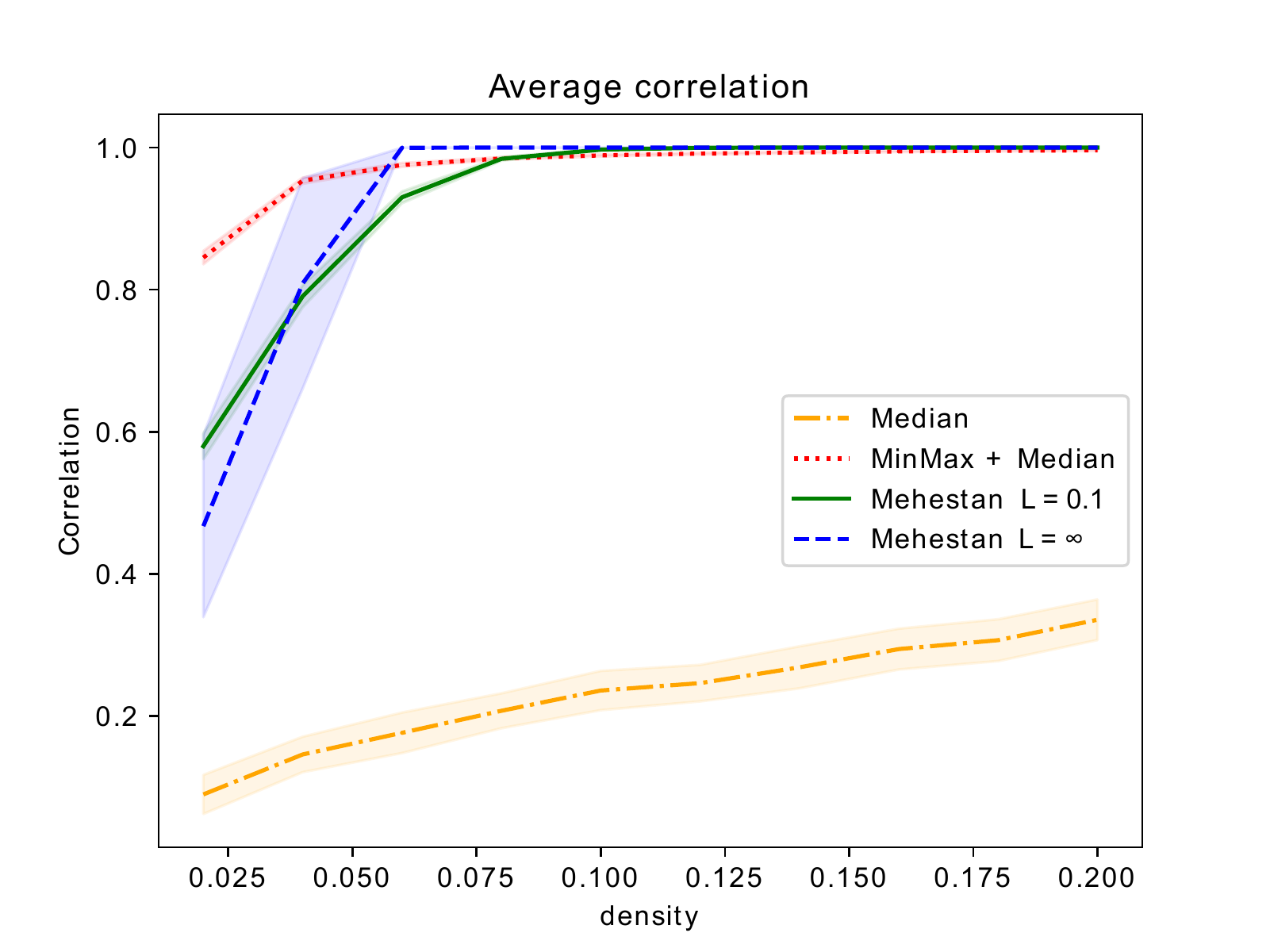}
  \caption{Influence of the \textbf{density} (without bias in sparsity)}
  \label{fig:sm}
\end{subfigure}\hfill
\begin{subfigure}{.48\textwidth}
  \centering
  \includegraphics[width=\linewidth]{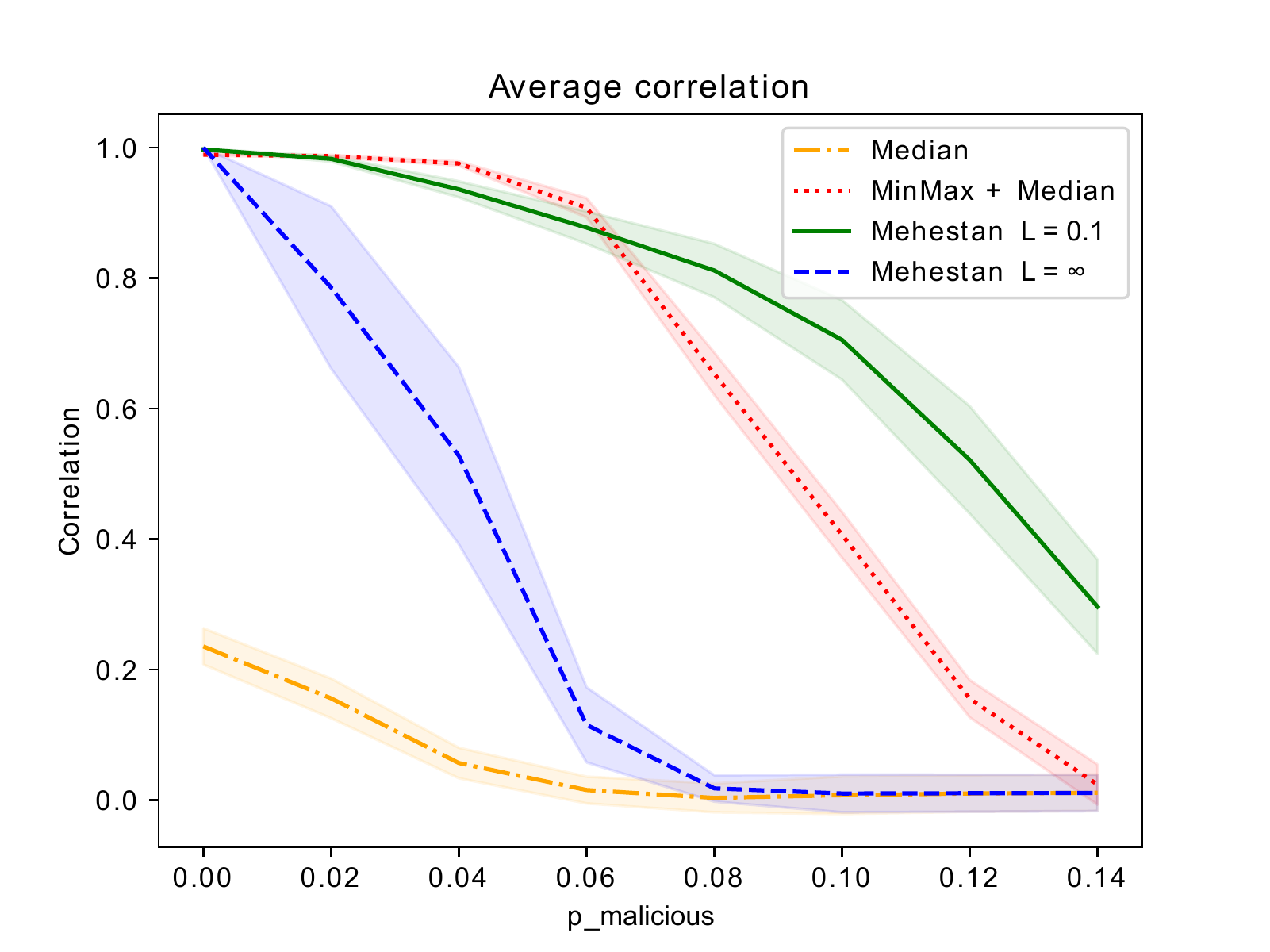}
  \caption{Influence of the \textbf{fraction of malicious voters} (without bias in sparsity)}
  \label{fig:nextreme}
\end{subfigure}
\caption{
Performance of \mehestan{} under sparsity, with and without malicious voters. (\textbf{Gaussian} distribution of $\theta_*$)
}
\label{fig:gaussian}
\end{figure}

\begin{figure}[H]
\centering
\begin{subfigure}{.48\textwidth}
  \centering
  \includegraphics[width=\linewidth]{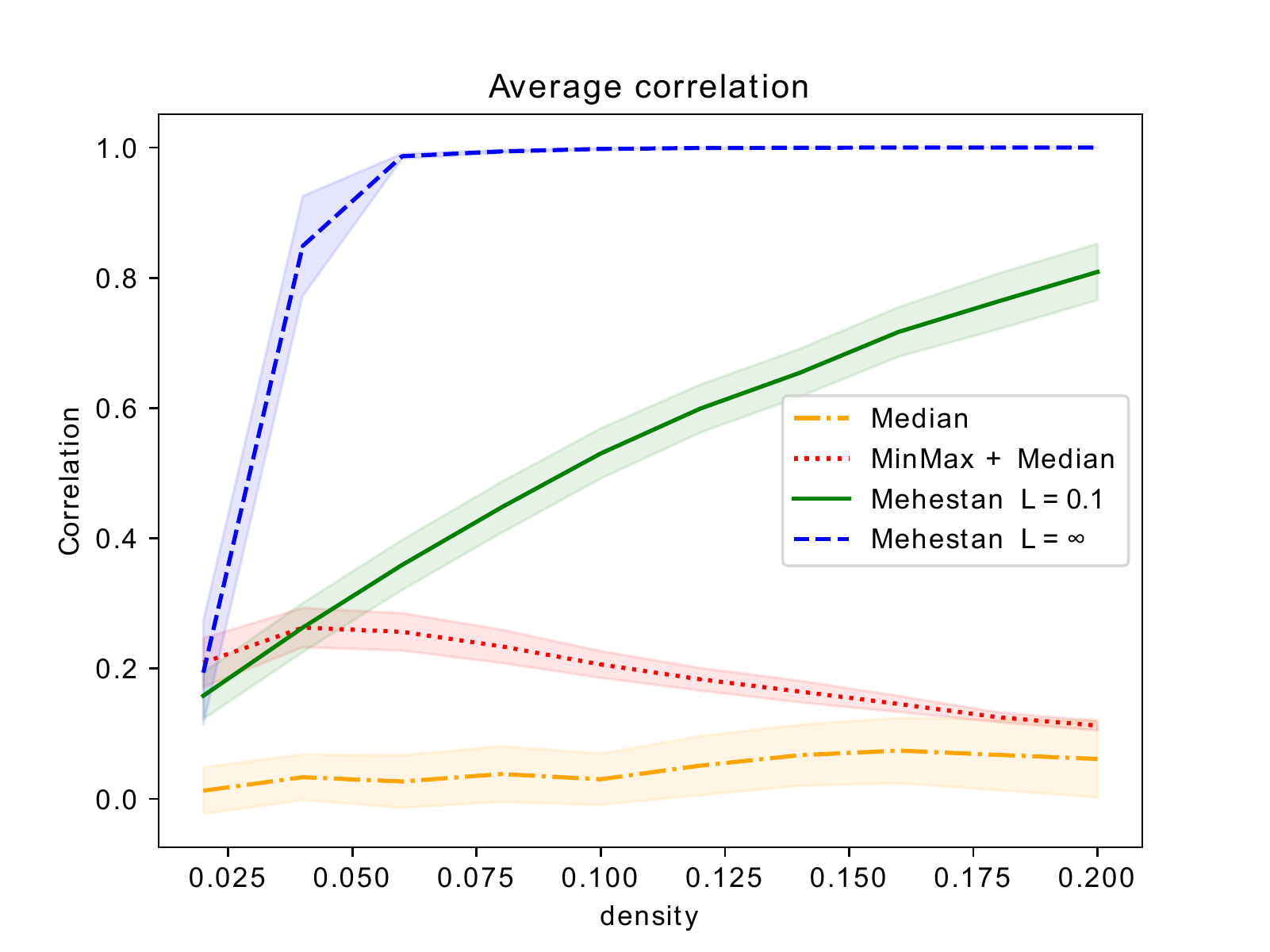}
  \caption{Influence of the \textbf{density} (with bias in sparsity)}
  \label{fig:sm}
\end{subfigure}\hfill
\begin{subfigure}{.48\textwidth}
  \centering
  \includegraphics[width=\linewidth]{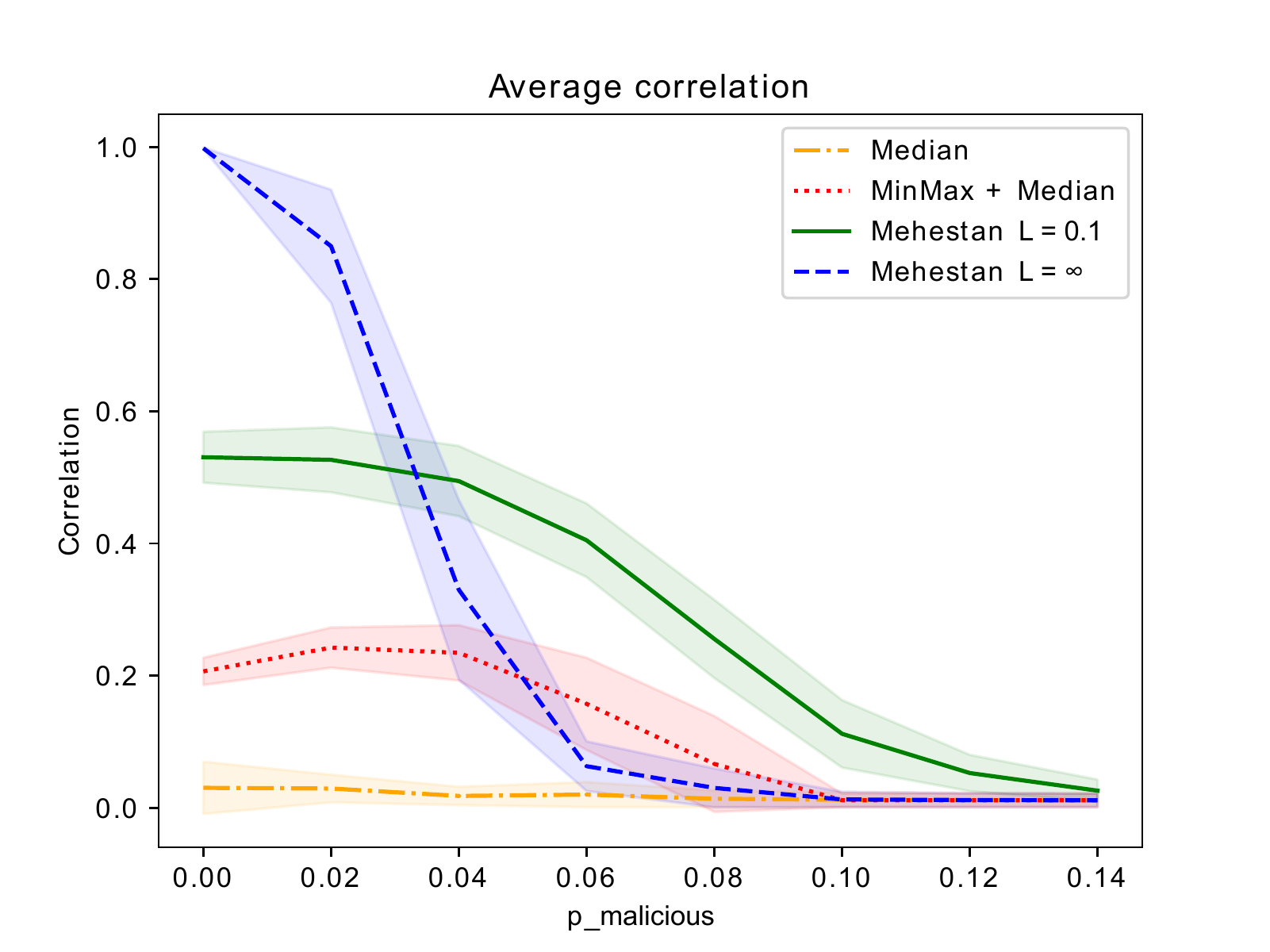}
  \caption{Influence of the \textbf{fraction of malicious voters} (with bias in sparsity)}
  \label{fig:nextreme}
\end{subfigure}
\begin{subfigure}{.48\textwidth}
  \centering
  \includegraphics[width=\linewidth]{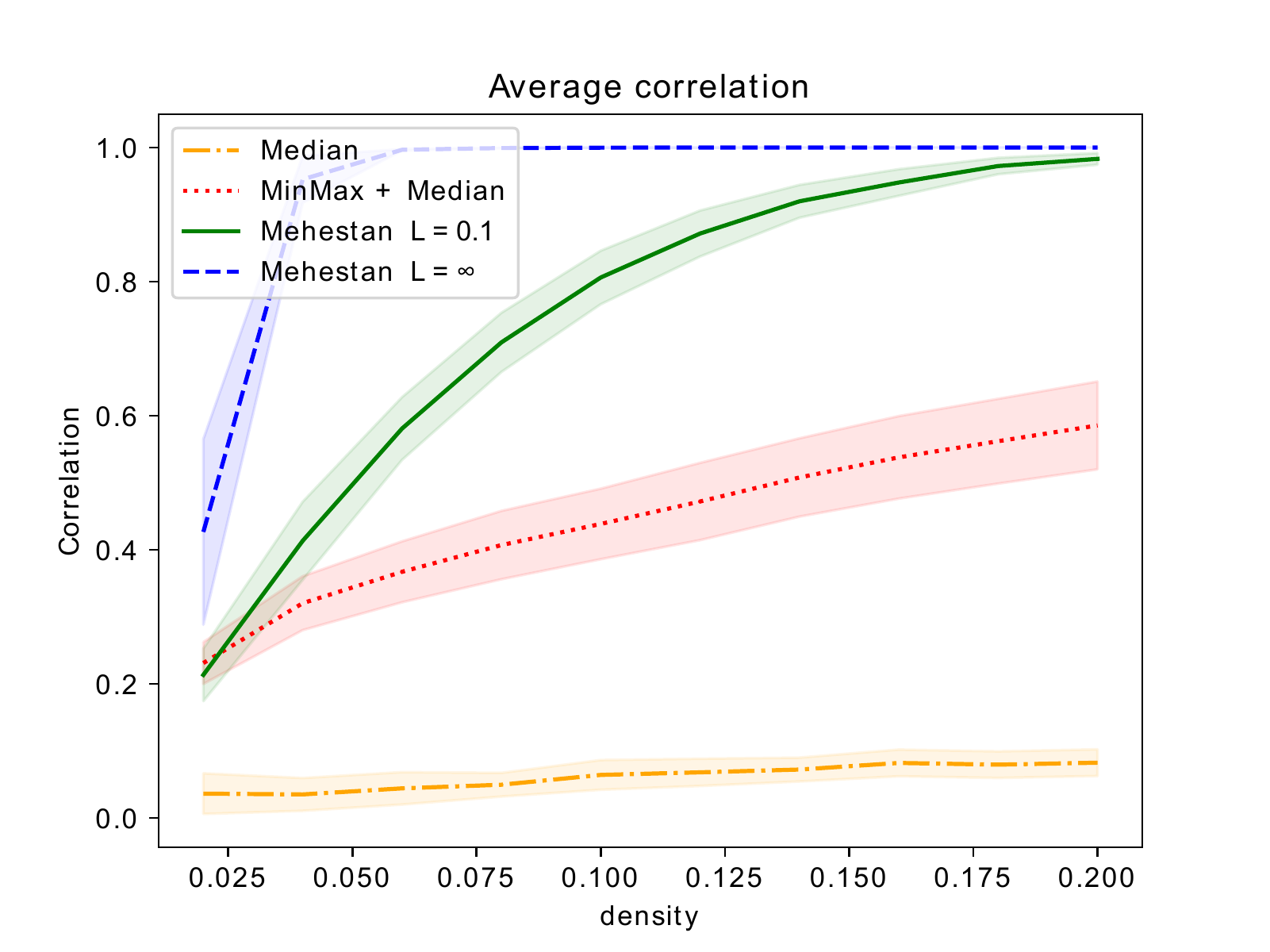}
  \caption{Influence of the \textbf{density} (without bias in sparsity)}
  \label{fig:sm}
\end{subfigure}\hfill
\begin{subfigure}{.48\textwidth}
  \centering
  \includegraphics[width=\linewidth]{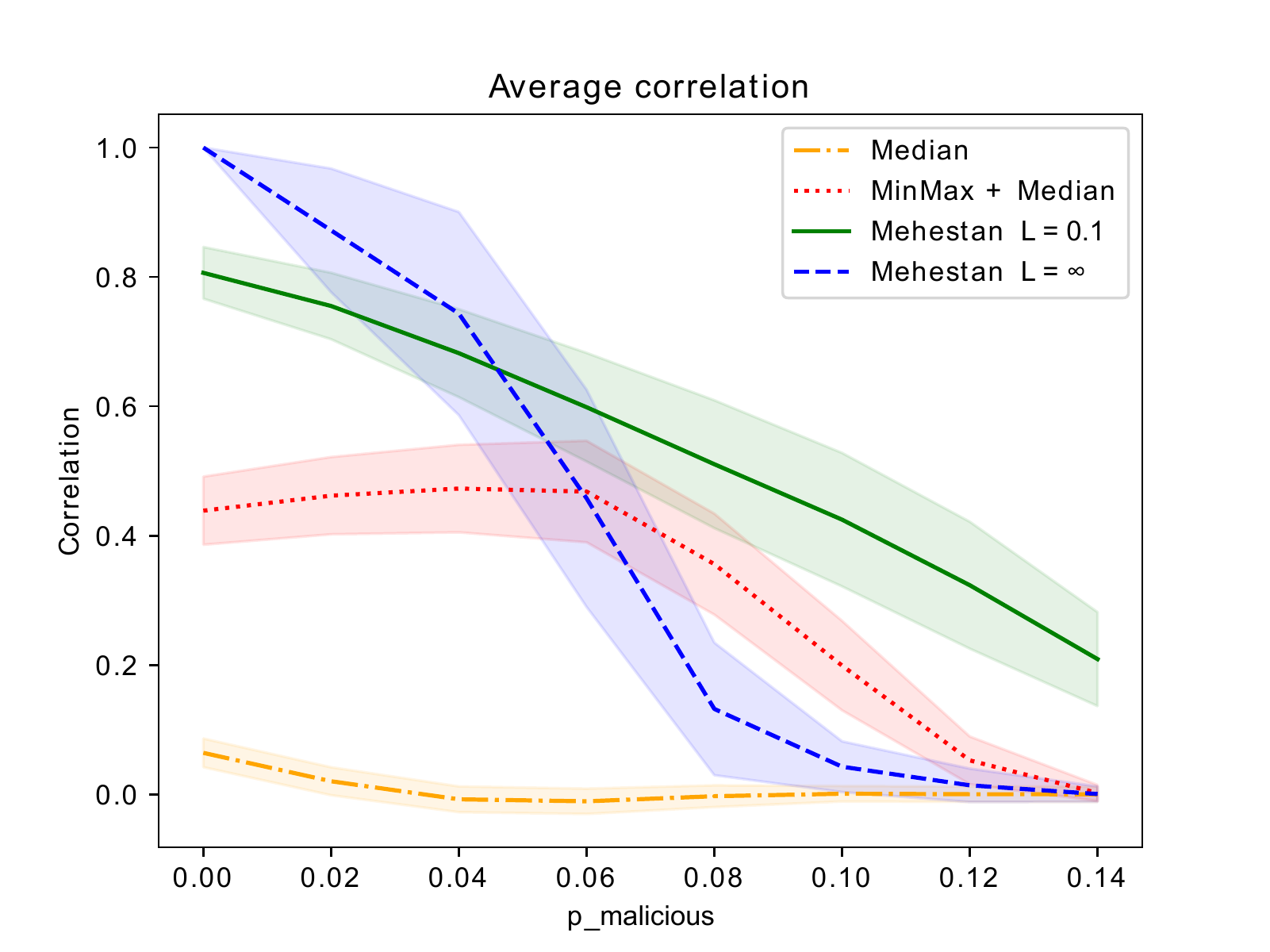}
  \caption{Influence of the \textbf{fraction of malicious voters} (without bias in sparsity)}
  \label{fig:nextreme}
\end{subfigure}
\caption{
Performance of \mehestan{} under sparsity, with and without malicious voters. (\textbf{Cauchy} distribution of $\theta_*$)
}\label{fig:cauchy}
\end{figure}

\begin{figure}[H]
\centering

\begin{subfigure}{.48\textwidth}
  \centering
  \includegraphics[width=\linewidth]{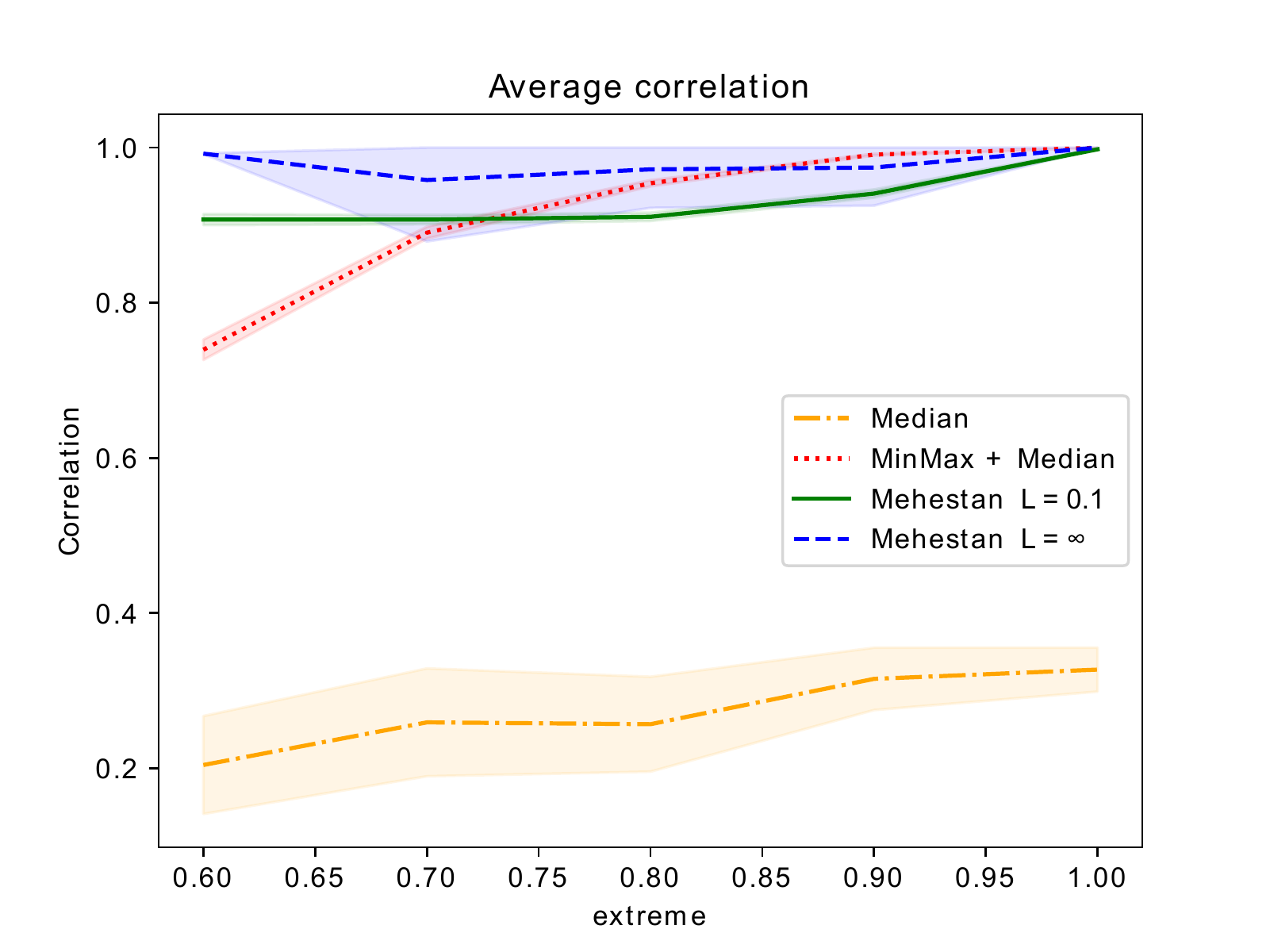}
  \caption{\textbf{Uniform} distribution of $\theta_*$}
  \label{fig:nextreme}
\end{subfigure}
\begin{subfigure}{.48\textwidth}
  \centering
  \includegraphics[width=\linewidth]{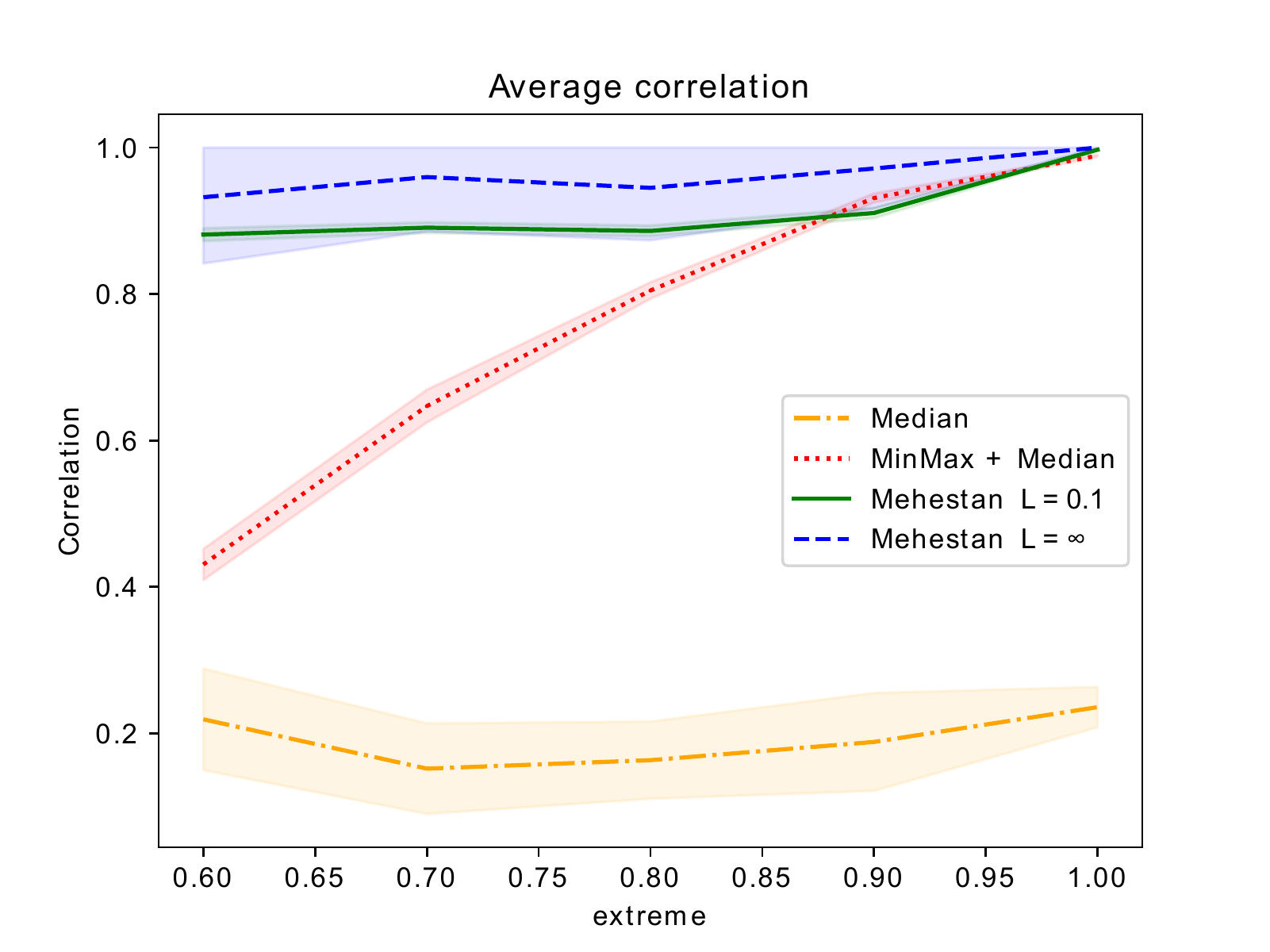}
  \caption{\textbf{Gaussian} distribution of $\theta_*$}
  \label{fig:sm}
\end{subfigure}\hfill
\begin{subfigure}{.48\textwidth}
  \centering
  \includegraphics[width=\linewidth]{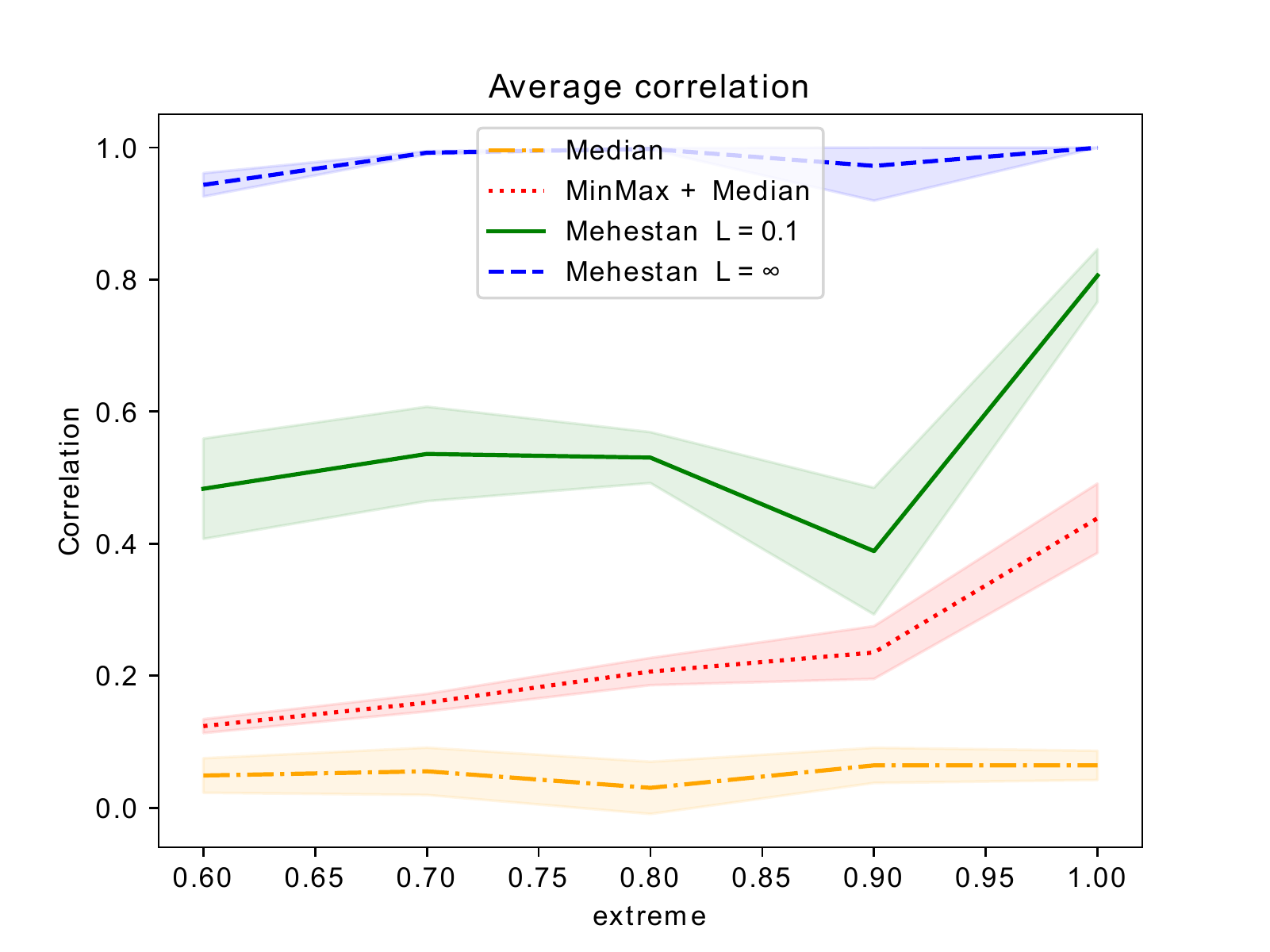}
  \caption{\textbf{Cauchy} distribution of $\theta_*$}
  \label{fig:nextreme}
\end{subfigure}
\caption{Performance of \mehestan{} depending on the \textbf{sparsity bias}.
}
\label{fig:bias}
\end{figure}
The \texttt{extreme} parameter indicates the proportion of alternatives each voter has access to (either the top or bottom \texttt{extreme}). Out of these, the voter rates each of them with probability 0.1.

\paragraph{Influence of the distribution.} From these additional experiments, we observe that the distribution of the ground-truth preferences $\voterscore{*}$ can significantly influence the performance of the voting algorithms. More specifically, it appears that the more the distribution is heavy-tailed, the harder it is for the voting algorithms to recover the preferences. 
This is particularly visible for the Cauchy distribution. Indeed, the baseline \textit{MinMax+Median} performs very poorly on Figure~\ref{fig:cauchy} (\textit{Median} alone is even worse), whereas \mehestan{} is less affected by the change. A possible explanation for this is that the more the distribution is heavy-tailed, the more there are strong outliers in $\theta_*$. This leads to a concentration of the rest of the scores after the min-max normalization, thus showing that global normalization (see Section~\ref{sec:mehestan}) is critical.

\paragraph{Influence of the sparsity bias.} 
Figure~\ref{fig:bias} displays the performance of the voting algorithm as the bias in sparsity decreases. As expected, \mehestan{} is more robust to high levels of bias in the sparsity (left of the plots) than the baseline \textit{MinMax+Median}. For the highest level of bias in sparsity on Figure~\ref{fig:bias}, each voter only rates either the top 60\% or the bottom 60\% of the alternatives. This shows that the global normalization step (see Section~\ref{sec:mehestan}) is crucial, and we can see that the algorithms not performing it (\textit{Median} and \textit{MinMax+Median}) fare poorly in this setting.
\section{Extensions}
\label{sec:remarks}

We discuss here how additional desirable properties can be guaranteed for \emph{robust sparse voting}, by tweaking \mehestan{}.

\subsection{Differential Privacy}
\label{sec:differential_privacy}

In several 
applications, guaranteeing the privacy of \emph{robust sparse voting} may be critical to prevent voter coercion.
In this section, we prove that \mehestan{} can be easily made differentially private.
Let us first recall the definition of (voter-level) differential privacy.

\begin{definition}[Voter-level differential privacy]
Let $\epsilon > 0$. A (randomized) vote $\vote{}$ is $\epsilon$-differentially private if, 
for any $\byzantine \in [\VOTER]$
and all subsets $\mathcal{S} \subset \setR^\ALTERNATIVE$, we have
\begin{equation}
    \pb{ \vote{} (\votingrightsfamily{}, \voterscorefamily{}) \in \mathcal{S} }
    \leq e^{\epsilon} \pb{ \vote{} (\votingrightsfamily_{-\byzantine}, \voterscorefamily{-\byzantine}) \in \mathcal{S} }.
\end{equation}
\end{definition}

For any parameter $\varepsilon > 0$, we define the $\varepsilon$-differentially private \mehestan{} voting algorithm by simply adding a Laplacian noise to each returned global score, whose scale is proportional to $\lipschitz/\varepsilon$.
More formally, for any alternative $\alternative \in [\ALTERNATIVE]$, we define
\begin{equation}
    \DPMehestan{}_{\lipschitz \varepsilon \alternative} (\votingrightsfamily{}, \voterscorefamily{}) 
    \triangleq \mehestan{}_{\lipschitz \alternative} (\votingrightsfamily{}, \voterscorefamily{}) 
    + \Laplace\left (0,\frac{\lipschitz \norm{\votingrightsfamily{}}{\infty}}{\epsilon} \right ),
\end{equation}
where $\Laplace(0,b)$ is a random variable drawn from the Laplace distribution with mean $0$ and scale $b$.

\begin{theorem}
\label{th:differential_privacy}
For all $\epsilon>0$, $\DPMehestan{}_{\lipschitz \varepsilon}$ is $\epsilon$-differentially private.
\end{theorem}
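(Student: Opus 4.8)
The plan is to view $\DPMehestan{}_{\lipschitz\varepsilon}$ as a Laplace mechanism applied to the deterministic map $\mehestan{}_{\lipschitz}$, so that $\epsilon$-differential privacy follows from a sensitivity bound that is exactly $\lipschitz$-Lipschitz resilience (Theorem~\ref{th:mehestan-br}). Fix $\byzantine \in [\VOTER]$ and write $m \triangleq \mehestan{}_{\lipschitz}(\votingrightsfamily, \voterscorefamily{})$, $m' \triangleq \mehestan{}_{\lipschitz}(\votingrightsfamily_{-\byzantine}, \voterscorefamily{-\byzantine}) \in \setR^{\ALTERNATIVE}$, and $b \triangleq \lipschitz\norm{\votingrightsfamily}{\infty}/\epsilon$. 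Since the noise on each alternative is an independent $\Laplace(0,b)$ variable, the density of $\DPMehestan{}_{\lipschitz\varepsilon}(\votingrightsfamily, \voterscorefamily{})$ at $y \in \setR^{\ALTERNATIVE}$ factorizes over alternatives, so it suffices to prove, coordinatewise, a pointwise density–ratio bound of the form $\exp(-\absv{y_{\alternative}-m_{\alternative}}/b) \leq e^{\epsilon}\exp(-\absv{y_{\alternative}-m'_{\alternative}}/b)$; integrating the product of these bounds over the measurable set $\mathcal{S}$ then yields the inequality required by the definition of $\epsilon$-differential privacy.

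The central estimate is that $\absv{m_{\alternative}-m'_{\alternative}} \leq \lipschitz\norm{\votingrightsfamily}{\infty}$ for every alternative $\alternative$. In the unit-voting-right setting this is literally Definition~\ref{def:byz-resilience} instantiated with the algorithm $\mehestan{}_{\lipschitz}$ shown resilient in Theorem~\ref{th:mehestan-br} (there $\norm{\votingrightsfamily}{\infty}=1$); in the general case one uses the continuous-voting-rights form of resilience (Definition~\ref{def:lipschitz-resilience}) together with the identification of ``voter $\byzantine$ abstains'' with ``voter $\byzantine$ has voting right $0$'', so that discarding $\byzantine$ changes the weight vector by $\votingrights{\byzantine}$ in $\ell_1$-norm and hence $m$ by at most $\lipschitz\votingrights{\byzantine} \leq \lipschitz\norm{\votingrightsfamily}{\infty}$ in each coordinate. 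Plugging this into the triangle inequality $\absv{y_{\alternative}-m_{\alternative}} \geq \absv{y_{\alternative}-m'_{\alternative}} - \absv{m_{\alternative}-m'_{\alternative}}$ gives
\[
  \exp\!\left(-\tfrac{\absv{y_{\alternative}-m_{\alternative}}}{b}\right)
  \;\leq\; \exp\!\left(\tfrac{\absv{m_{\alternative}-m'_{\alternative}}}{b}\right)\,\exp\!\left(-\tfrac{\absv{y_{\alternative}-m'_{\alternative}}}{b}\right)
  \;\leq\; e^{\epsilon}\,\exp\!\left(-\tfrac{\absv{y_{\alternative}-m'_{\alternative}}}{b}\right),
\]
since $\absv{m_{\alternative}-m'_{\alternative}}/b \leq \epsilon\votingrights{\byzantine}/\norm{\votingrightsfamily}{\infty} \leq \epsilon$. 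This is the desired coordinatewise bound, and the remainder is the standard Laplace-mechanism bookkeeping.

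I expect the only genuine subtlety to be the data-dependence of the noise scale: the two mechanisms being compared use scales $\lipschitz\norm{\votingrightsfamily}{\infty}/\epsilon$ and $\lipschitz\norm{\votingrightsfamily_{-\byzantine}}{\infty}/\epsilon$, which coincide whenever $\byzantine$ is not a uniquely heaviest voter — in particular always in the unit-voting-right regime of the main text — and this is what lets the density comparison above go through unchanged. The other ingredients (invoking resilience in its continuous form, the abstention-as-zero-weight identification, and checking that $\mehestan{}_{\lipschitz}$ is the deterministic map to which the mechanism is applied) are routine consequences of Theorem~\ref{th:mehestan-br}.
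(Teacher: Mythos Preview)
Your approach is the same as the paper's: invoke the Laplace mechanism with sensitivity supplied by the $\lipschitz$-Lipschitz resilience of $\mehestan{}_\lipschitz$. Your unpacking of the sensitivity bound via Definition~\ref{def:lipschitz-resilience} and Theorem~\ref{th:mehestan-br} is correct, and your observation about the data-dependent noise scale is a legitimate caveat that the paper does not raise.

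There is, however, a genuine gap in your density-ratio bookkeeping. You write that it suffices to obtain, for \emph{each} alternative $\alternative$, a bound of the form $\exp(-\absv{y_\alternative-m_\alternative}/b) \le e^{\epsilon}\exp(-\absv{y_\alternative-m'_\alternative}/b)$, and that ``integrating the product of these bounds'' yields $\epsilon$-DP for the full vector. But the joint density ratio is the \emph{product} over $\alternative\in[\ALTERNATIVE]$ of the coordinate ratios, so what you actually obtain is
\[
  \frac{p(y)}{p'(y)} \;\le\; \exp\!\Big(\tfrac{1}{b}\sum_{\alternative\in[\ALTERNATIVE]}\absv{m_\alternative-m'_\alternative}\Big) \;\le\; e^{\ALTERNATIVE\epsilon},
\]
i.e.\ $\ALTERNATIVE\epsilon$-DP by basic composition, not $\epsilon$-DP. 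Resilience controls $\norm{m-m'}{\infty}$, not $\norm{m-m'}{1}$, and the Laplace mechanism with coordinatewise scale $b$ is calibrated to $\ell_1$-sensitivity. The paper's one-line proof glosses over this same point; to get $\epsilon$-DP for the full vector in $\setR^\ALTERNATIVE$ one would need either noise scale $\ALTERNATIVE\lipschitz\norm{\votingrightsfamily}{\infty}/\epsilon$ per coordinate, or to read the guarantee as per-alternative differential privacy (i.e.\ for sets $\mathcal S$ that constrain a single coordinate). Your argument is fine for the latter reading but does not establish the former.
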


\begin{proof}
This follows directly from the $\varepsilon$-differential privacy of the Laplace mechanism, combined with the resilience guarantee of \mehestan{}.
\end{proof}

\subsection{Uncertainty-aware Voting}

In practice, reported scores are noisy, with potentially different levels of noise.
Here, we show how \mehestan{} can be enhanced to account for uncertainty in the input.
Our key solution is to leverage a new operator called the \emph{mean-risk distance} $\mrdistance{}$.
Essentially, $\mrdistance{}$ simulates the fact that a voter's uncertainty-aware vote is its expected vote, 
when the voter's score is drawn from our Bayesian prior on their actual score.
More formally, given a prior probability distribution $\distribution$ of finite expectation and a point $z \in \setR$, $\mrdistance{}$ is defined as:
\begin{equation}
    \mrdistance{}\left(z \st \distribution \right) \triangleq \expectVariable{\voterscore{} \sim \distribution}{\absv{z -\voterscore{}}}.
\end{equation}
$\qrmedian{}$ can then be easily made uncertainty-aware, by replacing the absolute values as follows:
\begin{equation}
    \qrmedian{\lipschitz} (\votingrightsfamily{}, \distribution) 
    \triangleq \argmin_{z \in \setR} \frac{1}{2 \lipschitz} z^2 + \sum_{\voter \in [\VOTER]} \mrdistance{} (z | \distribution_\voter).
\end{equation}
Crucially, since $\mrdistance{}$ is a mean of functions whose subderivatives are always of absolute value at most 1, 
the subderivatives of $\mrdistance{}$ are also always of absolute value at most 1.
Intuitively, this means that if $z$ falls into a voter $\voter$'s uncertainty set, 
then the voter $\voter$ will only slightly pull $z$ towards their maximum-a-posterior score.
However, if $z$ falls very far from this uncertainty set, the voter $\voter$ will be pulling with its entire voting rights $\votingrights{\voter}$
In any case, the fact that any voter's pull remains bounded by their voting rights guarantees the $\lipschitz$-Lipschitz resilience of this generalization of $\qrmedian{}$.
Additionally, $\mehestan{}$ can be similarly adapted, though handling collaborative scaling normalization is not straightforward.
Interestingly, $\mrdistance{}$ yields a closed form expression for some parameterized priors, like the Laplacian prior.

\begin{proposition}
For a Laplacian prior $\distribution = \Laplace(\mu,\uncertainty{})$, $\mrdistance{}\left(z \st \distribution \right) = \absv{z-\mu}-\mu+\uncertainty{}e^{-\frac{\absv{z-\mu}}{\uncertainty{}}}$.
\end{proposition}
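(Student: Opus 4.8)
The plan is to unfold the definition $\mrdistance{}\left(z \st \distribution\right) = \expectVariable{\voterscore{} \sim \Laplace(\mu, \uncertainty{})}{\absv{z - \voterscore{}}}$ and evaluate the resulting integral against the Laplace density $p(\voterscore{}) = \tfrac{1}{2\uncertainty{}} e^{-\absv{\voterscore{} - \mu}/\uncertainty{}}$ directly. The only structural step is to eliminate the two free parameters: substituting $y \triangleq \voterscore{} - \mu$ and setting $w \triangleq z - \mu$ reduces the claim to computing $g(w) \triangleq \expectVariable{y \sim \Laplace(0, \uncertainty{})}{\absv{w - y}}$, after which the statement amounts to showing $g(w) = \absv{w} + \uncertainty{} e^{-\absv{w}/\uncertainty{}}$.

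To compute $g$, note that $\Laplace(0, \uncertainty{})$ is symmetric about the origin, so $g$ is an even function and it suffices to handle $w \geq 0$. Split $\int_{\setR}$ at the two breakpoints $y = 0$ and $y = w$, between which both $\absv{w - y}$ and $\absv{y}$ are affine: on $(-\infty, 0]$ the integrand is $(w - y)\tfrac{1}{2\uncertainty{}} e^{y/\uncertainty{}}$, on $[0, w]$ it is $(w - y)\tfrac{1}{2\uncertainty{}} e^{-y/\uncertainty{}}$, and on $[w, +\infty)$ it is $(y - w)\tfrac{1}{2\uncertainty{}} e^{-y/\uncertainty{}}$. Each piece is an elementary integral of type $\int (a + by)e^{\pm y/\uncertainty{}}\,dy$, evaluated using $\int_0^\infty e^{-u/\uncertainty{}}\,du = \uncertainty{}$ and $\int_0^\infty u\,e^{-u/\uncertainty{}}\,du = \uncertainty{}^2$ together with the truncated antiderivatives for the $[0,w]$ piece. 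Adding the three contributions, the terms linear in $w$ collapse to $w$ and the exponential terms collapse to $\uncertainty{} e^{-w/\uncertainty{}}$, so $g(w) = w + \uncertainty{} e^{-w/\uncertainty{}}$ for $w \geq 0$, hence $g(w) = \absv{w} + \uncertainty{} e^{-\absv{w}/\uncertainty{}}$ in general. Substituting back $w = z - \mu$ gives $\mrdistance{}\left(z \st \distribution\right) = \absv{z - \mu} + \uncertainty{} e^{-\absv{z-\mu}/\uncertainty{}}$, which is the claimed identity (the displayed formula carries an extra additive offset $-\mu$ that is immaterial here, since only the subderivatives of $\mrdistance{}$, unchanged under additive constants, are used in what follows).

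I do not expect any genuine obstacle: the argument is a one-line reduction followed by a bookkeeping computation, and the only care needed is with signs in the case split and the (measure-zero) treatment of the breakpoints. A useful consistency check is $g(0) = \uncertainty{}$, which is the mean absolute deviation of a $\Laplace(0, \uncertainty{})$ variable, and $g'(w) \to 1$ as $w \to \infty$, matching the paper's remark that the subderivatives of $\mrdistance{}$ lie in $[-1,1]$.
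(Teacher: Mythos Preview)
Your proposal is correct and takes the same approach as the paper, whose entire proof reads ``This is a straightforward integral computation.'' You additionally carry out the computation in full and correctly flag that the stated formula's extraneous $-\mu$ term is a typo (the integral indeed yields $\absv{z-\mu}+\uncertainty{}e^{-\absv{z-\mu}/\uncertainty{}}$), noting that the discrepancy is harmless for the downstream use.
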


\begin{proof}
This is a straightforward integral computation.
\end{proof}

\subsection{Measuring Preference Polarization}

We now propose a resilient polarization measure on a given alternative $\alternative$.
To do this, we divide the voters $\voter \in \VOTER_\alternative$ who scored $\alternative$ in two equal subsets of high and low-scoring voters. 
More precisely, define $\votingrights{\alternative} \triangleq \sum_{\voter \in \VOTER_\alternative} \votingrights{\voter}$
and $m_\alternative \triangleq \median \set{ \votingrights{\voter}, \scaling{\voter} \normalizedscore{\voter \alternative} + \translation{\voter} \st \voter \in \VOTER_\alternative }$.
Now consider $\VOTER_\alternative^+ \triangleq \set{\voter \in \VOTER_\alternative \st \scaling{\voter} \normalizedscore{\voter \alternative} + \translation{\voter} > m_\alternative}$
and $\VOTER_\alternative^- \triangleq \set{\voter \in \VOTER_\alternative \st \scaling{\voter} \normalizedscore{\voter \alternative} + \translation{\voter} < m_\alternative}$.
Define finally $\votingrights{\alternative}^+ \triangleq \sum_{\voter \in \VOTER_\alternative^+} \votingrights{\voter}$
and $\votingrights{\alternative}^- \triangleq \sum_{\voter \in \VOTER_\alternative^-} \votingrights{\voter}$.
By definition of the median, we must have $\votingrights{\alternative}^- \geq \frac{1}{2} \votingrights{\alternative}$
and $\votingrights{\alternative}^+ \geq \frac{1}{2} \votingrights{\alternative}$.

Now denote $\globalscore{\alternative} \triangleq \mehestan{}_{\lipschitz \alternative} (\votingrights{}, \voterscorefamily{})$.
For any voter $\voter \in \VOTER_\alternative$ who scored $\alternative$, we define 
$\polarization{\voter \alternative}^+ \triangleq \max (0, \scaling{\voter} \normalizedscore{\voter \alternative} + \translation{\voter} - \globalscore{\alternative})$.
Similarly, denote $\polarization{\voter \alternative}^- \triangleq \max (0, \globalscore{\alternative} - \scaling{\voter} \normalizedscore{\voter \alternative} - \translation{\voter})$.
We define the positive polarization $\psi_\alternative^+$ and the negative polarization $\psi_\alternative^-$, on alternative $\alternative$ by
\begin{align}
    \psi_\alternative^\star 
    &\triangleq 1 + \qrmedian{\lipschitz} \set{
    \set{
      \votingrights{\voter}, \eta_{\voter \alternative}^\star - 1
      \st \voter \in \VOTER_\alternative^\star
    } \cup \set{\frac{1}{2} \votingrights{\alternative} - \votingrights{\alternative}^\star, \max \set{ m_\alternative - \globalscore{\alternative}, -1} }
    }, 
\end{align}
for $\star \in \set{-,+}$.
In other words, the polarization is initially assumed to be 1.
Voters who believe that $\globalscore{\alternative}$ is underestimated will then pull $\psi_\alternative^+$ towards larger values; but they can only do so with a unit force.
This prevents malicious voters from hacking polarization measures.

\end{document}